\documentclass[a4paper,11pt]{amsart}
\usepackage{amssymb,amsfonts,amsxtra,amscd,mathrsfs,color}
\usepackage[all]{xy}
\usepackage{fullpage}
\usepackage{graphicx}
\usepackage[colorlinks=true,linkcolor=blue,citecolor=red]{hyperref}
\theoremstyle{plain}
\newtheorem{theorem}{Theorem}[section]
\newtheorem{lemma}[theorem]{Lemma}
\newtheorem{cor}[theorem]{Corollary}
\newtheorem{prop}[theorem]{Proposition}
\theoremstyle{definition}
\newtheorem{defi}[theorem]{Definition}
\newtheorem{example}[theorem]{Example}
\theoremstyle{remark}
\newtheorem{rem}[theorem]{Remark}
\numberwithin{equation}{section}
\newcommand{\gf}{\ensuremath{\mathbb{K}}}
\newcommand{\Pois}[1]{\ensuremath{\mathrm{P}\!\left[#1\right]}}
\newcommand{\BV}[1]{\ensuremath{\mathrm{BV}\!\left[#1\right]}}
\newcommand{\NCHam}[1]{\ensuremath{\mathrm{H}\!\left[#1\right]}}
\newcommand{\NCHamP}[1]{\ensuremath{\mathrm{H}_+\!\left[#1\right]}}
\newcommand{\NCBV}[1]{\ensuremath{\mathrm{BV}^{\mathrm{nc}}_{\nu}\!\left[#1\right]}}
\newcommand{\NCBVg}[1]{\ensuremath{\mathrm{BV}^{\mathrm{nc}}_{\gamma,\nu}\!\left[#1\right]}}
\newcommand{\sym}[1]{\ensuremath{S}\!\left(#1\right)}
\newcommand{\mor}{\ensuremath{\mathcal{M}}}
\newcommand{\surf}{\ensuremath{\mathcal{S}}}
\newcommand{\chord}[1]{\ensuremath{\mathscr{C}\!\left(#1\right)}}
\newcommand{\var}[1]{\ensuremath{\mathrm{Var}\left(#1\right)}}
\newcommand{\cov}[2]{\ensuremath{\mathrm{Cov}\left(#1,#2\right)}}
\newcommand{\corr}[2]{\ensuremath{\rho}\left(#1,#2\right)}
\newcommand{\gl}[2]{\ensuremath{\mathfrak{gl}_{#1}(#2)}}
\newcommand{\GL}[2]{\ensuremath{GL_{#1}(#2)}}
\newcommand{\mat}[2]{\ensuremath{\mathrm{M}_{#1}(#2)}}
\newcommand{\her}[1]{\ensuremath{\mathfrak{h}_{#1}}}
\newcommand{\tr}{\ensuremath{\mathrm{Tr}}}
\newcommand{\symg}[1]{\ensuremath{\mathbb{S}_{#1}}}
\newcommand{\cycg}[1]{\ensuremath{\mathbb{Z}/{#1}\mathbb{Z}}}
\newcommand{\innprod}{\ensuremath{\langle -,- \rangle}}
\newcommand{\twodim}{\ensuremath{\mathcal{A}}}
\DeclareMathOperator{\id}{id}

\begin{document}
\title{A homological approach to the Gaussian Unitary Ensemble}
\author{Owen Gwilliam}
\address{Department of Mathematics and Statistics, Lederle Graduate Research Tower 1623D, University of Massachusetts Amherst, 710 N. Pleasant Street, Amherst, MA 01003-9305. USA.} \email{owen.gwilliam@gmail.com}
\author{Alastair Hamilton}
\address{Department of Mathematics and Statistics, Texas Tech University, Lubbock, TX 79409-1042. USA.} \email{alastair.hamilton@ttu.edu}
\author{Mahmoud Zeinalian}
\address{Lehman College, Department of Mathematics, 250 Bedford Park Blvd W, Bronx, NY 10468. USA.} \email{mahmoud.zeinalian@lehman.cuny.edu}
\begin{abstract}
We study the Gaussian Unitary Ensemble (GUE) using noncommutative geometry and the homological framework of the Batalin-Vilkovisky (BV) formalism.
Coefficients of the correlation functions in the GUE with respect to the rank $N$ are described in terms of ribbon graph Feynman diagrams that then lead to a counting problem for the corresponding surfaces.
The canonical relations provided by this homological setup determine a recurrence relation for these correlation functions. Using this recurrence relation and properties of the Catalan numbers,
we determine the leading order behavior of the correlation functions with respect to the rank~$N$. As an application, we prove a generalization of Wigner's semicircle law and compute all the large $N$ statistical correlations for the family of random variables in the GUE defined by multi-trace functions.
\end{abstract}

\keywords{Gaussian Unitary Ensemble (GUE), large $N$ asymptotics, Batalin-Vilkovisky (BV) formalism, Loday-Quillen-Tsygan Theorem, ribbon graphs.}

\makeatletter
\@namedef{subjclassname@2020}{%
  \textup{2020} Mathematics Subject Classification}
\makeatother

\subjclass[2020]{15B52, 60B20, 60F99, 81T18, 81T32, 81T40, 81T70, 81T75}
\maketitle

\section{Introduction}

In order to model complicated quantum mechanical systems, Wigner hit upon the clever idea of studying random matrices \cite{wignermatrix}. The most basic version involves placing a Gaussian measure
\[ \mu_N := \frac{1}{Z_N} e^{-\frac{1}{2}\tr{(X^2)}} \mathrm{d} X \]
on the space of $N\times N$ Hermitian matrices $\her{N}$, where $Z_N$ normalizes the measure to have mass one. In the large $N$ limit, he found that the expected values of trace functions such as
\[ \int_{\her{N}} \tr(X^{36}) \mu_N \]
exhibited remarkable behavior: to leading order in $N$, the expected value reduces to a simple one-dimensional integral.
For the function above,
\[ \lim_{N\to\infty}\frac{1}{N}\int_{\her{N}}\tr\left(\left(\frac{X}{\sqrt{N}}\right)^{36}\right)\mu_N
= \frac{1}{2\pi}\int_{-2}^2 x^{36}\sqrt{4-x^2}\,\mathrm{d}x; \]
an example of his semicircle law \cite{semicircle}. Out of Wigner's work grew an area of mathematics where physicists, probabilists, operator algebraists, and others meet.

One goal of this paper is to show how homological algebra offers a novel perspective that clarifies, in particular, how and why these amazing results in probability theory are related to noncommutative geometry -- specifically the noncommutative symplectic geometry of Kontsevich \cite{kontsympgeom} -- and hence to the topology of moduli spaces of Riemann surfaces.
From the work of Harer \cite{harer}, Mumford, Penner \cite{penner} and Thurston, it is known that the cohomology of the moduli space of Riemann surfaces may be described in terms of an orbi-cell complex generated by ribbon graphs.
By the results of \cite{kontsympgeom} and \cite{hamdgla}, the homological aspects of this complex may be recast in terms of the Batalin-Vilkovisky (BV) formalism and noncommutative geometry,
and it is this perspective that we use in this paper.

The BV formalism is a homological way to encode the idea of the path integral in general;
this framework has had great success in the study of gauge and string field theories.
As the Gaussian Unitary Ensemble (GUE) can be interpreted as $0$-dimensional quantum field theory, it is natural (at least to BV enthusiasts) to try it in this context. One appealing aspect of the BV approach is that it shifts the emphasis onto the algebra of observables rather than on constructing a measure, and this shift means that the large $N$ limit is being explored in an algebraic setting. Thanks to the Loday-Quillen-Tsygan (LQT) Theorem \cite{lodayquillen, tsygan} this limit is well-understood for the classical field theory.

In the paper \cite{GGHZ} we explained how to quantize the LQT theorem, and we found a very simple differential graded algebra that encodes the GUE.
(Our treatment here is mostly self-contained; we write so that the reader does not need to understand the LQT theorem in general, but instead the explicit formulas relevant to the GUE that we present in the body of the paper.)
In this paper we focus on exploiting that presentation to find recurrence relations among multi-trace expected values and in particular to find novel expressions for the leading order behavior that go beyond the original results of Wigner and his semicircle law (see Section~\ref{sec: corr} and \ref{sec_largeN} respectively).
In brief, the noncommutative geometry encodes algebraically the combinatorics of ribbon graphs, allowing us to make efficient computations.

A quick consequence of this large $N$ asymptotic analysis, performed using homological algebra and an understanding of the properties of the Catalan numbers, is a generalization of the semicircle law to multi-trace functions.
This generalization can be understood as showing that trace functions are asymptotically free in the sense of Voiculescu's free probability theory \cite{voiculimitlaw}; see Section~\ref{sec: voic} for these results.
As another application, we compute the large $N$ statistical correlation coefficients for a certain infinite family of random variables in the GUE defined by multi-trace functions;
see Section~\ref{sec_largeNcorcoeff} for these results.

Perhaps the most conceptually useful aspect of our methods is that it directly connects ribbon graphs, and hence the topology of the moduli space of Riemann surfaces, to the integration-by-parts relations between multi-trace functions arising through the BV formalism. This connection arises because the LQT theorem relates cyclic cohomology -- loosely speaking, part of a closed string field theory -- to Lie algebra cohomology -- loosely speaking, part of a gauge theory; making this assertion precise is a central part of \cite{GGHZ}. There is a deep reservoir of results about how closed string field theory relates to the topology of Riemann surfaces, and here we use but a handful of those insights.

As a guide for the reader, note that Section~\ref{sec: rec} is a lightning review of the BV formalism and the quantum LQT theorem of \cite{GGHZ}. Section~\ref{sec: GUE} then reviews how the GUE appears as an example in this framework. In Section~\ref{sec: rib} we begin on new material, reviewing ribbon graphs and explaining how they arise in the setting of the GUE and the quantum LQT theorem. Finally, we get to the main results: Section~\ref{sec: corr} proves that the multi-trace correlation functions are polynomial functions of the rank $N$ with nonnegative integer coefficients, determines their degree and establishes a recurrence relation for them, while Section~\ref{sec_largeN} computes their leading order behavior with respect to the rank $N$.

\subsection{Notation and conventions}

Throughout the paper our convention will be to work with differential graded symplectic vector spaces. We also follow the convention of working with cohomologically graded objects; hence the suspension $\Sigma V$ of a graded vector space $V$ is defined by $\Sigma V^i := V^{i+1}$. We will assume that our symplectic vector spaces $V$ carry a symplectic form $\innprod$ of \emph{odd} degree. The differential $d$ on $V$ is required to be compatible with the inner product in the sense that
\begin{equation} \label{eqn_diffinnprod}
\langle dx,y \rangle + (-1)^x \langle x,dy \rangle = 0.
\end{equation}
We emphasize that, for the sake of brevity, we will refer to these spaces simply as \emph{symplectic vector spaces}; with the understanding that they carry a differential (possibly zero) and that the symplectic form has odd degree.

We will define the inverse form $\innprod^{-1}$ on the dual space $V^*$ by the commutative diagram
\begin{equation} \label{eqn_invinnprod}
\xymatrix{ & \gf \\ V\otimes V \ar[ur]^{\innprod} \ar[rr]^{D_l \otimes D_r} && V^*\otimes V^* \ar[ul]_{\innprod^{-1}}}
\end{equation}
where $D_l(y):=\langle y,- \rangle$ and $D_r(y):=\langle -,y \rangle$. Note that while the form $\innprod$ is skew-symmetric, the Koszul sign rule implies that the inverse form $\innprod^{-1}$ is symmetric. More generally, we will use the same formula for the inverse of \emph{any} nondegenerate bilinear form $\innprod$.

We denote the symmetric group by $\symg{n}$. We follow the convention that coinvariants are indicated by a subscript. The graded symmetric algebra on a graded vector space $V$ will be denoted by $\sym{V}$. Throughout the paper we work over a ground field $\gf$ of characteristic zero, usually $\gf=\mathbb{C}$. The cardinality of a finite set $X$ will be denoted by $|X|$.

\section{Recollections on the quantum LQT theorem}
\label{sec: rec}

In this section we will recall the basic framework of the Batalin-Vilkovisky formalism \cite{schwarz}, including its formulation in noncommutative geometry coming from the work of Kontsevich \cite{kontsympgeom}. We then recall from \cite{GGHZ} how the quantum LQT maps intertwine the commutative and noncommutative aspects of this framework.

\subsection{The commutative geometry of the Batalin-Vilkovisky formalism}

We begin by describing the classical setup for the Batalin-Vilkovisky formalism and its commutative geometry.

\begin{defi}
Given a symplectic graded vector space $V$ we define
\[ \Pois{V}:=\sym{V^*}=\bigoplus_{k=0}^\infty \big[(V^*)^{\otimes k}\big]_{\symg{k}}. \]
From the inverse form $\innprod^{-1}$ on $V^*$ we define a Poisson bracket $\{-,-\}$ on $\Pois{V}$ of odd degree by extending the inverse form on $V^*$ to $\Pois{V}$ using the Leibniz rule;
\begin{equation} \label{eqn_leibniz}
\{a,bc\} = \{a,b\}c + (-1)^{(a+1)b} b\{a,c\}.
\end{equation}
\end{defi}

This structure is sometimes referred to as a (differential graded) \emph{shifted Poisson algebra}. There is a natural quantization of this structure, in the Batalin-Vilkovisky sense, by turning on a differential.

\begin{defi} \label{def_canonicalBValgebra}
The \emph{BV-Laplacian} $\Delta$ on $\sym{V^*}$ is the unique differential operator satisfying
\begin{equation} \label{eqn_BVidentity}
\Delta(ab) = (\Delta a)b + (-1)^a a(\Delta b) +\{a,b\}
\end{equation}
for all $a, b \in \sym{V^*}$ and such that $\Delta v$ vanishes on all $v\in V^*$. These conditions ensure that $\Delta^2=0$. From \eqref{eqn_diffinnprod} it follows that $(d+\Delta)$ is still a differential and more generally that
\[ \BV{V} := \big(\sym{V^*},(d+\Delta), \ \cdot \ ,\{-,-\}\big) \]
is a \emph{Batalin-Vilkovisky algebra}, where we have used $\ \cdot \ $ to denote the commutative multiplication on the underlying graded-commutative algebra.
\end{defi}

\subsection{Noncommutative geometry in the Batalin-Vilkovisky formalism}

We now recall the formulation of the Batalin-Vilkovisky formalism in the framework of noncommutative geometry. This begins with a construction of Kontsevich \cite{kontsympgeom}, with further input from the work of Movshev \cite{movshevcobracket}.

\begin{defi} \label{def_hamliebialg}
Given a symplectic graded vector space $V$, we define
\[ \NCHam{V}:=\bigoplus_{k=0}^\infty \big[(V^*)^{\otimes k}\big]_{\cycg{k}}. \]

We define a Lie bracket of odd degree on $\NCHam{V}$ by the formula
\[ \{(a_1\cdots a_m),(b_1\cdots b_n)\} := \sum_{i=1}^m\sum_{j=1}^n \pm\langle a_i,b_j \rangle^{-1} (a_{i+1}\cdots a_m a_1 \cdots a_{i-1} b_{j+1}\cdots b_n b_1\cdots b_{j-1}), \]
where $a_1,\ldots a_m, b_1,\ldots b_n\in V^*$ and the sign is determined canonically by the Koszul sign rule. Note that when $m = 1 = n$,
\[ \{(a),(b)\} = \langle a,b \rangle^{-1}, \]
which is a constant.

A Lie cobracket
\[ \nabla:\NCHam{V}\to\left(\NCHam{V}\otimes\NCHam{V}\right)_{\symg{2}} \]
of odd degree may also be defined by the formula
\[ \nabla(a_1\cdots a_n):=\sum_{1\leq i < j\leq n}\pm\langle a_i,a_j \rangle^{-1} (a_{i+1}\cdots a_{j-1})\otimes(a_{j+1}\cdots a_n a_1 \cdots a_{i-1}). \]
Again, the sign is determined canonically by the Koszul sign rule. These structures turn $\NCHam{V}$ into a Lie bialgebra.
\end{defi}

Consider the subspace of $\NCHam{V}$ consisting of positive cyclic powers,
\[ \NCHamP{V}:=\bigoplus_{k=1}^\infty \big[(V^*)^{\otimes k}\big]_{\cycg{k}}, \]
and denote the generator of $(V^*)^{\otimes 0}$ inside $\NCHam{V}$ by $\nu$, which has degree zero. Then we may write
\begin{equation} \label{eqn_nudecomposition}
\NCHam{V} = \gf\nu\oplus\NCHamP{V}.
\end{equation}

It is easily checked using the above definitions that the canonical quotient map
\[ \sigma:\NCHam{V}\to\Pois{V} \]
(which sends $\nu$ to $1$) is a map of Lie algebras. The commutative algebra $\sym{\NCHam{V}}$ has the canonical structure of a shifted Poisson algebra in which the bracket on $\NCHam{V}$ is extended to $\sym{\NCHam{V}}$ using the Leibniz rule \eqref{eqn_leibniz}. The above map then extends to a map
\begin{equation} \label{eqn_Poissonextension}
\sigma:\sym{{\NCHam{V}}}\to\Pois{V}
\end{equation}
of shifted Poisson algebras in a unique way.

Note that using the decomposition \eqref{eqn_nudecomposition} we may write
\begin{displaymath}
\begin{split}
\sym{\NCHam{V}} &= \sym{\gf\nu\oplus\NCHamP{V}} \\
&= \sym{\gf\nu}\otimes\sym{\NCHamP{V}} = \gf[\nu]\otimes\sym{\NCHamP{V}},
\end{split}
\end{displaymath}
where we have identified the symmetric algebra on $\gf\nu$ with polynomials in $\nu$.

We may now define, following \cite{GGHZ}, the noncommutative counterparts of the Batalin-Vilkovisky algebra described by Definition \ref{def_canonicalBValgebra}.

\begin{defi}
Given a symplectic graded vector space $V$, set
\[ \NCBV{V}:=\sym{\NCHam{V}} = \gf[\nu]\otimes\sym{\NCHamP{V}}. \]
We extend the shifted Poisson algebra structure on $\sym{\NCHam{V}}$ described above to the structure of a Batalin-Vilkovisky algebra by defining a BV-Laplacian using the formula
\[ \Delta_\nu := \nabla+\delta, \]
where $\delta$ denotes the Chevalley-Eilenberg differential on $\sym{\NCHam{V}}$ determined by the Lie bracket on $\NCHam{V}$ and $\nabla$ denotes the cobracket on $\NCHam{V}$, which is extended to $\sym{\NCHam{V}}$ using the Leibniz rule. This provides us with the structure of a Batalin-Vilkovisky algebra:
\[ \NCBV{V} = \big( \sym{\NCHam{V}}, (d+\Delta_\nu), \ \cdot \ , \{-,-\} \big). \]
\end{defi}

It is a consequence of Equation \eqref{eqn_BVidentity} that the map \eqref{eqn_Poissonextension} yields a map of Batalin-Vilkovisky algebras -- that is to say, it commutes with the BV-Laplacians. The details are not difficult, and are spelled out in Proposition 4.3 of \cite{GGHZ}. We will denote this map of BV algebras by
\begin{equation} \label{eqn_noncomBVmapnu}
\sigma_\nu:\NCBV{V}\to\BV{V}.
\end{equation}

We may refine the picture described above by including another deformation parameter, bringing us closer to the construction described in~\cite{hamdgla}.

\begin{defi}
Given a symplectic graded vector space $V$, set
\[ \NCBVg{V}:=\gf[\gamma]\otimes\sym{\NCHam{V}} = \gf[\gamma,\nu]\otimes\sym{\NCHamP{V}}. \]
This also has the structure of a shifted Poisson algebra, where we extend those structures that we defined on $\sym{\NCHam{V}}$ above linearly with respect to the parameter $\gamma$ (which has degree zero). We equip it with the BV-Laplacian
\[ \Delta_{\gamma,\nu} := \nabla + \gamma\cdot\delta. \]
This defines a Batalin-Vilkovisky algebra
\[ \NCBVg{V} = \big(\gf[\gamma]\otimes\sym{\NCHam{V}}, (d+\Delta_{\gamma,\nu}), \ \cdot \ , \gamma\{-,-\}\big). \]
Note that in this case we must multiply the Poisson bracket $\{-,-\}$ by the parameter $\gamma$ to retain the structure of a Batalin-Vilkovisky algebra.
\end{defi}

\begin{rem}
Later, we will use the parameters $\gamma$ and $\nu$ to keep track of the genus and number of boundary components associated to a ribbon graph, cf. Section~\ref{sec: exp in NC}.
\end{rem}

It is clear that by sending the deformation parameter $\gamma$ to $1$ we get a map of Batalin-Vilkovisky algebras
\[ \xymatrix{\NCBVg{V} \ar[r]^{\gamma=1} & \NCBV{V}}. \]
Combining this map with the map \eqref{eqn_noncomBVmapnu} yields a map of Batalin-Vilkovisky algebras which we denote by
\begin{equation} \label{eqn_noncomBVmapgamma}
\sigma_{\gamma,\nu}:\NCBVg{V}\to\BV{V}.
\end{equation}

\subsection{The Morita map}

In this section we will introduce the map that appears in the formulation of Morita invariance in Hochschild cohomology. It is defined by taking the trace of a product of matrices. It is the compatibility of this map with the framework of the Batalin-Vilkovisky formalism that will allow us to make contact with quantities in random matrix theory.

Let $V$ be a symplectic vector space and consider the space of $N$-by-$N$ matrices with entries in $V$, which we denote
\[ \mat{N}{V} = V\otimes\mat{N}{\gf}. \]
It is also a (differential graded) symplectic vector space whose symplectic form of odd degree is defined by
\[ \langle x\otimes A, y\otimes B \rangle := \langle x,y\rangle \tr(AB). \]

Consider the multilinear maps;
\begin{equation} \label{eqn_tracemap}
t_k:\mat{N}{\gf}^{\otimes k}\to\gf
\end{equation}
where
\[
t_k(A_1,\ldots,A_k):=\tr(A_1\cdots A_k).
\]
These are used to define the Morita map as follows.

\begin{defi}
Let $V$ be a symplectic vector space and consider the map;
\[ \big(V^{\otimes k}\big)^* \to \big(V^{\otimes k}\big)^*\otimes\big(\mat{N}{\gf}^{\otimes k}\big)^* = \big(\mat{N}{V}^{\otimes k}\big)^*\]
sending $\xi_k$ to $\xi_k\otimes t_k$.
As the multilinear maps \eqref{eqn_tracemap} are cyclically symmetric,
the above determines a well-defined map
\begin{equation} \label{eqn_moritamap}
\mor:\NCHam{V}\to\NCHam{\mat{N}{V}}.
\end{equation}
\end{defi}

Note that since the identity matrix has trace $\tr(I_N)=N$ it follows that
\[ \mor(\nu) = N\nu. \]

The following result, Lemma 4.6 of \cite{GGHZ}, shows that this map respects the structures on $\NCHam{V}$ introduced by Definition~\ref{def_hamliebialg}.

\begin{prop}
The map \eqref{eqn_moritamap} is a morphism of Lie bialgebras.
\end{prop}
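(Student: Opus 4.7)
The plan is to verify directly that $\mor$ commutes with both the Lie bracket and the Lie cobracket by expressing everything in the matrix-unit basis and then invoking the standard orthogonality identities for the trace form. I will assume the result on the level of inverse forms that I will establish in step one; after that, both the bracket and cobracket become essentially combinatorial statements about delta functions.

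First I would show that the inverse form on $\mat{N}{V}^*$ decomposes as a product. Letting $\{E_{ab}\}$ denote the matrix units in $\mat{N}{\gf}$ and $\{E_{ab}^*\}$ the dual basis, the restriction of the symplectic form to $\mat{N}{\gf}$ has matrix $\tr(E_{ab}E_{cd})=\delta_{ad}\delta_{bc}$, so a direct check shows the inverse form on $\mat{N}{\gf}^*$ pairs $E_{ab}^*$ with $E_{cd}^*$ to give $\delta_{ad}\delta_{bc}$ again. Consequently, for elementary tensors in $\mat{N}{V}^*=V^*\otimes\mat{N}{\gf}^*$,
\[
\langle a\otimes E_{ab}^*,\,b\otimes E_{cd}^*\rangle^{-1}
\;=\;\langle a,b\rangle^{-1}\,\delta_{ad}\delta_{bc}.
\]
Next I would rewrite the image of $\mor$ concretely. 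Since $\tr(A_1\cdots A_k)=\sum_{i_1,\ldots,i_k} E_{i_1i_2}^*(A_1)\cdots E_{i_ki_1}^*(A_k)$, the multilinear function $\xi_k\otimes t_k$ can be expanded as a cyclic word
\[
\mor(a_1\cdots a_k)\;=\;\sum_{i_1,\ldots,i_k}\big(a_1\otimes E_{i_1i_2}^*\big)\cdots\big(a_k\otimes E_{i_ki_1}^*\big),
\]
with the matrix-unit indices threaded in a closed cycle.

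For the bracket, I would apply the definition of $\{-,-\}$ on $\NCHam{\mat{N}{V}}$ to $\mor(a_1\cdots a_m)$ and $\mor(b_1\cdots b_n)$. The pairing at positions $p$ in the first word and $q$ in the second produces the factor $\langle a_p,b_q\rangle^{-1}\delta_{i_p\,j_{q+1}}\delta_{i_{p+1}\,j_q}$; the two delta functions then identify the broken ends of the two closed index cycles with each other, producing a single closed cycle whose index positions are exactly those read off the concatenated cyclic word $(a_{p+1}\cdots a_{p-1}b_{q+1}\cdots b_{q-1})$. After summation this is precisely $\mor$ applied to the corresponding term of $\{a_1\cdots a_m,b_1\cdots b_n\}$. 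The case $m=n=1$, where the bracket lands in $\gf$, is where the trace factor $N=\tr(I_N)$ appears, but note that the bracket formula gives a \emph{constant}, namely $\langle a,b\rangle^{-1}$, and on the matrix side the analogous constant is $\langle a,b\rangle^{-1}\sum_{i,j}\delta_{ij}\delta_{ji}=N\langle a,b\rangle^{-1}$; I would check that this is still compatible once one accounts for $\mor(\nu)=N\nu$.

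For the cobracket, the same mechanism runs in reverse using the companion trace identity $\sum_{a,b}\tr(XE_{ab}YE_{ba})=\tr(X)\tr(Y)$. Applying $\nabla$ to $\mor(a_1\cdots a_n)$, the pairing of positions $p<q$ gives $\langle a_p,a_q\rangle^{-1}\delta_{i_pi_{q+1}}\delta_{i_{p+1}i_q}$; these delta functions split the single closed index loop into two independent closed loops whose index positions are those of $(a_{p+1}\cdots a_{q-1})$ and $(a_{q+1}\cdots a_{p-1})$ respectively. Summing recovers $(\mor\otimes\mor)(\nabla(a_1\cdots a_n))$.

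I expect the main obstacle to be bookkeeping rather than conceptual: keeping track of the Koszul signs when $V$ carries a genuine grading, and checking that the sign attached to each cyclic-word summand in the definition is reproduced by the sign of the corresponding contraction on the matrix-enriched side. Since the $\mat{N}{\gf}$ factors live in even degree, these signs are inherited unchanged from the $V^*$ factors, which is why the matching of index deltas with cyclic concatenations/splittings is enough to conclude. The boundary case where the bracket lands in the constants (and $\mor(\nu)=N\nu$ enters) is the only place one must remember that the Morita map is not unital in the naive sense.
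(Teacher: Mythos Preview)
Your argument is correct. The paper itself does not supply a proof of this proposition: it simply records it as Lemma~4.6 of~\cite{GGHZ} and moves on. Your matrix-unit computation is exactly the standard direct verification one would give, and the two key identities you isolate---the factorisation $\langle \phi\otimes E_{pq}^*,\psi\otimes E_{rs}^*\rangle^{-1}=\langle\phi,\psi\rangle^{-1}\delta_{ps}\delta_{qr}$ for the inverse form, and the companion trace identities $\tr(XE_{ab}YE_{cd})$ gluing or splitting index loops---are precisely what drives the argument. Your remark that the Koszul signs are inherited unchanged because the $\mat{N}{\gf}$ tensor factors sit in degree zero is the right justification for why the sign bookkeeping is trivial.

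Two small points of presentation. First, you overload the letters $a,b$ as both elements of $V^*$ and as matrix indices in the same displayed formula; this is harmless once parsed but would be clearer with distinct symbols. Second, your treatment of the constant case $m=n=1$ is slightly hedged (``I would check that this is still compatible''): it is, and the verification is immediate---the bracket $\{(a),(b)\}=\langle a,b\rangle^{-1}$ is the scalar $\langle a,b\rangle^{-1}\cdot\nu$ in $\NCHam{V}$, and $\mor$ sends this to $N\langle a,b\rangle^{-1}\cdot\nu$, matching your matrix-side computation. The same care is needed on the cobracket side when one of the two output words is empty (e.g.\ $j=i+1$ in $\nabla$), but your trace identity with $X=I_N$ handles it.
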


This result has the following immediate consequence.
The map \eqref{eqn_moritamap} extends to a map
\[ \mor: \sym{\NCHam{V}}\to \sym{\NCHam{\mat{N}{V}}} \]
of shifted Poisson algebras. In fact, it also provides a map of BV algebras, which we denote by
\[ \mor_\nu:\NCBV{V}\to\NCBV{\mat{N}{V}}. \]

Denote the $\gamma$-linear extension of $\mor_\nu$ to $\NCBVg{V}=\gf[\gamma]\otimes\sym{\NCHam{V}}$ by
\[ \mor_{\gamma,\nu}:\NCBVg{V}\to\NCBVg{\mat{N}{V}}. \]
It is also a map of BV algebras.

We can combine these maps with the morphisms \eqref{eqn_noncomBVmapnu} and \eqref{eqn_noncomBVmapgamma} of BV algebras. Note that since the multilinear maps \eqref{eqn_tracemap} are $\gl{N}{\gf}$-invariant, the combined maps will land in the $\gl{N}{\gf}$-invariants $\BV{\mat{N}{V}}^{\gl{N}{\gf}}$. All told, we have the following commutative diagram of Batalin-Vilkovisky algebras
\[\xymatrix{
\NCBVg{V} \ar[dr]^{\sigma_{\gamma,\nu}\circ\mor_{\gamma,\nu}} \ar[dd]_{\gamma=1} & \\
& *+[r]{\BV{\mat{N}{V}}^{\gl{N}{\gf}}\subset\BV{\mat{N}{V}}} \\ \NCBV{V} \ar[ur]_{\sigma_\nu\circ\mor_\nu} }\]

\begin{rem} \label{rem_qLQTmaps}
The diagonal maps in this diagram are what we will sometimes refer to as the ``quantized LQT maps.'' This terminology is carried over from \cite{GGHZ}, where they correspond to the maps in the Loday-Quillen-Tsygan Theorem in the special case of a vanishing algebra structure on the graded vector space~$V$.
\end{rem}

\section{Matrix integrals in the Gaussian Unitary Ensemble}
\label{sec: GUE}

In this section we recall from \cite{GGHZ} how, as an application of the constructions just reviewed, a two-dimensional symplectic vector space encodes Hermitian matrix integrals and hence describes certain correlation functions in the Gaussian Unitary Ensemble. We are then able to articulate a key construction of this paper, which will be the subject of the next section.

Let $\her{N}$ denote the real subspace of $\gl{N}{\mathbb{C}}$ consisting of all Hermitian matrices. We are interested in studying the $k$-trace correlation functions
\begin{equation} \label{eqn_corfunmultitrace}
I_{k_1,k_2,\ldots,k_n}^N:=\frac{\int_{\her{N}}\tr (X^{k_1})\tr (X^{k_2})\cdots\tr (X^{k_n})e^{-\frac{1}{2}\tr (X^2)}\mathrm{d} X}{\int_{\her{N}}e^{-\frac{1}{2}\tr (X^2)}\mathrm{d} X}.
\end{equation}
Note that while the integrals in both the numerator and the denominator of \eqref{eqn_corfunmultitrace} depend upon a choice of linear identification of $\her{N}$ with $\mathbb{R}^{N^2}$,
the ratio does not.

The symplectic vector space leading to these correlation functions is remarkably simple.

\begin{defi}
Let $\twodim$ be the two-dimensional complex vector space with generators $a$ and $b$ of degrees zero and one respectively.
Define the differential graded symplectic structure on $\twodim$ by
\[ da = b, \qquad \langle b,a \rangle = 1 = -\langle a,b \rangle. \]
\end{defi}

Note that $\twodim$ is acyclic, so that many complexes we construct from it have simple cohomology. What will be of interest later is \emph{how} various cocycles are related.

Let $a^*$ and $b^*$ denote the dual basis of $\twodim^*$ and set
\[ x:=a^* \quad\text{and}\quad \xi:=-b^*. \]
Then $x$ has degree zero, $\xi$ has degree minus-one, and the definitions above become
\[ d\xi=-x, \qquad \{x,\xi\}=\langle x,\xi\rangle^{-1}=1=\langle\xi,x\rangle^{-1}=\{\xi,x\}. \]
Observe that $\NCHam{\twodim}$, the space of cyclic tensor powers, is isomorphic to $\mathbb{C}[x]$ in degree zero, as cyclic words in one generator correspond to symmetric words in one generator.

Writing $\twodim=\mathbb{C}\oplus\Sigma^{-1}\mathbb{C}$ where the even generator $a$ sits in the left-hand summand and the odd generator $b$ sits in the right-hand summand,
we have the decomposition
\[ \mat{N}{\twodim} = \twodim\otimes\mat{N}{\mathbb{C}} = \mat{N}{\mathbb{C}}\oplus\Sigma^{-1}\mat{N}{\mathbb{C}}. \]
The Hermitian matrices $\her{N}$ sit inside the left-hand summand as a subspace and hence we may restrict any polynomial superfunction $f$ in $\BV{\mat{N}{\twodim}}$ to a complex-valued polynomial function on Hermitian matrices $\her{N}$. In this way we define the \emph{expectation value}
\begin{equation} \label{eqn_expvalmap}
\langle-\rangle:\BV{\mat{N}{\twodim}}\to\mathbb{C}
\end{equation}
by
\[ \langle f\rangle:=\frac{\int_{\her{N}}f(X)e^{-\frac{1}{2}\tr(X^2)}\mathrm{d}X}{\int_{\her{N}}e^{-\frac{1}{2}\tr(X^2)}\mathrm{d}X}. \]

The following is Proposition 5.2 of \cite{GGHZ}. It follows from some standard arguments in the Batalin-Vilkovisky formalism, which encode the integration by parts relations for the Gaussian Unitary Ensemble.

\begin{prop}
The expectation value map \eqref{eqn_expvalmap} is a quasi-isomorphism of complexes whose one-sided inverse is the inclusion of $\mathbb{C}$ inside $\BV{\mat{N}{\twodim}}$ as the constant polynomials.
\end{prop}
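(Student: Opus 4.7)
The strategy has three steps: verify that $\langle - \rangle$ is a chain map; confirm that it composes with the inclusion $i : \mathbb{C} \hookrightarrow \BV{\mat{N}{\twodim}}$ to give the identity on $\mathbb{C}$; and show that $\BV{\mat{N}{\twodim}}$ has cohomology one-dimensional and concentrated in degree zero. Given the third fact, the relation $H(\langle - \rangle) \circ H(i) = \id_{\mathbb{C}}$ forces both induced maps on cohomology to be isomorphisms, so both $i$ and $\langle - \rangle$ are quasi-isomorphisms. The second step is immediate since $\langle 1 \rangle = 1$ by construction.

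For the chain-map condition, observe that $\langle - \rangle$ is nonzero only on $\BV{\mat{N}{\twodim}}^0$, so the only content is $\langle (d+\Delta) f \rangle = 0$ for $f$ of cohomological degree $-1$. Such an $f$ is linear in the odd generators $\xi_{ij}$, so I would reduce to the case $f = g(x)\xi_{ij}$ with $g$ a polynomial in the even coordinates $x_{ij}$. Then $df$ contributes a term proportional to $g(x) x_{ij}$ (from $d\xi = -x$), while $\Delta f = \{g(x), \xi_{ij}\}$ unfolds by the Leibniz rule into a linear combination of partial derivatives $\partial g / \partial x_{kl}$ weighted by the inverse of the symplectic form on $\mat{N}{\twodim}^*$. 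The assertion $\langle df \rangle + \langle \Delta f \rangle = 0$ is then precisely the Gaussian integration-by-parts identity for Hermitian matrices: moving a factor of $x_{ij}$ across the measure $e^{-\tr(X^2)/2}\,\mathrm{d} X$ converts it into the matching matrix-derivative of the rest of the integrand. The trace pairing built into the symplectic form on $\mat{N}{\twodim}$ is exactly what aligns the index structure of the two sides.

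For the cohomology computation, I would use the filtration of $\BV{\mat{N}{\twodim}}$ by polynomial degree in $x_{ij}$ and $\xi_{ij}$. The differential $d$ preserves polynomial degree while $\Delta$, being second-order, strictly lowers it by two, so the associated-graded differential is $d$ alone. Since $\mat{N}{\twodim} \cong \mat{N}{\mathbb{C}} \oplus \Sigma^{-1}\mat{N}{\mathbb{C}}$ is acyclic (with $d$ identifying the two summands up to shift), a standard Koszul-type argument shows that $(\sym{\mat{N}{\twodim}^*}, d)$ is quasi-isomorphic to $\mathbb{C}$ concentrated in degree zero. The filtration is bounded below in each polynomial degree, so the induced spectral sequence converges and collapses at $E_1$, yielding $H^*(\BV{\mat{N}{\twodim}}, d+\Delta) \cong \mathbb{C}$.

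The principal obstacle I anticipate is the sign-matching in the chain-map step. The BV Laplacian is specified abstractly via the inverse of the Kontsevich-style symplectic form on $\mat{N}{\twodim}$, itself built from the odd pairing on $\twodim$ and the trace pairing on $\mat{N}{\mathbb{C}}$. Translating this into an explicit coordinate differential operator and verifying that it matches the derivative produced by Hermitian integration by parts---with all Koszul signs, transpose conventions and factor-of-two subtleties from the Hermitian parameterization correctly accounted for---requires careful bookkeeping. Once that translation is pinned down, the remainder of the argument is essentially formal.
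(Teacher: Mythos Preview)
Your proposal is correct and matches the approach the paper indicates: the chain-map condition is exactly the Gaussian integration-by-parts identity (the paper states the result ``follows from some standard arguments in the Batalin-Vilkovisky formalism, which encode the integration by parts relations''), and the cohomology computation via the filtration by polynomial degree with the acyclicity of $\mat{N}{\twodim}$ is precisely the spectral sequence argument the paper invokes (cf.\ the proof of Theorem~\ref{thm_ncexpval} and the remark that the horizontal quasi-isomorphisms in Diagram~\eqref{fig_expvals} follow from ``a standard spectral sequence argument'').
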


It is helpful at this point to summarize the present situation by the following (incomplete) commutative diagram.
\begin{equation} \label{fig_expvals}
\xymatrix{
\NCBVg{\twodim} \ar@<0.5ex>@{-->}[rrr]^{\exists ?} \ar[dd]_{\sigma_{\gamma,\nu}\circ\mor_{\gamma,\nu}} \ar[rd]^(0.6){\gamma=1} & & & \mathbb{C}[\gamma,\nu] \ar@<0.5ex>[lll] \ar[ld]_(0.6){\gamma=1} \ar[dd]^{\gamma=1,\nu=N} \\
& \NCBV{\twodim} \ar@<0.5ex>@{-->}[r]^{\exists ?} \ar[ld]^(0.4){\sigma_\nu\circ\mor_\nu} & \mathbb{C}[\nu] \ar@<0.5ex>[l] \ar[rd]_(0.4){\nu=N} \\
\BV{\mat{N}{\twodim}} \ar@<0.5ex>[rrr]^{\langle - \rangle:f\mapsto\langle f\rangle} & & & \mathbb{C} \ar@<0.5ex>[lll]}
\end{equation}
The left vertical map is what we referred to earlier in Remark \ref{rem_qLQTmaps} as the quantum LQT map; in degree zero (the most important degree for us) this map sends a cyclic word $(x^k)$ to the function $\tr(X^k)$ on matrices, and it sends a symmetric product of cyclic words $(x^{k_1}) \cdots (x^{k_n})$ to the multi-trace function $\tr(X^{k_1}) \cdots \tr(X^{k_n})$. All the horizontal arrows are quasi-isomorphisms; this fact follows from a standard spectral sequence argument that is spelled out in detail in Proposition 5.2 and 5.3 of~\cite{GGHZ}.

The goal of the next section is to construct a \emph{noncommutative} analogue
\[ \langle - \rangle_{\gamma,\nu}: \NCBVg{\twodim} \to \mathbb{C}[\gamma,\nu] \]
of the expectation value map~\eqref{eqn_expvalmap}.
This map will fill in the dashed arrow atop Diagram \eqref{fig_expvals} and thus complete the above commutative diagram. That section culminates in Theorem~\ref{thm: diagram}.

Foreshadowing one of the punchlines of the paper, we claim that this dashed arrow $\langle - \rangle_{\gamma,\nu}$ will amount to taking a product of cyclic words, interpreting them as the vertices of a ribbon graph, and then returning a polynomial encoding the topology of those ribbon graphs that one can make from these vertices. Thanks to the commutativity of the diagram, we can evaluate that polynomial to compute the expected value of the multi-trace function determined by the product of cyclic words. In particular, if we use $p^{\gamma,\nu}_{k_1,\ldots,k_n} \in \mathbb{C}[\gamma,\nu]$ to denote the ``noncommutative expectation value''
\[\left\langle (x^{k_1})\cdots(x^{k_n}) \right\rangle_{\gamma,\nu},\]
then we will show that evaluating $p^{\gamma,\nu}_{k_1,\ldots,k_n}$ at $\gamma =1$ and $\nu = N$ recovers the multi-trace expectation value
\[ I_{k_1,\ldots,k_n}^N = \frac{\int_{\her{N}}\tr (X^{k_1})\cdots\tr (X^{k_n})e^{-\frac{1}{2}\tr (X^2)}\mathrm{d} X}{\int_{\her{N}}e^{-\frac{1}{2}\tr (X^2)}\mathrm{d} X}. \]
This result is the content of Proposition~\ref{prop_poly-corfun}.

\section{The ribbon graph expansion of the expectation value}
\label{sec: rib}

In this section we will introduce some instances of \emph{noncommutative} analogues of Feynman diagram expansions. The role of graphs (or Feynman diagrams), ubiquitous and familiar in the ``commutative'' setting due to Wick's Theorem, is replaced here by ribbon graphs in this noncommutative setting. The topology of the surfaces described by these ribbon graphs, determined by the genus and number of boundary components, is easily read off from the ribbon graph itself and encoded in a polynomial with two corresponding variables. These constructions will be used to complete Diagram~\eqref{fig_expvals}, which is the central result of this section, and thus relate the quantum LQT map to the topology of surfaces.

\subsection{Ribbon graphs}

We begin by recalling some basic material about ribbon graphs.

\begin{defi}
A ribbon graph $\Gamma$ consists of the following data:
\begin{itemize}
\item
A finite set $H(\Gamma)$ whose elements are called the \emph{half-edges} of $\Gamma$.
\item
A partition $E(\Gamma)$ of $H(\Gamma)$ into pairs. Elements of $E(\Gamma)$ are called \emph{edges}.
\item
A partition $V(\Gamma)$ of $H(\Gamma)$, whose elements are called \emph{vertices}.
\item
A cyclic ordering of each vertex $v\in V(\Gamma)$.
\end{itemize}
\end{defi}

This data may be equivalently described as follows. Note that a cyclic ordering of the half-edges of a vertex is the same thing as a cyclic permutation of all the half-edges at that vertex. Combining these permutations at every vertex gives us a permutation $\varsigma_\Gamma$ of all the half-edges of the ribbon graph. The cyclic decomposition of the permutation $\varsigma_\Gamma$ recovers the vertices of the ribbon graph $\Gamma$ along with their cyclic structure. Similarly, we may define a permutation $\kappa_\Gamma$ of the half-edges of $\Gamma$ satisfying:
\begin{itemize}
\item
$\kappa_\Gamma^2=\id$,
\item
$\kappa_\Gamma h \neq h$, for all $h\in H(\Gamma)$;
\end{itemize}
simply by transposing the half-edges of each edge.

The correspondence
\[ \Gamma \mapsto (H(\Gamma),\varsigma_\Gamma,\kappa_\Gamma) \]
determines an isomorphism between the category of ribbon graphs and the category whose objects are 3-tuples consisting of a set $H$, a permutation $\varsigma$ on $H$ and a permutation $\kappa$ on $H$ satisfying the two conditions just listed above (both categories have an obvious notion of isomorphism). Frequently, it is more convenient to define certain constructions in the latter category.

Given any ribbon graph $\Gamma$ we may construct a connected oriented surface $\surf_{\Gamma}$ with boundary, as shown in Figure~\ref{fig_ribbonsurface}. (Our construction will differ slightly from the one that the reader may already be familiar with, e.g. from \cite{kontairy}.)
Given a ribbon graph $\Gamma$ with $n$ vertices
\[ v_1,\ldots v_n \]
of valency $k_1,\ldots, k_n$, we take a sphere $S^2$ and remove $n$ disks, creating $k_i$ parameterized intervals around the boundary of each disk.
We then use the cyclic structure at each vertex $v_i$ to place the half-edges incident to $v_i$ into one-to-one correspondence with the parameterized intervals of the $i$th disk.
Now for every edge $e$ of $\Gamma$ we take a strip $I\times I$ and glue the ends of this strip to the two parameterized intervals corresponding to the two half-edges that form $e$.
This gives a well-defined (up to homeomorphism) construction of an oriented surface $\surf_{\Gamma}$, since we may always rotate and permute the parameterized boundary components by a homeomorphism of the sphere.
\begin{figure}[htp]
\centering
\scalebox{0.40}{\includegraphics{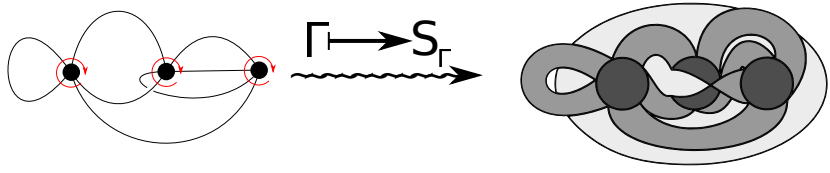}}
\caption{The surface associated to a ribbon graph. We note that the left-most central band is twisted in the picture in order to maintain the orientability of the surface.}
\label{fig_ribbonsurface}
\end{figure}
\\ The Euler characteristic of the surface $\surf_{\Gamma}$ is
\begin{equation} \label{eqn_eulerchar}
\chi(\surf_\Gamma) = 2 - |V(\Gamma)| - |E(\Gamma)|,
\end{equation}
as can be checked directly.

\begin{defi}
Given a ribbon graph $\Gamma=(H(\Gamma),\varsigma_\Gamma,\kappa_\Gamma)$, consider the permutation $\beta_\Gamma:=\varsigma_\Gamma\kappa_\Gamma$ and define the \emph{dual ribbon graph} by
\[ \Gamma^\star := (H(\Gamma),\beta_\Gamma,\kappa_\Gamma). \]
\end{defi}

The significance of the dual graph is the following. The boundary components of the surface $\surf_\Gamma$ are in one-to-one correspondence with the vertices of the dual graph $\Gamma^\star$. Using this and the formula \eqref{eqn_eulerchar} for the Euler characteristic of $\surf_\Gamma$ we get a simple expression for the genus. Define $b(\Gamma):=|V(\Gamma^\star)|$ to be the number of boundary components of $\surf_\Gamma$ and set
\begin{equation} \label{eqn_genbdryequivalence}
g(\Gamma):=\frac{|V(\Gamma)|+|E(\Gamma)|-b(\Gamma)}{2}
\end{equation}
to be the genus of the surface~$\surf_\Gamma$.

We now consider the operation of contracting an edge in a ribbon graph. This is most conveniently described in terms of the dual graph.

\begin{defi}
Let $\Gamma$ be a ribbon graph and let $e\in E(\Gamma)$ be an edge. To contract the edge $e$ in $\Gamma$, we simply remove this edge from the dual graph $\Gamma^\star$; that is if $\Gamma/e$ denotes the graph $\Gamma$ with the edge $e$ contracted then
\[ (\Gamma/e)^\star = (H(\Gamma)-e,\beta_\Gamma/e,\kappa_\Gamma/e) \]
where $\sigma/e$ denotes the permutation $\sigma$ with the entries from $e$ deleted from its cyclic decomposition.
\end{defi}

This operation may be more directly described as follows. If the edge $e=\{h,h'\}$ joins two different vertices
\[ (h h_1\ldots h_k) \quad\text{and}\quad (h' h'_1\ldots h'_{k'}) \]
of $\Gamma$ then contracting the edge $e$ combines these vertices into a single vertex
\begin{equation} \label{eqn_contractedge}
(h_1\ldots h_k h'_1\ldots h'_{k'}).
\end{equation}

If $e=\{h,h'\}$ is a loop on a vertex
\[ v=(h h_1\ldots h_k h' h'_1\ldots h'_{k'}) \]
of $\Gamma$ then contracting $e$ splits $v$ into two separate vertices
\begin{equation} \label{eqn_contractloop}
(h_1\ldots h_k) \quad\text{and}\quad (h'_1\ldots h'_{k'}).
\end{equation}

Since contracting the edge $e=\{h,h'\}$ simply deletes the half-edges from the dual graph $\Gamma^\star$, this will not change the number of vertices in $\Gamma^\star$, unless of course some of those vertices are left with no remaining half-edges. The reader may observe that this situation occurs precisely when an empty vertex is produced as a result in either \eqref{eqn_contractedge} or~\eqref{eqn_contractloop}.

If $k=k'=0$ in \eqref{eqn_contractedge} then
\begin{equation} \label{eqn_contractedgebdry}
b(\Gamma/e) = b(\Gamma)-1.
\end{equation}

If either $k=0$ or $k'=0$ (but not both) in \eqref{eqn_contractloop} then
\begin{equation} \label{eqn_contractlooponebdry}
b(\Gamma/e) = b(\Gamma)-1.
\end{equation}
and if both $k=k'=0$ then
\begin{equation} \label{eqn_contractlooptwobdry}
b(\Gamma/e) = b(\Gamma)-2.
\end{equation}

The effect on the genus of contracting an edge may be summarized as follows. Contracting an edge $e$ that joins two different vertices, as in \eqref{eqn_contractedge}, typically decreases the number of vertices and edges by one each and hence in this case
\begin{equation} \label{eqn_contractedgegenus}
g(\Gamma/e) = g(\Gamma)-1.
\end{equation}

If the edge $e$ is instead a loop then similar reasoning shows that the genus is not affected
\begin{equation} \label{eqn_contractloopgenus}
g(\Gamma/e) = g(\Gamma).
\end{equation}
Note that a careful analysis shows that \eqref{eqn_contractedgegenus} and \eqref{eqn_contractloopgenus} still hold even if the number $b(\Gamma)$ is affected by the edge contraction as above, as in this case the other terms in \eqref{eqn_genbdryequivalence} will compensate appropriately.

\subsection{Chord diagrams}

We will use chord diagrams in our construction of the maps that fill in Diagram~\eqref{fig_expvals} and later in Section~\ref{sec_largeN}.
We begin by recalling their definition.

\begin{defi}
Given a set $X$ of even cardinality, a chord diagram on $X$ is a partition of $X$ into pairs. We denote the set of all chord diagrams on $X$ by $\chord{X}$. In particular, if $X$ is the set of all integers between $1$ and $2l$ then we denote this set by $\chord{2l}$.
\end{defi}

If $V$ is a graded vector space equipped with a symmetric bilinear form $B$ of even degree,
then for every chord diagram
\[ c=\{i_1,j_1\},\{i_2,j_2\},\ldots,\{i_l,j_l\}\in\chord{2l}, \]
we may define a map $B_c:V^{\otimes 2l}\to\gf$ by
\[ B_c(v_1,v_2,\ldots,v_{2l}) = \pm B(v_{i_1},v_{j_1}) B(v_{i_2},v_{j_2}) \cdots B(v_{i_l},v_{j_l}) \]
where the sign is determined canonically by the Koszul sign rule. Note that
\begin{equation} \label{eqn_chordpermform}
B_c(\sigma\cdot x) = B_{\sigma^{-1}\cdot c}(x)
\end{equation}
for all $x\in V^{\otimes 2l}$, $c\in\chord{2l}$, and $\sigma\in\symg{2l}$.

We may also use chord diagrams to define ribbon graphs as follows.

\begin{defi}
Given a chord diagram
\[ c=\{i_1,j_1\},\{i_2,j_2\},\ldots,\{i_l,j_l\}\in\chord{2l} \]
and a list of positive integers $k_1,\ldots k_n$ such that
\[ k_1+k_2+\cdots+k_n=2l, \]
let $\Gamma_{k_1,\ldots,k_n}(c)$ denote the ribbon graph  with $2l$ half edges $h_1,\ldots,h_{2l}$. The edges of $\Gamma_{k_1,\ldots,k_n}(c)$ are
\[ \{h_{i_1},h_{j_1}\}, \{h_{i_2},h_{j_2}\}, \ldots, \{h_{i_l},h_{j_l}\} \]
and the vertices are defined by the permutation
\[ (h_1\ldots h_{k_1})(h_{k_1+1}\ldots h_{k_1+k_2})\cdots(h_{k_1+\cdots+k_{n-1}+1}\ldots h_{k_1+\cdots+k_{n-1}+k_n}).\]
\end{defi}

\subsection{Expectation values in noncommutative geometry}
\label{sec: exp in NC}

We are now ready to proceed with our construction of the maps that complete Diagram~\eqref{fig_expvals}. As a gloss -- and because our constructions should generalize to any contractible space, though there is no present need to work at such a level of generality -- we will provide a ribbon graph version of the computation of Gaussian moments by pairing off legs of a vertex. Hence we begin by describing our ``propagator'', then describe our Wick formula, and finally verify that this construction fits into Diagram \eqref{fig_expvals}.

On our two-dimensional symplectic vector space $\twodim$, we form a degree zero symmetric pairing
\[ B:\twodim\otimes\twodim\to\mathbb{C}\]
where
\[ B(u,v) := \langle du,v \rangle \]
using the symplectic form $\innprod$ and differential $d$.
This pairing is degenerate, but becomes nondegenerate on the subspace of $\twodim$ concentrated in degree zero. Let
\[ B^{-1} = a\otimes a \in\twodim\otimes\twodim \]
denote the inverse of this form restricted to this subspace.

Now, given a list of positive integers $k_1,\ldots,k_n$, we define a map
\[ \langle-\rangle_{\gamma,\nu}^{k_1,\ldots, k_n}:(\twodim^*)^{\otimes k_1}\otimes\cdots\otimes(\twodim^*)^{\otimes k_n}\to\mathbb{C}[\gamma,\nu]. \]
The map is zero if the sum of the $k_i$ is odd.
If the sum is even and equal to $2l$, then we define
\begin{equation} \label{eqn_expvalmapNCgk}
\langle w\rangle_{\gamma,\nu}^{k_1,\ldots,k_n}:=\sum_{c\in\chord{2l}} B^{-1}_c(w)\gamma^{g(\Gamma_{k_1,\ldots,k_n}(c))}\nu^{b(\Gamma_{k_1,\ldots,k_n}(c))}
\end{equation}
where $w\in(\twodim^*)^{\otimes k_1}\otimes\cdots\otimes(\twodim^*)^{\otimes k_n}=(\twodim^*)^{\otimes 2l}$.

This map plays a central role in what follows, so we make a few orienting remarks. Note that it records the topology of those ribbon graphs that are built from chord diagrams using the prescribed data consisting of the list of the $k_i$s. The input $w$ is a kind of ``noncommutative function'' on $\twodim$ (i.e., an element of the tensor algebra of the dual). Those familiar with Feynman diagrams will see the resemblance to the usual Wick formula.

It follows from \eqref{eqn_chordpermform} that this map is cyclically symmetric in the variables. A permutation $\sigma\in\cycg{k_1}\times\ldots\times\cycg{k_n}$
determines an isomorphism
\[ \sigma:\Gamma_{k_1,\ldots,k_n}(c)\cong\Gamma_{k_1,\ldots,k_n}(\sigma\cdot c). \]
Hence $\langle \sigma\cdot w\rangle_{\gamma,\nu}^{k_1,\ldots,k_n} = \langle w\rangle_{\gamma,\nu}^{k_1,\ldots,k_n}$ and \eqref{eqn_expvalmapNCgk} extends to give a well-defined map on $\NCHamP{\twodim}^{\otimes n}$.
Similar reasoning shows that, by permuting the vertices of the ribbon graph, this map extends to give a well-defined map on $\sym{\NCHamP{\twodim}}$.
(Note that on $S^0\!\left(\NCHamP{\twodim}\right)=\mathbb{C}$, this map is just the inclusion of $\mathbb{C}$ into $\mathbb{C}[\gamma,\nu]$.)
If we extend this map linearly with respect to the variables $\gamma$ and $\nu$,
we see that \eqref{eqn_expvalmapNCgk} provides a well-defined map
\begin{equation} \label{eqn_expvalmapNCg}
\langle - \rangle_{\gamma,\nu}:\NCBVg{\twodim}\to\mathbb{C}[\gamma,\nu].
\end{equation}
We call this map the \emph{noncommutative expectation value}. This map will provide the desired dashed arrow that completes Diagram~\eqref{fig_expvals}, but first we must prove some of its basic properties.

\begin{theorem} \label{thm_ncexpval}
The noncommutative expectation value map \eqref{eqn_expvalmapNCg} is a quasi-isomorphism of complexes whose one-sided inverse is the inclusion of $\mathbb{C}[\gamma,\nu]$ into~$\NCBVg{\twodim}$.
\end{theorem}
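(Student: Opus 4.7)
The argument has three parts: verify both maps are chain maps, observe that $\langle - \rangle_{\gamma,\nu}\circ \iota = \id_{\mathbb{C}[\gamma,\nu]}$, and use a spectral sequence to compute the cohomology of the source. The composition identity is immediate from definition~\eqref{eqn_expvalmapNCg}, since on the empty product in $\sym{\NCHamP{\twodim}}$ the chord-diagram sum reduces to $1 \in \mathbb{C}$. The inclusion $\iota$ is a chain map because $\nu$ has no half-edges to differentiate, bracket, or cobracket (so $d\nu = \nabla \nu = \delta \nu = 0$) and $\gamma$ is a formal scalar. The substantive point is to verify that $\langle - \rangle_{\gamma,\nu}$ is a chain map.

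For that verification, the key observation is that $B^{-1} = a \otimes a$ is supported in degree zero, so $\langle w\rangle_{\gamma,\nu} = 0$ whenever $w$ contains even a single $\xi = -b^*$. Moreover, the form identities $\langle x, x\rangle^{-1} = \langle \xi, \xi\rangle^{-1} = 0$ imply that the bracket and cobracket contract a $\xi$ against an $x$; together with $d\xi = -x$, this means each of $d$, $\nabla$ and $\gamma\delta$ reduces the total $\xi$-count by exactly one. Thus $\langle (d + \nabla + \gamma\delta)w\rangle_{\gamma,\nu}$ is automatically zero unless $w$ carries exactly one $\xi$, which is the case requiring actual cancellation. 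In that case $dw$ yields an all-$x$ monomial whose expectation enumerates chord diagrams on the $2l$ half-edges; decomposing the sum according to the chord $\{i,j\}$ through the former $\xi$-position, those with $j$ in the same cyclic word as $i$ pair off with the contribution of $\nabla w$ via the loop-contraction formulas~\eqref{eqn_contractloopgenus}, \eqref{eqn_contractlooponebdry} and~\eqref{eqn_contractlooptwobdry}, while those with $j$ in a different cyclic word pair off with the contribution of $\delta w$ via the edge-contraction formulas~\eqref{eqn_contractedgegenus} and~\eqref{eqn_contractedgebdry}. The weights $\gamma^g\nu^b$ transform under these contractions exactly as demanded, and the principal technical hurdle is a careful audit of the Koszul signs (together with the $(-1)$ in $d\xi = -x$) to confirm that the three contributions cancel in combination.

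For the quasi-isomorphism, I would show $H^*(\NCBVg{\twodim}) = \mathbb{C}[\gamma,\nu]$ by filtering $\NCBVg{\twodim}$ by the total tensor degree $N := k_1 + \cdots + k_n$ of a cyclic-word monomial $(x^{k_1})\cdots(x^{k_n})$; $d$ preserves $N$, whereas $\nabla$ and $\delta$ strictly decrease it by $2$. On the $E^0$-page the differential is therefore just $d$, and $E^1 = \mathbb{C}[\gamma,\nu] \otimes H^*(\sym{\NCHamP{\twodim}}, d)$. Since $(\twodim^*, d)$ is acyclic ($d\xi = -x$ is an isomorphism between its two nonzero degrees), the K\"unneth theorem makes $(\twodim^*)^{\otimes k}$ acyclic for $k \geq 1$, and acyclicity descends to cyclic and symmetric coinvariants by the usual Maschke averaging in characteristic zero. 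Therefore $H^*(\NCHamP{\twodim}, d) = 0$ and $H^*(\sym{\NCHamP{\twodim}}, d) = \mathbb{C}$, so $E^1 = \mathbb{C}[\gamma,\nu]$ is concentrated in filtration degree $0$ and cohomological degree $0$. All higher differentials vanish for degree reasons, the filtration is bounded below, and only a single filtration piece survives, so the spectral sequence converges to give $H^*(\NCBVg{\twodim}) = \mathbb{C}[\gamma,\nu]$. Combined with $\langle - \rangle_{\gamma,\nu}\circ \iota = \id$, this proves that both $\iota$ and $\langle - \rangle_{\gamma,\nu}$ are quasi-isomorphisms.
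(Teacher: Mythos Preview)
Your proposal is correct and follows essentially the same approach as the paper: both reduce the chain-map verification to elements with a single~$\xi$ (the paper phrases this via the cohomological grading, you via the $\xi$-count, which are equivalent since $\xi$ is the only generator of nonzero degree), then match the contributions of $d$, $\nabla$, and $\gamma\delta$ against one another by contracting the distinguished edge and invoking \eqref{eqn_contractedgebdry}--\eqref{eqn_contractloopgenus}; and both compute $H^*(\NCBVg{\twodim})$ via the filtration by tensor weight, using acyclicity of $\twodim$ to collapse the spectral sequence at $E^1$. The paper carries out the edge-contraction case analysis in full (and records it as identity~\eqref{eqn_cocycleidentity}, which is reused later), whereas you sketch it and flag the sign bookkeeping as the main technical step, but the strategy is the same.
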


\begin{proof}
We must show first that \eqref{eqn_expvalmapNCg} is a map of complexes; that is, we must show that it vanishes on the boundaries of $\NCBVg{\twodim}$. To see that the map is a quasi-isomorphism, first note that the cohomology of $\NCBVg{\twodim}$ is $\mathbb{C}[\gamma,\nu]$ because there is a filtration of $\NCBVg{\twodim}$ by powers of $\twodim$ and the associated spectral sequence collapses immediately as the complex $\twodim$ is acyclic. Now apply the comparison theorem.

Recall that $\twodim$ is $\mathbb{Z}$-graded and concentrated in positive degrees, so that $\twodim^*$ and hence $\NCBVg{\twodim}$ are also $\mathbb{Z}$-graded and concentrated in nonpositive degrees.
To show \eqref{eqn_expvalmapNCg} is a cochain map,
it thus suffices only to consider the boundaries of degree minus-one elements in $\NCBVg{\twodim}$.
Such elements are linear combinations of terms of the form
\[ (\xi x^i)(x^{k_1})\cdots(x^{k_r}) \]
where $i,r\geq 0$ and $k_1,\ldots,k_r\geq 1$.

The boundary of such a typical element is
\begin{multline*}
(d+\Delta_{\gamma,\nu})\left[(\xi x^i)(x^{k_1})\cdots(x^{k_r})\right] = -(x^{i+1})(x^{k_1})\cdots(x^{k_r}) + \sum_{s=1}^i (x^{s-1})(x^{i-s})(x^{k_1})\cdots(x^{k_r}) \\
+ \ \gamma\sum_{t=1}^r k_t(x^{i+k_t-1})(x^{k_1})\cdots\widehat{(x^{k_t})}\cdots(x^{k_r}).
\end{multline*}
Hence we must show that
\begin{multline}
\label{eqn_cocycleidentity}
\left\langle (x^{i+1})(x^{k_1})\cdots(x^{k_r}) \right\rangle_{\gamma,\nu} = \sum_{s=1}^i \left\langle (x^{s-1})(x^{i-s})(x^{k_1})\cdots(x^{k_r}) \right\rangle_{\gamma,\nu} \\
+ \ \gamma\sum_{t=1}^r k_t\left\langle (x^{i+k_t-1})(x^{k_1})\cdots\widehat{(x^{k_t})}\cdots(x^{k_r}) \right\rangle_{\gamma,\nu}.
\end{multline}
Obviously we may assume that $i+k_1+\cdots k_r = 2l-1$ for some $l\geq 1$.
We begin by calculating the left-hand side of \eqref{eqn_cocycleidentity}.
As a point of notation, note that any chord diagram $c\in\chord{2l}$ defines in an obvious way a permutation in $\symg{2l}$, also denoted by $c$, satisfying $c^2=\id$.
Now
\begin{displaymath}
\begin{split}
\left\langle (x^{i+1})(x^{k_1})\cdots(x^{k_r}) \right\rangle_{\gamma,\nu} =& \sum_{c\in\chord{2l}} \gamma^{g\left(\Gamma_{i+1,k_1,\ldots,k_r}(c)\right)}\nu^{b\left(\Gamma_{i+1,k_1,\ldots,k_r}(c)\right)} \\
=& \sum_{s=2}^{2l}\sum_{\begin{subarray}{c} \{ c\in\chord{2l} \,:\, c(1)=s \}\end{subarray}} \gamma^{g\left(\Gamma_{i+1,k_1,\ldots,k_r}(c)\right)}\nu^{b\left(\Gamma_{i+1,k_1,\ldots,k_r}(c)\right)} \\
=& \sum_{s=1}^i\sum_{\begin{subarray}{c}\{ c\in\chord{2l} \,:\, c(1)=s+1 \}\end{subarray}} \gamma^{g\left(\Gamma_{i+1,k_1,\ldots,k_r}(c)\right)}\nu^{b\left(\Gamma_{i+1,k_1,\ldots,k_r}(c)\right)} \\
&+ \sum_{t=1}^r\sum_{u=1}^{k_t}\sum_{\left\{\begin{subarray}{c} c\in\chord{2l}: \\ c(1)=i+1+k_1+\cdots+k_{t-1}+u \end{subarray}\right\}}\gamma^{g\left(\Gamma_{i+1,k_1,\ldots,k_r}(c)\right)}\nu^{b\left(\Gamma_{i+1,k_1,\ldots,k_r}(c)\right)}.
\end{split}
\end{displaymath}

Hence to prove \eqref{eqn_cocycleidentity}, we will show that
\begin{equation} \label{eqn_cocycleidentitycobracket}
\left\langle (x^{s-1})(x^{i-s})(x^{k_1})\cdots(x^{k_r}) \right\rangle_{\gamma,\nu} = \sum_{\begin{subarray}{c} \{ c\in\chord{2l} \,:\,  c(1)=s+1\} \end{subarray}} \gamma^{g\left(\Gamma_{i+1,k_1,\ldots,k_r}(c)\right)}\nu^{b\left(\Gamma_{i+1,k_1,\ldots,k_r}(c)\right)}
\end{equation}
for $s=1,\ldots,i$ and that
\begin{multline} \label{eqn_cocycleidentitybracket}
\gamma\left\langle (x^{i+k_t-1})(x^{k_1})\cdots\widehat{(x^{k_t})}\cdots(x^{k_r}) \right\rangle_{\gamma,\nu} = \\
\sum_{\left\{\begin{subarray}{c} c\in\chord{2l}: \\ c(1)=i+1+k_1+\cdots+k_{t-1}+s \end{subarray}\right\}}\gamma^{g\left(\Gamma_{i+1,k_1,\ldots,k_r}(c)\right)}\nu^{b\left(\Gamma_{i+1,k_1,\ldots,k_r}(c)\right)}
\end{multline}
for $t=1,\ldots r$ and $s=1,\ldots, k_t$.

We prove \eqref{eqn_cocycleidentitycobracket} first.
Note that when $i=0$ there is nothing to prove,
so we begin by considering the case $i=1=s$.
In this case the left-hand side of \eqref{eqn_cocycleidentitycobracket} is $\nu^2\left\langle (x^{k_1})\cdots(x^{k_r}) \right\rangle_{\gamma,\nu}$
and the right-hand side may be written as
\[ \sum_{c\in\chord{\{3,\ldots,2l\}}}\gamma^{g\left(\Gamma_{2,k_1,\ldots,k_r}(c\cup\{\{1,2\}\})\right)}\nu^{b\left(\Gamma_{2,k_1,\ldots,k_r}(c\cup\{\{1,2\}\})\right)}. \]
Contracting the edge $e=\{h_1,h_2\}$ in the above and applying \eqref{eqn_contractlooptwobdry} and \eqref{eqn_contractloopgenus},
we see that this may be written as
\[ \nu^2\sum_{c\in\chord{2l-2}}\gamma^{g\left(\Gamma_{k_1,\ldots,k_r}(c)\right)}\nu^{b\left(\Gamma_{k_1,\ldots,k_r}(c)\right)} = \nu^2\left\langle (x^{k_1})\cdots(x^{k_r}) \right\rangle_{\gamma,\nu}, \]
as desired.

Next we consider the case when $i\geq 2$ and $s=1$ or $s=i$. For simplicity, assume $s=1$. In this case the left-hand side of \eqref{eqn_cocycleidentitycobracket} is $\nu\left\langle(x^{i-1})(x^{k_1})\cdots(x^{k_r}) \right\rangle_{\gamma,\nu}$ and the right-hand side may be written as
\[ \sum_{c\in\chord{\{3,\ldots,2l\}}} \gamma^{g\left(\Gamma_{i+1,k_1,\ldots,k_r}(c\cup\{\{1,2\}\})\right)} \nu^{b\left(\Gamma_{i+1,k_1,\ldots,k_r}(c\cup\{\{1,2\}\})\right)}. \]
Contracting the edge $e=\{h_1,h_2\}$ and applying \eqref{eqn_contractlooponebdry} and \eqref{eqn_contractloopgenus},
this becomes
\[ \nu\sum_{c\in\chord{2l-2}} \gamma^{g\left(\Gamma_{i-1,k_1,\ldots,k_r}(c)\right)} \nu^{b\left(\Gamma_{i-1,k_1,\ldots,k_r}(c)\right)} = \nu\left\langle(x^{i-1})(x^{k_1})\cdots(x^{k_r}) \right\rangle_{\gamma,\nu}. \]

Finally, consider the case when $1<s<i$.
In this case we may write the right-hand side of \eqref{eqn_cocycleidentitycobracket} as
\[ \sum_{c\in\chord{\{1,\ldots,2l\}-\{1,s+1\}}} \gamma^{g\left(\Gamma_{i+1,k_1,\ldots,k_r}(c\cup\{\{1,s+1\}\})\right)}\nu^{b\left(\Gamma_{i+1,k_1,\ldots,k_r}(c\cup\{\{1,s+1\}\})\right)}.\]
Contracting the edge $e=\{h_1,h_{s+1}\}$, this becomes
\[ \sum_{c\in\chord{2l-2}} \gamma^{g\left(\Gamma_{s-1,i-s,k_1,\ldots,k_r}(c)\right)} \nu^{b\left(\Gamma_{s-1,i-s,k_1,\ldots,k_r}(c)\right)} = \left\langle (x^{s-1})(x^{i-s})(x^{k_1})\cdots(x^{k_r}) \right\rangle_{\gamma,\nu}. \]

It remains to prove \eqref{eqn_cocycleidentitybracket}. We first consider the case $i=0$ and $k_t=1=s$. In this case the left-hand side of \eqref{eqn_cocycleidentitybracket} is $\gamma\nu\left\langle (x^{k_1})\cdots\widehat{(x^{k_t})}\cdots(x^{k_r}) \right\rangle_{\gamma,\nu}$. Proceeding as above, but instead making use of \eqref{eqn_contractedgebdry} and \eqref{eqn_contractedgegenus}, we may write the right-hand side of \eqref{eqn_cocycleidentitybracket} as
\[ \gamma\nu\sum_{c\in\chord{2l-2}} \gamma^{g\left(\Gamma_{k_1,\ldots,\widehat{k_t},\ldots,k_r}(c)\right)} \nu^{b\left(\Gamma_{k_1,\ldots,\widehat{k_t},\ldots,k_r}(c)\right)}, \]
as required.

Finally, we consider the case $i+k_t\geq 2$. Repeating the same arguments as above, but making use of \eqref{eqn_contractedgegenus}, we may write the right-hand side of \eqref{eqn_cocycleidentitybracket} as
\begin{multline*}
\gamma\sum_{c\in\chord{2l-2}} \gamma^{g\left(\Gamma_{i+k_t-1,k_1,\ldots,\widehat{k_t},\ldots,k_r}(c)\right)} \nu^{b\left(\Gamma_{i+k_t-1,k_1,\ldots,\widehat{k_t},\ldots,k_r}(c)\right)} = \\
\gamma\left\langle (x^{i+k_t-1})(x^{k_1})\cdots\widehat{(x^{k_t})}\cdots(x^{k_r}) \right\rangle_{\gamma,\nu},
\end{multline*}
as desired.
\end{proof}

Since $\langle - \rangle_{\gamma,\nu}$ is $\gamma$-linear,
it determines a unique map
\begin{equation} \label{eqn_expvalmapNC}
\langle - \rangle_\nu:\NCBV{\twodim}\to\mathbb{C}[\nu]
\end{equation}
that makes the upper quadrilateral of Diagram~\eqref{fig_expvals} commute by setting $\gamma = 1$.
This map may be described explicitly as the $\mathbb{C}[\nu]$-linear map that is the identity on $S^0(\NCHamP{\twodim})=\mathbb{C}$ and such that
\begin{equation} \label{eqn_expvalmapNCk}
\langle w\rangle_\nu^{k_1,\ldots,k_n} = \sum_{c\in\chord{2l}} B^{-1}_c(w) \nu^{b(\Gamma_{k_1,\ldots,k_n}(c))}
\end{equation}
where $w\in(\twodim^*)^{\otimes k_1}\otimes\cdots\otimes(\twodim^*)^{\otimes k_n}=(\twodim^*)^{\otimes 2l}$.
This observation leads us to the following corollary of Theorem~\ref{thm_ncexpval}.

\begin{cor} \label{cor_ncexpval}
The map \eqref{eqn_expvalmapNC} is a quasi-isomorphism of complexes whose one-sided inverse is the inclusion of $\mathbb{C}[\nu]$ into $\NCBV{\twodim}$.
\end{cor}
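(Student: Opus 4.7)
The plan is to derive Corollary~\ref{cor_ncexpval} as a direct consequence of Theorem~\ref{thm_ncexpval} together with the acyclicity of $\twodim$. Comparing formulas \eqref{eqn_expvalmapNCgk} and \eqref{eqn_expvalmapNCk} shows that $\langle - \rangle_\nu$ is precisely the map obtained from $\langle - \rangle_{\gamma,\nu}$ by specializing $\gamma = 1$. Since the specializations $\NCBVg{\twodim}\to\NCBV{\twodim}$ (a map of BV algebras, as noted in the paper) and $\mathbb{C}[\gamma,\nu]\to\mathbb{C}[\nu]$ are both surjective chain maps, and since $\langle - \rangle_{\gamma,\nu}$ is a chain map by Theorem~\ref{thm_ncexpval}, the induced map $\langle - \rangle_\nu$ is automatically a chain map.

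The one-sided inverse claim is immediate from the explicit description preceding the statement: $\langle - \rangle_\nu$ is $\mathbb{C}[\nu]$-linear and restricts to the identity on $S^0\!\left(\NCHamP{\twodim}\right) = \mathbb{C}$, so every polynomial in $\nu$ is sent to itself.

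For the quasi-isomorphism assertion, I plan to reuse the spectral sequence argument employed in the proof of Theorem~\ref{thm_ncexpval}. Filter $\NCBV{\twodim}$ by total weight in $\twodim^*$; since $d$ preserves this filtration while $\nabla$ and $\delta$ strictly lower it (by $2$), the associated graded differential reduces to $d$ acting componentwise. Because $\twodim$ is acyclic and we work in characteristic zero, $\NCHam{\twodim}$ has cohomology concentrated in its $k = 0$ summand $\mathbb{C}\nu$, and hence $\sym{\NCHam{\twodim}}$ has cohomology $\mathbb{C}[\nu]$. The spectral sequence collapses at $E_1$ (concentrated in weight $0$), giving $H^*(\NCBV{\twodim}) \cong \mathbb{C}[\nu]$. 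Composing the inclusion $\mathbb{C}[\nu] \hookrightarrow \NCBV{\twodim}$ with $\langle - \rangle_\nu$ yields the identity, so by two-out-of-three both maps are quasi-isomorphisms. I do not anticipate any genuine obstacle; the only mild subtlety is confirming that the chain-map property of $\langle - \rangle_{\gamma,\nu}$ descends cleanly to $\gamma = 1$, and this is handled by the surjectivity of the specialization.
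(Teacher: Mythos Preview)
Your proposal is correct and follows essentially the same route as the paper's own proof: the paper derives the chain-map property ``tautologically'' from Theorem~\ref{thm_ncexpval} via the $\gamma=1$ specialization and then invokes the identical spectral sequence argument for the quasi-isomorphism claim. You have simply spelled out both steps in slightly more detail.
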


\begin{proof}
It follows tautologically from Theorem \ref{thm_ncexpval} that \eqref{eqn_expvalmapNC} is a map of complexes and the same spectral sequence argument used there applies to show that it is a quasi-isomorphism.
\end{proof}

Finally, it remains to show that these maps fit into Diagram~\eqref{fig_expvals}.

\begin{theorem}
\label{thm: diagram}
Placing the noncommutative expectation value maps \eqref{eqn_expvalmapNCg} and \eqref{eqn_expvalmapNC} in Diagram~\eqref{fig_expvals} produces a commutative diagram in which all the horizontal arrows are quasi-isomorphisms.
\end{theorem}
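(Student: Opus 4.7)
The plan is to verify two things separately: the quasi-isomorphism property of each of the three horizontal arrows, and the commutativity of the enriched diagram. The quasi-isomorphism assertion for the top horizontal is Theorem~\ref{thm_ncexpval}, for the middle horizontal it is Corollary~\ref{cor_ncexpval}, and for the bottom horizontal it is the proposition stated just before Diagram~\eqref{fig_expvals}. So the substantive work is establishing commutativity.

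Commutativity of the upper face of Diagram~\eqref{fig_expvals} -- the one with vertices $\NCBVg{\twodim}$, $\mathbb{C}[\gamma,\nu]$, $\NCBV{\twodim}$ and $\mathbb{C}[\nu]$ connected by the two dashed noncommutative expectation values and the two $\gamma=1$ specializations -- holds by the very construction of $\langle-\rangle_\nu$ from $\langle-\rangle_{\gamma,\nu}$ by setting $\gamma=1$, as indicated in the paragraph before Corollary~\ref{cor_ncexpval}. The right-hand triangle among $\mathbb{C}[\gamma,\nu]$, $\mathbb{C}[\nu]$ and $\mathbb{C}$ is trivial, and the left-hand triangle among $\NCBVg{\twodim}$, $\NCBV{\twodim}$ and $\BV{\mat{N}{\twodim}}$ is the definition of $\sigma_{\gamma,\nu}\circ\mor_{\gamma,\nu}$ as the $\gamma$-linear extension of $\sigma_\nu\circ\mor_\nu$. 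Thus the only nontrivial face left to check is the lower one, and its commutativity is equivalent to the identity
\[ \langle\sigma_\nu\circ\mor_\nu(F)\rangle = \langle F\rangle_\nu\big|_{\nu=N} \]
for every $F\in\NCBV{\twodim}$.

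By linearity this identity need only be checked on generators $F=(x^{k_1})\cdots(x^{k_n})$, whose image under the quantum LQT map is, as recorded in the discussion of Diagram~\eqref{fig_expvals}, the multi-trace function $\tr(X^{k_1})\cdots\tr(X^{k_n})$. What must be proved is therefore
\[ I_{k_1,\ldots,k_n}^N = \sum_{c\in\chord{2l}} B^{-1}_c\bigl((x^{k_1})\cdots(x^{k_n})\bigr)\,N^{b(\Gamma_{k_1,\ldots,k_n}(c))}, \]
where $2l:=k_1+\cdots+k_n$; both sides vanish when $2l$ is odd (the left side by the $X\mapsto -X$ symmetry of the Gaussian measure, the right by definition). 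A direct computation gives $B(a,a)=\langle da,a\rangle=\langle b,a\rangle=1$, so that $B^{-1}(x\otimes x)=1$, and since $x$ has degree zero no Koszul sign intervenes; hence $B^{-1}_c\bigl((x^{k_1})\cdots(x^{k_n})\bigr)=1$ for every chord diagram $c$, and the identity reduces to the classical 't~Hooft ribbon graph expansion of a GUE multi-trace integral.

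The remainder of the proof is that expansion. Writing each trace as $\tr(X^{k_i})=\sum X_{j_1 j_2}X_{j_2 j_3}\cdots X_{j_{k_i}j_1}$, exchanging sum and Gaussian integral, and invoking Wick's theorem together with the GUE covariance $\mathbb{E}[X_{ij}X_{kl}]=\delta_{il}\delta_{jk}$, the integral becomes a sum over pairings of the $2l$ matrix slots -- equivalently, over $c\in\chord{2l}$. For each pairing, the resulting product of Kronecker deltas chains the matrix indices together: alternately following the vertex permutation $\varsigma_\Gamma$ (along the traces) and the edge permutation $\kappa_\Gamma$ (across the chords) yields closed cycles of $\beta_\Gamma=\varsigma_\Gamma\kappa_\Gamma$, each cycle corresponding to one free index summation from $1$ to $N$. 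The number of such cycles is $|V(\Gamma^\star)|=b(\Gamma_{k_1,\ldots,k_n}(c))$, giving the factor $N^{b(\Gamma_{k_1,\ldots,k_n}(c))}$. The main obstacle is this combinatorial identification of Wick-generated index loops with cycles of $\beta_\Gamma$, a standard but bookkeeping-intensive check that must be carried out carefully in the ribbon-graph and chord-diagram conventions fixed earlier in this section.
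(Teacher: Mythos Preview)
Your argument is correct but follows a genuinely different route from the paper's. You establish commutativity of the lower face by a direct Wick-theorem computation, identifying the GUE multi-trace integral with the ribbon-graph sum via the classical 't~Hooft expansion. The paper instead exploits the homological structure already in hand: since $\NCBVg{\twodim}$ is nonpositively graded, every degree-zero element $f$ is a cocycle, and since the top horizontal is a retraction onto $\mathbb{C}[\gamma,\nu]$, there is a polynomial $p_f$ with $f-p_f$ a boundary. All maps in the diagram being chain maps, both routes annihilate $f-p_f$, so it suffices to check commutativity on $p_f\in\mathbb{C}[\gamma,\nu]$, where it is immediate because the horizontal arrows are retractions. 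No combinatorics is needed.

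Your approach has the virtue of making the 't~Hooft expansion an \emph{input} rather than an \emph{output}: the paper deduces Proposition~\ref{prop_poly-corfun} from the commutativity of Diagram~\eqref{fig_expvals}, whereas you effectively prove that proposition first and use it to close the diagram. The paper's approach is shorter and more in keeping with the BV philosophy that once all the arrows are known to be chain maps and retracts, commutativity comes for free. One small omission in your write-up: your reduction ``by linearity to generators $F=(x^{k_1})\cdots(x^{k_n})$'' tacitly restricts to degree zero; you should add a sentence disposing of elements involving $\xi$, on which both sides vanish (the right because $B^{-1}=a\otimes a$ kills any $\xi$-factor, the left because restriction of a superfunction to $\her{N}$ annihilates odd components).
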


\begin{proof}
All that remains is to show that the resulting diagram is commutative. For this, it is sufficient to prove that the outside rectangle commutes. Since all the maps in this rectangle have degree zero it follows that in nonzero degrees the horizontal maps send everything to zero, and so in this case the diagram commutes trivially. Therefore the only interesting case is in degree zero because all four corners are nontrivial in this degree.

Note that we have already established that each row of this diagram describes a retract of the left-hand complex onto its cohomology: for $\NCBVg{\twodim}$ the arrows retract it onto $\mathbb{C}[\gamma,\nu]$. Now, any degree zero element $f$ in $\NCBVg{\twodim}$ is a cocycle, since $\NCBVg{\twodim}$ is nonpositively graded and the differential has degree one. Using the retraction, we know that there is some polynomial $p_f \in \mathbb{C}[\gamma,\nu]$ cohomologous to $f$. Hence $f - p_f$ is a boundary in $\NCBVg{\twodim}$, and so it is sent to zero along either route around the diagram, as all the maps in Diagram~\eqref{fig_expvals} are chain maps. As a consequence, when we follow the path around the northeast periphery of the diagram, both $f$ and $p_f$ will go to the same number in $\mathbb{C}$. The same will also be true when we take the other route. Now note that $p_f$ is sent to the same number in $\mathbb{C}$ along either of these two routes; here we have used the fact that the horizontal arrows are retractions. It now follows that the same is true of~$f$.
\end{proof}

\section{Correlation functions in the Gaussian Unitary Ensemble}
\label{sec: corr}

In this section we apply our preceding results to describe the correlation functions \eqref{eqn_corfunmultitrace} of the Gaussian Unitary Ensemble.

\subsection{The correlation functions and a recursive formula}

\begin{defi}
Consider the monomial
\begin{equation} \label{eqn_principleclass}
(x^{k_1})(x^{k_2})\cdots(x^{k_n})\in\left[\NCHam{\twodim}^{\otimes n}\right]_{\symg{n}}
\end{equation}
given by a product of cyclic words on one generator.
Define the polynomial $p^{\gamma,\nu}_{k_1,\ldots,k_n}\in\mathbb{C}[\gamma,\nu]$ by
\[ p^{\gamma,\nu}_{k_1,\ldots,k_n} := \left\langle (x^{k_1})\cdots(x^{k_n}) \right\rangle_{\gamma,\nu},\]
i.e., as the noncommutative expectation value of this monomial.
Likewise, define the polynomial $p^{\nu}_{k_1,\ldots,k_n}\in\mathbb{C}[\nu]$ by
\[ p^{\nu}_{k_1,\ldots,k_n} := \left\langle (x^{k_1})\cdots(x^{k_n}) \right\rangle_\nu. \]
\end{defi}

By Theorem \ref{thm_ncexpval} the polynomial $p^{\gamma,\nu}_{k_1,\ldots,k_n}$ is the unique representative in $\mathbb{C}[\gamma,\nu]$ of the cohomology class~\eqref{eqn_principleclass} lying in $\NCBVg{\twodim}$.
Likewise, by Corollary \ref{cor_ncexpval}, $p^\nu_{k_1,\ldots,k_n}$ is the unique representative in $\mathbb{C}[\nu]$ of the cohomology class~\eqref{eqn_principleclass} lying in $\NCBV{\twodim}$. Since Diagram~\eqref{fig_expvals} commutes, the polynomial $p^\nu_{k_1,\ldots,k_n}$ is obtained from $p^{\gamma,\nu}_{k_1,\ldots,k_n}$ by setting~$\gamma=1$.

In what follows we will chiefly work with the polynomials $p^\nu_{k_1,\ldots,k_n}$, since it will become apparent that the parameter $\gamma$ has no role to play in analyzing the correlation functions of the Gaussian Unitary Ensemble; however, it makes sense to include the polynomials $p^{\gamma,\nu}_{k_1,\ldots,k_n}$ in our discussion since they may be studied and computed using the same techniques that will be explained here -- tracking the genus can be useful in closely related settings.

Note that the polynomials $p^{\gamma,\nu}_{k_1,\ldots,k_n}$ and $p^\nu_{k_1,\ldots,k_n}$ are well-defined for nonnegative values of the indices $k_r$ with
\[ p^{\gamma,\nu}_{0,k_1,\ldots,k_n} = \nu p^{\gamma,\nu}_{k_1,\ldots,k_n} \quad\text{and}\quad p^\nu_{0,k_1,\ldots,k_n} = \nu p^\nu_{k_1,\ldots,k_n}. \]
This is because $x^0$ is precisely~$\nu$.

From the formulas \eqref{eqn_expvalmapNCgk} and \eqref{eqn_expvalmapNCk} for $\langle - \rangle_{\gamma,\nu}$ and $\langle - \rangle_\nu$,
we know that the coefficients of the polynomials $p^{\gamma,\nu}_{k_1,\ldots,k_n}$ and $p^\nu_{k_1,\ldots,k_n}$ may be described by counting chord diagrams.
Given positive integers $k_1,\ldots, k_n$ such that $k_1+\cdots+k_n = 2l$, set
\begin{displaymath}
\begin{split}
\eta^{g,b}_{k_1,\ldots, k_n} &:= \left|\left\{c\in\chord{2l}:g(\Gamma_{k_1,\ldots,k_n}(c)) = g \text{ and } b(\Gamma_{k_1,\ldots,k_n}(c)) = b\right\}\right|, \\
\eta^{g,*}_{k_1,\ldots, k_n} &:= \left|\left\{c\in\chord{2l}:g(\Gamma_{k_1,\ldots,k_n}(c)) = g\right\}\right|, \\
\eta^{*,b}_{k_1,\ldots, k_n} &:= \left|\left\{c\in\chord{2l}: b(\Gamma_{k_1,\ldots,k_n}(c)) = b\right\}\right|.
\end{split}
\end{displaymath}
Then we may write
\begin{equation} \label{eqn_polycoeffs}
\begin{split}
p^{\gamma,\nu}_{k_1,\ldots,k_n} &= \sum_{g\geq 0, b\geq 1} \eta^{g,b}_{k_1,\ldots, k_n}\gamma^g\nu^b, \\
p^\nu_{k_1,\ldots,k_n} &= \sum_{b\geq 1} \eta^{*,b}_{k_1,\ldots, k_n}\nu^b.
\end{split}
\end{equation}
Recall from \eqref{eqn_genbdryequivalence} that for $k_1,\ldots,k_n$ as above
\[ 2g\left(\Gamma_{k_1,\ldots,k_n}(c)\right) +b\left(\Gamma_{k_1,\ldots,k_n}(c)\right) = n + l. \]
From this, it follows that when  $2g+b = n+l$, we have
\[ \eta^{g,b}_{k_1,\ldots, k_n} = \eta^{g,*}_{k_1,\ldots, k_n} = \eta^{*,b}_{k_1,\ldots, k_n} \]
and zero otherwise. Hence we may write
\begin{displaymath}
\begin{split}
p^{\gamma,\nu}_{k_1,\ldots,k_n} &= \sum_{\begin{subarray}{c} g\geq 0, b\geq 1: \\ 2g+b=n+l \end{subarray}} \eta^{g,b}_{k_1,\ldots, k_n}\gamma^g\nu^b, \\ &= \sum_{\begin{subarray}{c} g\geq 0, b\geq 1: \\ 2g+b=n+l \end{subarray}} \eta^{g,*}_{k_1,\ldots, k_n}\gamma^g\nu^b = \sum_{\begin{subarray}{c} g\geq 0, b\geq 1: \\ 2g+b=n+l \end{subarray}} \eta^{*,b}_{k_1,\ldots, k_n}\gamma^g\nu^b, \\
&= \sum_{g=0}^{\left[\frac{n+l-1}{2}\right]}\eta^{g,*}_{k_1,\ldots, k_n}\gamma^g\nu^{n+l-2g} = \sum_{g=0}^{\left[\frac{n+l-1}{2}\right]}\eta^{*,n+l-2g}_{k_1,\ldots, k_n}\gamma^g\nu^{n+l-2g}
\end{split}
\end{displaymath}
and
\[p^\nu_{k_1,\ldots,k_n} = \sum_{g=0}^{\left[\frac{n+l-1}{2}\right]}\eta^{g,*}_{k_1,\ldots, k_n}\nu^{n+l-2g} = \sum_{g=0}^{\left[\frac{n+l-1}{2}\right]}\eta^{*,n+l-2g}_{k_1,\ldots, k_n}\nu^{n+l-2g}.\]
From this expression it is clear that the polynomials $p^\nu_{k_1,\ldots,k_n}$ are even (respectively, odd) precisely when $(n+l)$ is even (respectively, odd).

The essential significance of the polynomials $p^\nu_{k_1,\ldots,k_n}$ -- of principal importance for this paper -- is that they recover all the multi-trace correlation functions~\eqref{eqn_corfunmultitrace},
thanks to Proposition 5.4 of \cite{GGHZ}, which we now recall.

\begin{prop}
\label{prop_poly-corfun}
For every positive integer $N$,
\[ p^\nu_{k_1,\ldots,k_n}(N) = I_{k_1,\ldots,k_n}^N = \frac{\int_{\her{N}}\tr (X^{k_1})\cdots\tr (X^{k_n})e^{-\frac{1}{2}\tr (X^2)}\mathrm{d} X}{\int_{\her{N}}e^{-\frac{1}{2}\tr (X^2)}\mathrm{d} X}. \]
\end{prop}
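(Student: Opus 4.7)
The plan is to prove the identity by a direct diagram chase around Diagram~\eqref{fig_expvals}, using the commutativity established in Theorem~\ref{thm: diagram} together with the explicit description of the quantum LQT map $\sigma_\nu\circ\mor_\nu$ in degree zero. The point is that $p^\nu_{k_1,\ldots,k_n}$ and $I^N_{k_1,\ldots,k_n}$ are two different ways of computing the image of the same cohomology class under the same zig-zag of quasi-isomorphisms.

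Concretely, fix the monomial $w := (x^{k_1})\cdots(x^{k_n}) \in \NCBV{\twodim}$, which is a cocycle because $\NCBV{\twodim}$ is concentrated in nonpositive degrees and $w$ lies in degree zero. I would trace $w$ along two paths from $\NCBV{\twodim}$ to $\mathbb{C}$ passing through $\mathbb{C}[\nu]$. On the upper route, apply $\langle-\rangle_\nu$ first to obtain $p^\nu_{k_1,\ldots,k_n}\in\mathbb{C}[\nu]$ by Definition, then specialize at $\nu=N$ to land on $p^\nu_{k_1,\ldots,k_n}(N)\in\mathbb{C}$. On the lower route, apply $\sigma_\nu\circ\mor_\nu$ first, landing in $\BV{\mat{N}{\twodim}}^{\gl{N}{\gf}}$, and then apply the classical expectation value map \eqref{eqn_expvalmap}.

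For the lower route, I would appeal to the explicit degree-zero description of the quantum LQT map given in the paragraph following Diagram~\eqref{fig_expvals}: $\mor_\nu$ sends the cyclic word $(x^k)$ to the multilinear form $x\otimes t_k$ on $\mat{N}{\twodim}$, which under $\sigma_\nu$ becomes the polynomial function $X\mapsto \tr(X^k)$; more generally, since $\sigma_\nu$ and $\mor_\nu$ are algebra maps, the monomial $w$ is sent to the product $\tr(X^{k_1})\cdots\tr(X^{k_n})$. Composing with $\langle - \rangle$ then recovers $I^N_{k_1,\ldots,k_n}$ by the very definition \eqref{eqn_corfunmultitrace}. Commutativity of Diagram~\eqref{fig_expvals}, furnished by Theorem~\ref{thm: diagram}, therefore forces
\[ p^\nu_{k_1,\ldots,k_n}(N) = I^N_{k_1,\ldots,k_n}, \]
as claimed.

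The main (and essentially only) obstacle is the verification that $\sigma_\nu\circ\mor_\nu$ really does send a product of cyclic words in the single variable $x$ to the corresponding product of trace polynomials. This is a direct unwinding: $\mor$ tensors a cyclic $k$-tensor on $\twodim^*$ with the trace map $t_k$, and $\sigma$ is the quotient from cyclic to symmetric tensor powers followed by the identification of $\sym{V^*}$ with polynomial functions; both steps preserve the product structure. Everything else in the argument is formal and follows from the machinery already in place.
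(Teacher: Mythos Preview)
Your argument is correct and follows essentially the same approach as the paper's own proof: both trace the degree-zero cocycle $(x^{k_1})\cdots(x^{k_n})$ around Diagram~\eqref{fig_expvals}, invoking its commutativity (Theorem~\ref{thm: diagram}) together with the fact that $\sigma_\nu\circ\mor_\nu$ is a map of Batalin-Vilkovisky algebras sending $(x^k)$ to $\tr(X^k)$. Your write-up simply spells out in more detail the diagram chase that the paper sketches in a single sentence.
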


\begin{proof}
Although a proof is provided in \textit{loc. cit.}, we mention for the sake of completeness that the result follows from the fact that Diagram~\eqref{fig_expvals} commutes and that both $\sigma_\nu$ and $\mor_\nu$ are maps of Batalin-Vilkovisky algebras, provided we recall that the image of $(x^k)\in\NCHam{\twodim}$ under $\sigma_\nu\circ\mor_\nu$ is the polynomial function $\tr(X^k)$ for $X\in\gl{N}{\mathbb{C}}$.
\end{proof}

Our methods provide us with a recursive formula for computing these polynomials.

\begin{prop}
For all $n\geq 1$ and for all $k_1\geq 1$ and $k_2,\ldots,k_n\geq 0$,
\begin{equation} \label{eqn_recursiongenus}
p^{\gamma,\nu}_{k_1,\ldots,k_n} = \sum_{r=1}^{k_1-1} p^{\gamma,\nu}_{r-1,k_1-r-1,k_2,\ldots,k_n} + \gamma\sum_{r=2}^n k_r p^{\gamma,\nu}_{k_1+k_r-2,k_2,\ldots,\widehat{k_r},\ldots,k_n}.
\end{equation}
In particular,
\begin{equation} \label{eqn_recursionbdry}
p^\nu_{k_1,\ldots,k_n} = \sum_{r=1}^{k_1-1} p^\nu_{r-1,k_1-r-1,k_2,\ldots,k_n} + \sum_{r=2}^n k_r p^\nu_{k_1+k_r-2,k_2,\ldots,\widehat{k_r},\ldots,k_n}.
\end{equation}
\end{prop}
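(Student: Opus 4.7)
The key observation is that the desired recursion \eqref{eqn_recursiongenus} is, up to relabeling of dummy indices, precisely the cocycle identity \eqref{eqn_cocycleidentity} that was already established as part of the proof of Theorem~\ref{thm_ncexpval}. The plan is therefore to derive it directly from that identity, rather than to reprove the underlying combinatorial content.

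Explicitly, consider the degree $-1$ element $(\xi x^{k_1-1})(x^{k_2})\cdots(x^{k_n})\in\NCBVg{\twodim}$, which is well-defined for $k_1\geq 1$ and $k_2,\ldots,k_n\geq 0$ with the convention $(x^0)=\nu$. Following verbatim the computation inside the proof of Theorem~\ref{thm_ncexpval}, but with $i=k_1-1$ and the theorem's indices $k_1,\ldots,k_r$ renamed to $k_2,\ldots,k_n$, its total differential is
\begin{displaymath}
(d+\Delta_{\gamma,\nu})\bigl[(\xi x^{k_1-1})(x^{k_2})\cdots(x^{k_n})\bigr] = -(x^{k_1})(x^{k_2})\cdots(x^{k_n}) + \sum_{r=1}^{k_1-1}(x^{r-1})(x^{k_1-r-1})(x^{k_2})\cdots(x^{k_n}) + \gamma\sum_{r=2}^{n}k_r\,(x^{k_1+k_r-2})(x^{k_2})\cdots\widehat{(x^{k_r})}\cdots(x^{k_n}),
\end{displaymath}
where the first term comes from $d\xi=-x$, the second from the cobracket $\nabla$ applied inside the cyclic word $(\xi x^{k_1-1})$, and the third from the Chevalley-Eilenberg contribution of bracketing $(\xi x^{k_1-1})$ against each other vertex $(x^{k_r})$. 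Note that when some $k_r=0$ the corresponding summand in the third sum vanishes automatically because of the prefactor $k_r$, which is consistent with the fact that the Lie bracket with $\nu$ is zero.

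Now apply the noncommutative expectation value $\langle-\rangle_{\gamma,\nu}\colon\NCBVg{\twodim}\to\mathbb{C}[\gamma,\nu]$. By Theorem~\ref{thm_ncexpval} this map is a cochain map, and its target carries the zero differential, so the image of the left-hand side must vanish. Reading off the image of the right-hand side using the definition of $p^{\gamma,\nu}_{k_1,\ldots,k_n}$ and rearranging yields \eqref{eqn_recursiongenus} directly. The specialization \eqref{eqn_recursionbdry} then follows by setting $\gamma=1$, since $p^{\nu}_{k_1,\ldots,k_n}$ is by construction the restriction of $p^{\gamma,\nu}_{k_1,\ldots,k_n}$ to $\gamma=1$.

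There is no substantive obstacle, only careful index-bookkeeping: the genuine combinatorial content, namely the analysis of how edge contractions affect genus and boundary counts \eqref{eqn_contractedgebdry}--\eqref{eqn_contractloopgenus}, was already deployed inside the proof of Theorem~\ref{thm_ncexpval} to establish \eqref{eqn_cocycleidentity}, and the present proposition is a formal corollary of it.
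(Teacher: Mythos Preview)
Your proof is correct and takes essentially the same approach as the paper: both observe that \eqref{eqn_recursiongenus} is nothing but the cocycle identity \eqref{eqn_cocycleidentity} established in the proof of Theorem~\ref{thm_ncexpval}, after the substitution $i=k_1-1$ and relabeling of the remaining indices, with \eqref{eqn_recursionbdry} following by setting $\gamma=1$. Your write-up is more explicit about the relabeling and the cochain-map reasoning, but the underlying argument is identical.
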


\begin{proof}
Observe that \eqref{eqn_recursiongenus} follows immediately from Equation \eqref{eqn_cocycleidentity} and that \eqref{eqn_recursionbdry} is obtained from \eqref{eqn_recursiongenus} simply by setting $\gamma=1$.
\end{proof}

\subsection{Examples}

Here we will provide some examples of how Equation \eqref{eqn_recursionbdry} can be used to compute the polynomials $p^\nu_{k_1,\ldots,k_n}$ and hence, by Proposition \ref{prop_poly-corfun}, the $n$-trace correlation functions.

\begin{example}
As an example of a very simple computation, we compute
\begin{displaymath}
\begin{split}
p^\nu_2 &= p^\nu_{0,0} = \nu^2, \\
p^\nu_{1,3} &= 3p^\nu_{2} = 3\nu^2, \\
p^\nu_{2,2} &= p^\nu_{0,0,2} + 2p^\nu_2 = (\nu^2+2)p^\nu_2 = (\nu^2+2)\nu^2.
\end{split}
\end{displaymath}
\end{example}

In principle, we may compute any polynomial $p^\nu_{k_1,\ldots,k_n}$ through the repeated application of the recursion relation \eqref{eqn_recursionbdry}.
As a practical matter, however, the number of terms increases rapidly at each stage.
Nonetheless, in some simple cases we obtain some general formulas.

\begin{example} \label{exm_allones}
This example was presented in \cite{GGHZ}, but since we will require it later we present it here now as well. By the recursion relation \eqref{eqn_recursiongenus} we have
\[ p^{\gamma,\nu}_{\underbrace{\scriptstyle{1,1,\ldots,1}}_{2n \text{~terms}}} = (2n-1)\gamma\cdot p^{\gamma,\nu}_{0,\underbrace{\scriptstyle{1,1,\ldots\ldots\ldots,1}}_{2n-2 \text{~terms}}} = (2n-1)\gamma\nu p^{\gamma,\nu}_{\underbrace{\scriptstyle{1,1,\ldots\ldots,1}}_{2n-2 \text{~terms}}}. \]
Therefore,
\[ p^{\gamma,\nu}_{\underbrace{\scriptstyle{1,1,\ldots,1}}_{2n \text{~terms}}} = (2n-1)!!\gamma^n\nu^n \qquad\text{and}\qquad p^\nu_{\underbrace{\scriptstyle{1,1,\ldots,1}}_{2n \text{~terms}}} = (2n-1)!!\nu^n \]
and, by Proposition \ref{prop_poly-corfun}, we have
\[ \frac{\int_{\her{N}}\tr(X)^{2n}e^{-\frac{1}{2}\tr (X^2)}\mathrm{d} X}{\int_{\her{N}}e^{-\frac{1}{2}\tr (X^2)}\mathrm{d} X} = (2n-1)!! N^n. \]
\end{example}

\begin{example}
By the recursion relation \eqref{eqn_recursiongenus} we have
\[ p^{\gamma,\nu}_{\underbrace{\scriptstyle{2,2,\ldots,2}}_{n \text{~terms}}} = p^{\gamma,\nu}_{0,0,\underbrace{\scriptstyle{2,2,\ldots\ldots,2}}_{n-1 \text{~terms}}} + 2(n-1)\gamma p^{\gamma,\nu}_{\underbrace{\scriptstyle{2,2,\ldots\ldots,2}}_{n-1 \text{~terms}}} = (\nu^2+2(n-1)\gamma) p^{\gamma,\nu}_{\underbrace{\scriptstyle{2,2,\ldots\ldots,2}}_{n-1 \text{~terms}}}. \]
Therefore,
\[ p^{\gamma,\nu}_{\underbrace{\scriptstyle{2,2,\ldots,2}}_{n \text{~terms}}} = \prod_{j=0}^{n-1}(\nu^2+2j\gamma) \qquad\text{and}\qquad p^\nu_{\underbrace{\scriptstyle{2,2,\ldots,2}}_{n \text{~terms}}} = \prod_{j=0}^{n-1}(\nu^2+2j). \]
More generally, one can show that
\[ p^{\gamma,\nu}_{\underbrace{\scriptstyle{2,2,\ldots,2}}_{n \text{~terms}},k_1,\ldots,k_m} = p^{\gamma,\nu}_{k_1,\ldots,k_m}\prod_{j=0}^{n-1}\left(\nu^2+\left(2j+\sum_{i=1}^m k_i\right)\gamma\right). \]
\end{example}

\subsection{Order of the correlation functions}

Finally, we will demonstrate how equation \eqref{eqn_recursionbdry} may be used to compute the degree of the polynomial~$p^\nu_{k_1,\ldots,k_n}$.

\begin{theorem} \label{thm_degree}
Consider the polynomial $p^\nu_{k_1,\ldots,k_n}$ where $k_1,\ldots,k_n\geq 0$ have even sum. Let
\[ m:=\frac{1}{2}\sum_{i=1}^n k_i, \]
and let \[ q:=|\{i: k_i \text{ is even}\}| \]
count the number of indices that are even. Then $p^\nu_{k_1,\ldots,k_n}$ is a polynomial of degree~$(m+q)$.
\end{theorem}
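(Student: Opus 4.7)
The plan is to argue by strong induction on $m$, using the recursion \eqref{eqn_recursionbdry} together with one crucial structural fact: every polynomial $p^\nu_{k_1,\ldots,k_n}$ has \emph{nonnegative} integer coefficients. This is visible directly from \eqref{eqn_polycoeffs} because the coefficients $\eta^{*,b}_{k_1,\ldots,k_n}$ count chord diagrams. Consequently, the degree of any nonnegative linear combination of such polynomials equals the maximum of the degrees of its nonzero summands---no cancellation of leading terms can occur, and the leading coefficient of the sum is strictly positive as soon as at least one summand attains the maximal degree.

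First I would handle the auxiliary reduction and the base case. If any $k_i=0$, apply the identity $p^\nu_{0,k_2,\ldots,k_n}=\nu\cdot p^\nu_{k_2,\ldots,k_n}$: dropping a zero entry decreases $q$ by one while leaving $m$ unchanged, and the factor $\nu$ raises the degree by one, so the formula $\deg=m+q$ propagates. This lets me assume $k_i\geq 1$ for all $i$, and reduces the base case $m=0$ to $p^\nu_{0,\ldots,0}=\nu^n$ of degree $n=0+n$.

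For the inductive step, assume all $k_i\geq 1$ and that the theorem holds for every multi-index of smaller $m$. Each summand on the right-hand side of \eqref{eqn_recursionbdry} is a polynomial $p^\nu_{(\cdots)}$ with new weight $m'=m-1$, so by induction its degree equals $(m-1)+q'$, where $q'$ is the new count of even entries. A short parity computation yields: in the first sum $\sum_{r=1}^{k_1-1}p^\nu_{r-1,k_1-r-1,k_2,\ldots,k_n}$, the values $r-1$ and $k_1-r-1$ contribute $2$, $0$, or $1$ even entries according to whether $k_1$ is even with $r$ odd, $k_1$ is even with $r$ even, or $k_1$ is odd; removing $k_1$ from the old count then gives $q'=q+1$ in the first and third cases and $q'=q-1$ in the second. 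In the second sum $\sum_{r=2}^n k_r p^\nu_{k_1+k_r-2,\ldots,\widehat{k_r},\ldots}$, the new entry $k_1+k_r-2$ is even iff $k_1,k_r$ have the same parity, giving $q'=q+1$ precisely when both $k_1$ and $k_r$ are odd, and $q'\leq q-1$ otherwise. Hence the maximum possible degree appearing on the right-hand side is $(m-1)+(q+1)=m+q$. I would then verify this maximum is always attained: if $k_1$ is even then $k_1\geq 2$ and the $r=1$ term of the first sum works; if $k_1\geq 3$ is odd then any $r$ in the first sum works; and if $k_1=1$ (so the first sum is empty) then the hypothesis $\sum k_i\in 2\mathbb{Z}$ forces some $k_r$ with $r\geq 2$ to be odd, whose contribution in the second sum realizes $q'=q+1$. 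Nonnegativity of coefficients then ensures the $\nu^{m+q}$ coefficient of the right-hand side is strictly positive, so $\deg p^\nu_{k_1,\ldots,k_n}=m+q$.

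The main obstacle is the parity bookkeeping, especially the four cases in the second sum (by the parities of $k_1$ and $k_r$) and the separate handling of the degenerate case $k_1=1$, where the first sum vanishes and the argument must run through the second sum. Once those cases are dispatched, the nonnegativity of coefficients---which forbids the cancellation of any leading term---combines with the inductive hypothesis to give the result immediately.
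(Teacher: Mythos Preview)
Your proof is correct and takes essentially the same approach as the paper's: strong induction on $m$ via the recursion \eqref{eqn_recursionbdry}, using the nonnegativity of the coefficients \eqref{eqn_polycoeffs} to rule out cancellation of leading terms. The only difference is organizational—the paper reorders indices so that $k_1$ (and, when there is an odd index, also $k_2$) is chosen strategically and then splits into the two cases ``some index is odd'' versus ``some index is even and positive,'' whereas you first strip off zero entries and then do the case analysis on the fixed $k_1$, handling $k_1=1$ separately via the second sum; both routes yield the same parity bookkeeping and the same verification that the degree $m+q$ is actually attained.
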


\begin{proof}
The proof proceeds by induction on $m$. We will use the fact that the coefficients of the polynomials~\eqref{eqn_polycoeffs} are nonnegative.
Note that if $m=0$, then all the indices $k_i$ are zero and the statement is trivial. If $m>0$ then we consider two cases:
\begin{itemize}
\item {\it Case 1:} there is an index which is odd.

There must then be at least two indices which are odd, and we may assume that it is both $k_1$ and $k_2$ that are odd.
From \eqref{eqn_recursionbdry} we have
\begin{displaymath}
\begin{split}
p^\nu_{k_1,\ldots,k_n} =& \sum_{r=1}^{k_1-1} p^\nu_{r-1,k_1-r-1,k_2,\ldots,k_n} \\
&+ k_2 p^\nu_{k_1+k_2-2,k_3,\ldots,k_n} + \sum_{r=3}^n k_r p^\nu_{k_1+k_r-2,k_2,\ldots,\widehat{k_r},\ldots,k_n}.
\end{split}
\end{displaymath}

Applying the inductive hypothesis, we have
\begin{itemize}
\item
The polynomial $p^\nu_{r-1,k_1-r-1,k_2,\ldots,k_n}$ has degree $(m-1)+(q+1)=m+q$.
\item
The polynomial $p^\nu_{k_1+k_2-2,k_3,\ldots,k_n}$ has degree $(m-1)+(q+1)=m+q$.
\item
The polynomial $p^\nu_{k_1+k_r-2,k_2,\ldots,\widehat{k_r},\ldots,k_n}$ has degree either:
\begin{itemize}
\item $(m+q)$ if $k_r$ is odd, or
\item $(m+q-2)$ if $k_r$ is even.
\end{itemize}
\end{itemize}
It follows that $p^\nu_{k_1,\ldots,k_n}$ has degree $(m+q)$.

\item {\it Case 2:} there is an index which is even and positive.

We assume that it is $k_1$ that is both even and positive. From \eqref{eqn_recursionbdry} we have
\begin{displaymath}
\begin{split}
p^\nu_{k_1,\ldots,k_n} =& \nu p^\nu_{k_1-2,k_2,\ldots,k_n} + \sum_{r=2}^{k_1-1} p^\nu_{r-1,k_1-r-1,k_2,\ldots,k_n} \\
&+ \sum_{r=2}^n k_r p^\nu_{k_1+k_r-2,k_2,\ldots,\widehat{k_r},\ldots,k_n}.
\end{split}
\end{displaymath}

Again, applying the inductive hypothesis, we have
\begin{itemize}
\item
The polynomial $\nu p^\nu_{k_1-2,k_2,\ldots,k_n}$ has degree $(m+q)$.
\item
The polynomial $p^\nu_{r-1,k_1-r-1,k_2,\ldots,k_n}$ has degree either:
\begin{itemize}
\item $(m+q)$ if $r$ is odd, or
\item $(m+q-2)$ if $r$ is even.
\end{itemize}
\item
The polynomial $p^\nu_{k_1+k_r-2,k_2,\ldots,\widehat{k_r},\ldots,k_n}$ has degree~$(m+q-2)$.
\end{itemize}
Again, it follows that $p^\nu_{k_1,\ldots,k_n}$ has degree~$(m+q)$.
\end{itemize}
This completes our computation of the degree of $p^\nu_{k_1,\ldots,k_n}$.
\end{proof}

Having now computed the degree of the polynomial $p^\nu_{k_1,\ldots,k_n}$,
we will look at describing some of the large $N$ asymptotic behavior of the polynomials in the next section.

\section{Large $N$ asymptotics of multi-trace correlation functions}
\label{sec_largeN}

In this section we will explain how to compute the leading, as well as some of the subleading coefficients of the polynomials $p^\nu_{k_1,\ldots,k_n}\in\mathbb{C}[\nu]$. By Proposition \ref{prop_poly-corfun} this will allow us to describe some of the large $N$ asymptotic behavior of the multi-trace correlation functions \eqref{eqn_corfunmultitrace}. In particular, we will use these results to compute the large $N$ statistical correlation coefficients between random variables in the Gaussian Unitary Ensemble and to prove a generalization of Wigner's semicircle law \cite{semicircle}.

\subsection{Computation of the leading coefficients} \label{sec_leadcoeff}

We begin by separating the cases of odd and even indices, using Theorem \ref{thm_degree} to compute the degrees of the polynomials.

\begin{defi}
Consider the polynomial
\[ p^\nu_{2i_1,2i_2,\ldots,2i_k}, \qquad\text{with } i_1,\ldots,i_k\geq 0 \]
having all even powers. It has degree $n_{i_1,\ldots,i_k}:=k+\sum_{r=1}^k i_r$. We use $C_{i_1,\ldots,i_k}$ to denote the coefficient of $\nu^{n_{i_1,\ldots,i_k}}$ for this polynomial.

Now consider the polynomial
\[ p^\nu_{2j_1+1,2j_2+1,\ldots,2j_{2l-1}+1,2j_{2l}+1}, \qquad\text{with } j_1,\ldots,j_{2l}\geq 0 \]
having all odd powers. It has degree $m_{j_1,\ldots,j_{2l}}:=l+\sum_{r=1}^{2l}j_r$. We use $A_{j_1,\ldots,j_{2l}}$ to denote the coefficient of $\nu^{m_{j_1,\ldots,j_{2l}}}$ for this polynomial.
\end{defi}

Now note that by Theorem \ref{thm_degree} the polynomial
\begin{equation} \label{eqn_generalpolynomial}
p^\nu_{2i_1,\ldots,2i_k,2j_1+1,\ldots,2j_{2l}+1}, \qquad\text{with } i_1,\ldots,i_k,j_1,\ldots,j_{2l}\geq 0
\end{equation}
has degree $n_{i_1,\ldots,i_k} + m_{j_1,\ldots,j_{2l}}$. The following result allows us to separate the cases of odd and even indices.

\begin{theorem} \label{thm_coeffoddeven}
The coefficient of $\nu^{n_{i_1,\ldots,i_k} + m_{j_1,\ldots,j_{2l}}}$ for the polynomial \eqref{eqn_generalpolynomial} is $C_{i_1,\ldots,i_k} A_{j_1,\ldots,j_{2l}}$.
\end{theorem}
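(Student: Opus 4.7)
The plan is to prove the theorem by induction on the total even weight $I := \sum_{r=1}^{k} i_r$, using the recursion \eqref{eqn_recursionbdry} and the degree formula from Theorem \ref{thm_degree}. Let $P_{i_1,\ldots,i_k; j_1,\ldots,j_{2l}}$ denote the leading coefficient of the polynomial $p^\nu_{2i_1,\ldots,2i_k, 2j_1+1,\ldots,2j_{2l}+1}$, so that the goal is to show $P_{i; j} = C_i A_j$. Throughout I would use the symmetry of $p^\nu_{k_1,\ldots,k_n}$ in its arguments, inherited from the symmetry of the noncommutative expectation value. The base case $I = 0$ follows from iterating $p^\nu_{0, k_2, \ldots} = \nu\, p^\nu_{k_2, \ldots}$ to peel off the zero indices: this gives $p^\nu_{0,\ldots,0, 2j_1+1,\ldots,2j_{2l}+1} = \nu^k\, p^\nu_{2j_1+1,\ldots,2j_{2l}+1}$, whose leading coefficient is $A_{j_1,\ldots,j_{2l}} = C_{0,\ldots,0}\cdot A_{j_1,\ldots,j_{2l}}$ since $C_{0,\ldots,0} = 1$.

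For the inductive step with $I \geq 1$, I would use the symmetry to assume $i_1 \geq 1$ and apply the recursion \eqref{eqn_recursionbdry} with $k_1 = 2i_1 \geq 2$ in the first slot. The crucial substep is a parity analysis via Theorem \ref{thm_degree}: among the polynomials on the right-hand side, only those in the first sum with $r = 2s+1$ odd --- which split $2i_1$ into two even pieces $2s$ and $2(i_1-s-1)$ --- retain the top degree $k + \sum i_r + l + \sum j_s$. First-sum terms with $r$ even split $2i_1$ into two odd pieces, which reduces the count of even indices by one and increases the count of odd indices by two, dropping the degree by $2$. All second-sum terms also drop the degree by $2$, regardless of the parity of $k_r$, since combining two indices into $k_1 + k_r - 2$ always reduces the degree via the $m+q$ formula.

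Reading off leading coefficients on both sides of \eqref{eqn_recursionbdry} then yields
\[ P_{i_1, \ldots, i_k; j_1, \ldots, j_{2l}} = \sum_{s=0}^{i_1 - 1} P_{s, i_1 - s - 1, i_2, \ldots, i_k; j_1, \ldots, j_{2l}}. \]
Each summand has strictly smaller even weight $I - 1$, so the inductive hypothesis factors each term as $C_{s, i_1-s-1, i_2, \ldots, i_k}\cdot A_{j_1, \ldots, j_{2l}}$. Pulling out $A_{j_1,\ldots,j_{2l}}$, the sum $\sum_{s=0}^{i_1-1} C_{s, i_1-s-1, i_2, \ldots, i_k}$ is exactly the $l = 0$ specialization of the same leading-coefficient recursion, which by the identical parity analysis equals $C_{i_1, \ldots, i_k}$. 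Hence $P_{i; j} = C_i\cdot A_j$, closing the induction.

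The main obstacle is the parity bookkeeping --- for each of the four kinds of terms in the recursion (first-sum with $r$ odd or even, second-sum with $k_r$ odd or even) one must carefully track how the numbers of even and odd indices change and then apply Theorem \ref{thm_degree} to compute the resulting degree. Once this is done, the structural coincidence that the leading-order recursion in the mixed case agrees with its pure-even specialization is what ultimately forces the product decomposition $P_{i; j} = C_i A_j$.
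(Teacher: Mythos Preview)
Your proof is correct and takes a genuinely different route from the paper's. The key difference is which index you feed into the recursion~\eqref{eqn_recursionbdry}: you expand on the first \emph{even} index $2i_1$, whereas the paper expands on the first \emph{odd} index $2j_1+1$. Your parity analysis is right in every case, and the resulting leading-order recursion
\[
P_{i_1,\ldots,i_k;\,j} \;=\; \sum_{s=0}^{i_1-1} P_{s,\,i_1-s-1,\,i_2,\ldots,i_k;\,j}
\]
leaves the odd data $j$ untouched, so the factor $A_j$ just rides along and the identity $\sum_s C_{s,i_1-s-1,i_2,\ldots,i_k}=C_{i_1,\ldots,i_k}$ (the $l=0$ specialisation) closes the induction. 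This is more self-contained than the paper's argument: the paper must first prove Theorem~\ref{thm_catalanproduct} (the Catalan product formula $C_{i_1,\ldots,i_k}=\prod_r C_{i_r}$) so that terms like $C_{r-1,i_1,\ldots,i_k}$ and $C_{j_1+j_r,i_1,\ldots,i_k}$ can be factored as $C_{i_1,\ldots,i_k}$ times a scalar depending only on the odd data. Your approach avoids this dependency entirely. Conversely, what the paper's route buys is the recursion~\eqref{eqn_recursionoddcoeffs} for the $A$-coefficients as a by-product, which is needed downstream in Lemma~\ref{lem_oddpaircoeff} and Proposition~\ref{prop_oddcoeff}; your argument does not produce that recursion, so if one follows your path one would have to derive it separately by expanding $p^\nu_{2j_1+1,\ldots,2j_{2l}+1}$ on an odd index.
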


\begin{cor} \label{cor_coeffoddeven}
For all $i_1,\ldots,i_k,j_1,\ldots,j_{2l}\geq 0$,
\[ \lim_{N\to\infty}\left[\frac{\int_{\her{N}}\tr (X^{2i_1})\cdots\tr (X^{2i_k}) \tr (X^{2j_1+1}) \cdots \tr (X^{2j_{2l}+1})e^{-\frac{1}{2}\tr (X^2)}\mathrm{d} X}{N^{n_{i_1,\ldots,i_k} + m_{j_1,\ldots,j_{2l}}}\int_{\her{N}}e^{-\frac{1}{2}\tr (X^2)}\mathrm{d} X}\right] = C_{i_1,\ldots,i_k} A_{j_1,\ldots,j_{2l}}. \]
\end{cor}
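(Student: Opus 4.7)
The approach is induction on the half-sum $m = \tfrac{1}{2}\sum_i k_i = \sum_r i_r + \sum_r j_r + l$, using the recursion \eqref{eqn_recursionbdry}. The base case $m = 0$ forces all $k_i = 0$, so $p^\nu_{0,\ldots,0} = \nu^n$ and the factorization is trivial (with $C_{0,\ldots,0} = A_\emptyset = 1$). In the inductive step, the pure-odd case ($k = 0$) is immediate (it reads $A_{j_1,\ldots,j_{2l}} = 1 \cdot A_{j_1,\ldots,j_{2l}}$), and the case in which every $i_r$ vanishes reduces via $p^\nu_{0,\ldots,0,\,2j_1+1,\ldots,2j_{2l}+1} = \nu^k\,p^\nu_{2j_1+1,\ldots,2j_{2l}+1}$. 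Otherwise I reorder so that $i_1 \geq 1$ and apply \eqref{eqn_recursionbdry} with $k_1 = 2i_1$.

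The crux is a parity-and-degree analysis of the resulting terms via Theorem \ref{thm_degree}. When $r$ in the first sum is odd, both $r - 1 = 2a$ and $k_1 - r - 1 = 2(i_1 - a - 1)$ are even, so the new polynomial has $k + 1$ even indices and $2l$ odd indices, giving the same top degree $m + k$. When $r$ in the first sum is even, both pieces become odd, so the even-index count drops to $k - 1$ and the degree drops by $2$. For the second sum, both parities of $k_r$ yield a new polynomial with $k - 1$ even indices and hence degree $m + k - 2$. Thus only the first-sum terms with $r = 2a + 1$, $a = 0, \ldots, i_1 - 1$, contribute to the leading coefficient, and for each such term the inductive hypothesis (the half-sum has dropped by $1$) gives leading coefficient $C_{a, i_1 - a - 1, i_2, \ldots, i_k} \cdot A_{j_1, \ldots, j_{2l}}$.

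Summing and factoring out $A_{j_1,\ldots,j_{2l}}$, it remains to verify
\[ C_{i_1, \ldots, i_k} = \sum_{a = 0}^{i_1 - 1} C_{a, i_1 - a - 1, i_2, \ldots, i_k}. \]
But this identity is precisely the same recursion-and-degree analysis applied to the pure-even polynomial $p^\nu_{2i_1, 2i_2, \ldots, 2i_k}$: the identical case breakdown (now with $l = 0$) isolates the same set of leading contributions. The theorem then follows.

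The main obstacle is the parity/degree bookkeeping in the four-way case analysis of the recursion: one must systematically verify that the second-sum terms always drop in degree regardless of the parity of $k_r$, while in the first sum only the odd-$r$ terms survive at leading order. The edge case $i_1 = 0$ is handled separately via $p^\nu_{0, \ldots} = \nu\,p^\nu_{\ldots}$, and the indexing consistency between even and odd subscripts requires some care, but no new ideas beyond the main induction are required.
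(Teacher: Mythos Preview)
Your argument is correct and in fact supplies a proof of Theorem~\ref{thm_coeffoddeven} (from which the corollary follows immediately via Proposition~\ref{prop_poly-corfun}, which identifies the integral with $p^\nu_{k_1,\ldots,k_n}(N)$; you should state this last step explicitly, since the corollary is about the limit of matrix integrals rather than about leading coefficients of the polynomials).

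Your approach is genuinely different from the paper's. The paper reorders so that an \emph{odd} index comes first and applies the recursion~\eqref{eqn_recursionbdry} to $k_1=2j_1+1$; the surviving top-degree terms then carry even indices of the form $2(r-1)$ or $2(j_1+j_r)$ alongside the original even block $2i_1,\ldots,2i_k$, and to factor $C_{i_1,\ldots,i_k}$ out of the resulting expression the paper must invoke Theorem~\ref{thm_catalanproduct}, i.e.\ the multiplicativity $C_{j,i_1,\ldots,i_k}=C_jC_{i_1,\ldots,i_k}$. A byproduct of the paper's route is the recursion~\eqref{eqn_recursionoddcoeffs} for $A_{j_1,\ldots,j_{2l}}$, which it uses later. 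By contrast, you apply the recursion to an \emph{even} index $k_1=2i_1$; your parity analysis shows the odd block $(2j_1+1,\ldots,2j_{2l}+1)$ passes through untouched at top degree, so the factor $A_{j_1,\ldots,j_{2l}}$ is constant across all surviving terms and the remaining identity $C_{i_1,\ldots,i_k}=\sum_a C_{a,i_1-1-a,i_2,\ldots,i_k}$ is the tautological leading-order consequence of the same recursion at $l=0$. Your route is thus logically independent of Theorem~\ref{thm_catalanproduct} and establishes the factorization without knowing the explicit Catalan formula for the even leading coefficients; the paper's route requires that formula but in exchange yields the $A$-recursion as a dividend.
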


The proof of Theorem \ref{thm_coeffoddeven} will have to wait until we have at least computed the coefficients $C_{i_1,\ldots,i_k}$. Note that Corollary \ref{cor_coeffoddeven} is a consequence of Proposition~\ref{prop_poly-corfun}.

Recall that the $n$th \emph{Catalan number} is
\[ C_n=\frac{1}{n+1}\binom{2n}{n} = \frac{(2n)!}{n! (n+1)!}. \]
The following result thus explains our choice of notation for the leading coefficient of the polynomial~$p^\nu_{2i_1,\ldots,2i_k}$.

\begin{theorem}
\label{thm_catalanproduct}
For all $i_1,\ldots,i_k\geq 0$, the leading coefficient of $p^\nu_{2i_1,\ldots,2i_k}$ is a product of Catalan numbers:
\begin{equation} \label{eqn_catalanproduct}
C_{i_1,\ldots,i_k} = \prod_{r=1}^k \left[\frac{1}{i_r+1}\binom{2i_r}{i_r} \right] = \prod_{r=1}^k C_{i_r}.
\end{equation}
In particular, the leading coefficient of $p^\nu_{2n}$ is the $n$th Catalan number~$C_n$.
\end{theorem}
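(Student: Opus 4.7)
The plan is to combine the recursion \eqref{eqn_recursionbdry} with Theorem \ref{thm_degree} to extract a self-contained recursion for $C_{i_1,\ldots,i_k}$ that matches the classical Catalan identity $C_n = \sum_{a=0}^{n-1} C_a C_{n-1-a}$, and then to conclude by strong induction on $m := \sum_{r=1}^k i_r$. The base case $m = 0$ is immediate: repeated use of $p^\nu_{0,k_1,\ldots,k_j} = \nu\, p^\nu_{k_1,\ldots,k_j}$ yields $p^\nu_{0,\ldots,0} = \nu^k$, so the leading coefficient is $1 = \prod_{r=1}^k C_0$.

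For the inductive step, I would first invoke the symmetry of $p^\nu$ in its arguments (it is the noncommutative expectation value of an element of $\sym{\NCHamP{\twodim}}$, which is commutative) to reorder so that $i_1 \geq 1$, then apply \eqref{eqn_recursionbdry} with $k_1 = 2i_1$:
\[
p^\nu_{2i_1,\ldots,2i_k} = \sum_{r=1}^{2i_1-1} p^\nu_{r-1,\,2i_1-r-1,\,2i_2,\ldots,2i_k} + \sum_{r=2}^{k} 2i_r\, p^\nu_{2i_1+2i_r-2,\,2i_2,\ldots,\widehat{2i_r},\ldots,2i_k}.
\]
Next I would read off each summand's degree using Theorem \ref{thm_degree}. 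In the first sum the new indices $r-1$ and $2i_1-r-1$ share a common parity: when $r$ is odd both are even and the summand has degree exactly $n_{i_1,\ldots,i_k}$, while when $r$ is even both are odd and the degree drops to $n_{i_1,\ldots,i_k} - 2$. In the second sum each summand has $k-1$ even indices summing to $2(i_1+\cdots+i_k) - 2$, giving degree $n_{i_1,\ldots,i_k} - 2$. Since the coefficients of $p^\nu$ are nonnegative integers, no cancellation can occur at top degree, so only the odd-$r$ terms in the first sum contribute.

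Writing $r = 2a+1$, this extracts the recursion
\[
C_{i_1,\ldots,i_k} = \sum_{a=0}^{i_1-1} C_{a,\, i_1-a-1,\, i_2,\ldots,i_k},
\]
in which each right-hand term has index sum $m-1$. Applying the inductive hypothesis, factoring out $\prod_{r=2}^k C_{i_r}$, and invoking the Catalan recurrence then gives $\prod_{r=1}^k C_{i_r}$. The main obstacle is the parity-and-degree bookkeeping -- one must verify carefully that every non-surviving term strictly lowers the degree -- and it is precisely the jump of $2$ in degree whenever two even indices are traded for two odd ones (per Theorem \ref{thm_degree}) that allows the Catalan recurrence to emerge so cleanly from \eqref{eqn_recursionbdry}.
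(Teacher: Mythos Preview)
Your proposal is correct and follows essentially the same approach as the paper: induction on $\sum_r i_r$, applying the recursion \eqref{eqn_recursionbdry} with $i_1\geq 1$, using Theorem~\ref{thm_degree} to discard the lower-degree terms (those with odd indices or from the second sum), and reducing to the Catalan recurrence. Your write-up is slightly more explicit about the parity bookkeeping and the role of nonnegativity in preventing cancellation, but the argument is the same.
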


\begin{proof}
The proof is by induction on $\sum_{r=1}^k i_r$. The base case is trivial and occurs when all the indices $i_r$ are equal to zero.

Now, applying \eqref{eqn_recursionbdry} and assuming that $i_1\geq 1$ we have
\begin{equation} \label{eqn_recursionevencoeffs}
\begin{split}
p^\nu_{2i_1,\ldots,2i_k} =& \sum_{r=1}^{2i_1-1} p^\nu_{r-1,2i_1-r-1,2i_2,\ldots,2i_k} + \sum_{r=2}^k 2i_r p^\nu_{2(i_1+i_r-1),2i_2,\ldots,\widehat{2i_r},\ldots,2i_k} \\
=& \sum_{r=0}^{i_1-1} p^\nu_{2r,2(i_1-1-r),2i_2,\ldots,2i_k} + \sum_{r=1}^{i_1-1} p^\nu_{2r-1,2(i_1-r)-1,2i_2,\ldots,2i_k} \\
&+ \sum_{r=2}^k 2i_r p^\nu_{2(i_1+i_r-1),2i_2,\ldots,\widehat{2i_r},\ldots,2i_k}
\end{split}
\end{equation}
Note that by Theorem \ref{thm_degree}, only the first sum contains polynomials of the same degree as $p^\nu_{2i_1,\ldots,2i_k}$, with the remaining sums containing polynomials of lower degrees. This yields the recursion relation
\[ C_{i_1,\ldots,i_k} = \sum_{r=0}^{i_1-1} C_{r,i_1-1-r,i_2,\ldots,i_k}. \]
Upon substituting \eqref{eqn_catalanproduct}, this recursion relation becomes the well-known and canonical recursion relation
\begin{equation} \label{eqn_catalanrecursion}
C_{n+1} = \sum_{i=0}^n C_i C_{n-i}
\end{equation}
for the Catalan numbers.
\end{proof}

We may now prove Theorem \ref{thm_coeffoddeven}.

\begin{proof}[Proof of Theorem \ref{thm_coeffoddeven}]
To prove Theorem \ref{thm_coeffoddeven} we proceed by induction on
\[ l + \sum_{r=1}^{2l}j_r + \sum_{r=1}^k i_r \geq 1. \]
The base case is $i_1=\ldots=i_k=j_1=\ldots=j_{2l}=0$ and $l=1$, in which case the polynomial \eqref{eqn_generalpolynomial} is $\nu^k p^\nu_{1,1} = \nu^{k+1}$; hence the base case is clearly satisfied.

Applying \eqref{eqn_recursionbdry} to the polynomial $p^\nu_{2i_1,\ldots,2i_k,2j_1+1,\ldots,2j_{2l}+1} = p^\nu_{2j_1+1,\ldots,2j_{2l}+1,2i_1,\ldots,2i_k}$ we get
\begin{equation} \label{eqn_oddevenrecursion}
\begin{split}
p^\nu_{2j_1+1,\ldots,2j_{2l}+1,2i_1,\ldots,2i_k} =&
2\sum_{r=1}^{j_1} p^\nu_{2(r-1),2(j_1-r)+1,2j_2+1,\ldots,2j_{2l}+1,2i_1,\ldots,2i_k} \\
&+ \sum_{r=2}^{2l} (2j_r+1)p^\nu_{2(j_1+j_r),2j_2+1,\ldots,\widehat{2j_r+1},\ldots,2j_{2l}+1,2i_1,\ldots,2i_k} \\
&+ \sum_{r=1}^k 2i_r p^\nu_{2(j_1+i_r)-1,2j_2+1,\ldots,2j_{2l}+1,2i_1,\ldots,\widehat{2i_r},\ldots,2i_k}.
\end{split}
\end{equation}
Here we have split up the first sum in \eqref{eqn_recursionbdry} into sums over even and odd indices and then identified these two sums. Note that by Theorem \ref{thm_degree} the last sum in \eqref{eqn_oddevenrecursion} contains polynomials of strictly smaller degree and hence may be ignored for our purposes.

Applying the inductive hypothesis to \eqref{eqn_oddevenrecursion} we conclude that the leading coefficient of \eqref{eqn_generalpolynomial} is
\begin{multline} \label{eqn_oddevenrecursioncoeffs}
2\sum_{r=1}^{j_1} C_{r-1,i_1,\ldots,i_k} A_{j_1-r,j_2,\ldots,j_{2l}} + \sum_{r=2}^{2l} (2j_r+1) C_{j_1+j_r,i_1,\ldots,i_k} A_{j_2,\ldots,\widehat{j_r},\ldots,j_{2l}} \\
= C_{i_1,\ldots,i_k}\left(2\sum_{r=1}^{j_1} C_{r-1} A_{j_1-r,j_2,\ldots,j_{2l}} + \sum_{r=2}^{2l} (2j_r+1) C_{j_1+j_r} A_{j_2,\ldots,\widehat{j_r},\ldots,j_{2l}}\right)
\end{multline}
where on the last line of \eqref{eqn_oddevenrecursioncoeffs} we have applied Equation \eqref{eqn_catalanproduct}.

In particular, setting $k=0$ in \eqref{eqn_oddevenrecursion} and applying the inductive hypothesis yields the identity
\begin{equation} \label{eqn_recursionoddcoeffs}
A_{j_1,\ldots,j_{2l}} = 2\sum_{r=1}^{j_1} C_{r-1} A_{j_1-r,j_2,\ldots,j_{2l}} + \sum_{r=2}^{2l}(2j_r+1) C_{j_1+j_r} A_{j_2,\ldots,\widehat{j_r},\ldots,j_{2l}}.
\end{equation}
Substituting \eqref{eqn_recursionoddcoeffs} into \eqref{eqn_oddevenrecursioncoeffs} we conclude that the leading coefficient of \eqref{eqn_generalpolynomial} is~$C_{i_1,\ldots,i_k} A_{j_1,\ldots,j_{2l}}$.
\end{proof}

It remains to compute the coefficient $A_{j_1,\ldots,j_{2l}}$, which we will do using the recurrence relation~\eqref{eqn_recursionoddcoeffs}.
We begin by examining the simplest possible case and derive an expression in terms of the Catalan numbers~$C_n$.

\begin{lemma} \label{lem_oddpaircoeff}
For all $j_1,j_2\geq 0$, we have
\begin{equation} \label{eqn_oddpaircoeff}
A_{j_1,j_2} = (2j_2+1)\sum_{r=0}^{j_1} (r+1)C_r C_{j_1+j_2-r}.
\end{equation}
\end{lemma}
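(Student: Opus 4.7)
The plan is to specialize the recursion \eqref{eqn_recursionoddcoeffs} to the case $l=1$, encode the resulting recurrence as a linear functional equation for the generating function in $j_1$, and then verify the closed form by matching generating functions via a classical Catalan identity.

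First I would set $2l=2$ in \eqref{eqn_recursionoddcoeffs}. Under the natural convention $A_\emptyset = 1$ (the leading coefficient of $p^\nu_\emptyset = 1$), the recursion collapses to
\[ A_{j_1,j_2} = 2\sum_{r=1}^{j_1} C_{r-1} A_{j_1-r,j_2} + (2j_2+1) C_{j_1+j_2}, \]
with base case $A_{0,j_2} = (2j_2+1) C_{j_2}$ obtained by taking $j_1 = 0$. Introducing
\[ F_{j_2}(x) := \sum_{j_1\geq 0} A_{j_1,j_2} x^{j_1}, \quad C(x) := \sum_{n\geq 0} C_n x^n, \quad G_{j_2}(x) := \sum_{j_1 \geq 0} C_{j_1+j_2} x^{j_1}, \]
the recurrence becomes $F_{j_2}(x) = 2xC(x) F_{j_2}(x) + (2j_2+1) G_{j_2}(x)$. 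Using the standard identity $2xC(x) = 1 - \sqrt{1-4x}$, one solves for
\[ F_{j_2}(x) = \frac{(2j_2+1)\, G_{j_2}(x)}{\sqrt{1-4x}}. \]

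Next, I would identify the right-hand side of \eqref{eqn_oddpaircoeff} as the coefficient of $x^{j_1}$ in $(2j_2+1)\, D(x) G_{j_2}(x)$, where $D(x) := \sum_{n\geq 0} (n+1) C_n x^n$; this is just the Cauchy product identity
\[ D(x) G_{j_2}(x) = \sum_{j_1\geq 0} x^{j_1} \sum_{r=0}^{j_1} (r+1) C_r C_{j_1+j_2-r}. \]
The proof concludes by observing the key Catalan identity $(n+1) C_n = \binom{2n}{n}$, which immediately yields $D(x) = \sum_{n\geq 0} \binom{2n}{n} x^n = (1-4x)^{-1/2}$. Therefore $D(x) G_{j_2}(x) = G_{j_2}(x)/\sqrt{1-4x}$, matching the expression for $F_{j_2}(x)$ and completing the verification.

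I do not anticipate a real obstacle; the only bookkeeping subtlety is the corner case $l=1$, where the second sum in \eqref{eqn_recursionoddcoeffs} retains a single term involving the empty configuration. An equivalent route, avoiding generating functions altogether, is a direct induction on $j_1$: substitute the candidate formula into the recurrence, reindex the double sum, and reduce the resulting identity to the Catalan recurrence \eqref{eqn_catalanrecursion} together with $(n+1) C_n = \binom{2n}{n}$. The generating-function calculation is essentially an efficient packaging of this induction.
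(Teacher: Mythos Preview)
Your proposal is correct and follows essentially the same argument as the paper: specialize \eqref{eqn_recursionoddcoeffs} to $l=1$, pass to generating functions in the first index, solve the resulting linear equation, and identify $(1-2tc(t))^{-1}=(1-4t)^{-1/2}=\sum_{n\ge 0}(n+1)C_n t^n$ to read off the Cauchy-product formula. The only cosmetic difference is that the paper derives $(1-2tc(t))^{-1}=(1-4t)^{-1/2}$ by completing the square in the Catalan functional equation \eqref{eqn_catalanrecursiongenfun}, whereas you invoke the equivalent closed form $2xC(x)=1-\sqrt{1-4x}$ directly.
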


\begin{proof}
Consider the generating function $c(t)=\sum_{n=0}^\infty C_n t^n$ for the Catalan numbers and recall that the canonical recursion relation~\eqref{eqn_catalanrecursion} is encoded as
\begin{equation} \label{eqn_catalanrecursiongenfun}
c(t) = 1+tc(t)^2.
\end{equation}
Now introduce the generating functions
\[ a_{j_2}(t):=\sum_{n=0}^\infty A_{n,j_2}t^n \quad\text{and}\quad \vec{c}_{j_2}(t):=\sum_{n=0}^\infty C_{n+j_2}t^n. \]
From Theorem \ref{thm_coeffoddeven} and the recursion relation~\eqref{eqn_recursionoddcoeffs}, we obtain the recursion
\begin{equation} \label{eqn_recursionoddcoeffspair}
A_{j_1,j_2} = 2\sum_{r=0}^{j_1 - 1} C_r A_{j_1-1-r,j_2} + (2j_2+1) C_{j_1+j_2}.
\end{equation}
In terms of these generating functions, the recursion relation \eqref{eqn_recursionoddcoeffspair} can be written as
\[ a_{j_2}(t) = (2j_2+1)\vec{c}_{j_2}(t) + 2tc(t)a_{j_2}(t) \]
from which we get
\begin{equation} \label{eqn_oddpairgenfunsoln}
a_{j_2}(t) = (2j_2+1)(1-2tc(t))^{-1}\vec{c}_{j_2}(t).
\end{equation}

One approach to computing the power series $(1-2tc(t))^{-1}$ is to differentiate \eqref{eqn_catalanrecursiongenfun}; however, we choose to proceed as follows. Completing the square in \eqref{eqn_catalanrecursiongenfun} we obtain
\begin{equation} \label{eqn_genfuninverse}
\begin{split}
(1-2tc(t))^2 &= 1-4t \\
(1-2tc(t))^{-1} &= (1-4t)^{-\frac{1}{2}} = \sum_{n=0}^\infty \binom{-\frac{1}{2}}{n}(-4t)^n \\
&= \sum_{n=0}^{\infty}\binom{2n}{n}t^n = \sum_{n=0}^\infty (n+1)C_nt^n,
\end{split}
\end{equation}
which one may observe is the derivative of $tc(t)$. Substituting the above into \eqref{eqn_oddpairgenfunsoln} we obtain the equation~\eqref{eqn_oddpaircoeff}.
\end{proof}

\begin{rem}
Note that the coefficient $A_{j_1,j_2}$ is symmetric in $j_1$ and $j_2$, whilst there appears to be no (alternative) {\it a priori} explanation for why the expression on the right-hand side of \eqref{eqn_oddpaircoeff} possesses this symmetry.
One might view this as an indication that the formula \eqref{eqn_oddpaircoeff} is not canonical, and we will indeed derive a much cleaner expression for this coefficient in Theorem~\ref{thm_coeffsublead}.
\end{rem}

The final step we must take is to derive an expression for $A_{j_1,\ldots,j_{2l}}$ in terms of these simpler coefficients. Given a chord diagram
\[ c=\{r_1,s_1\},\{r_2,s_2\},\ldots,\{r_l,s_l\}\in\chord{2l} \]
define
\[ A_{j_1,\ldots,j_{2l}}(c) := A_{j_{r_1},j_{s_1}}A_{j_{r_2},j_{s_2}}\cdots A_{j_{r_l},j_{s_l}}. \]

\begin{prop} \label{prop_oddcoeff}
For all $j_1,\ldots,j_{2l}\geq 0$,
\[ A_{j_1,\ldots,j_{2l}} = \sum_{c\in\chord{2l}} A_{j_1,\ldots,j_{2l}}(c). \]
\end{prop}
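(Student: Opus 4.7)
The plan is to proceed by induction on $l$, with the base case $l=1$ being trivial since $\chord{2}$ has a unique element. Write
\[ B_{j_1,\ldots,j_{2l}} := \sum_{c\in\chord{2l}} A_{j_1,\ldots,j_{2l}}(c) \]
for the right-hand side. First I would partition chord diagrams on $\{1,\ldots,2l\}$ according to the partner of $1$. If $1$ is paired with $r\in\{2,\ldots,2l\}$, the remaining chords form an arbitrary chord diagram on $\{2,\ldots,2l\}\setminus\{r\}$, so
\[ B_{j_1,\ldots,j_{2l}} = \sum_{r=2}^{2l} A_{j_1,j_r}\, B_{j_2,\ldots,\widehat{j_r},\ldots,j_{2l}}. \]
Applying the inductive hypothesis to each factor on $2l-2$ arguments reduces the theorem to the auxiliary identity
\begin{equation}
A_{j_1,\ldots,j_{2l}} = \sum_{r=2}^{2l} A_{j_1,j_r}\, A_{j_2,\ldots,\widehat{j_r},\ldots,j_{2l}}.
\label{eqn_auxproposal}
\end{equation}

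To prove \eqref{eqn_auxproposal} I would do a second induction, this time on $j_1$, with $l$ and the remaining $j_r$'s held fixed. For the base case $j_1=0$, Lemma \ref{lem_oddpaircoeff} collapses to $A_{0,j_r} = (2j_r+1)C_{j_r}$, and \eqref{eqn_auxproposal} becomes the specialization of the recursion \eqref{eqn_recursionoddcoeffs} at $j_1=0$ (where the first sum in \eqref{eqn_recursionoddcoeffs} is empty). For the inductive step, expand each $A_{j_1,j_r}$ using Lemma \ref{lem_oddpaircoeff} as $(2j_r+1)\sum_{s=0}^{j_1}(s+1)C_s C_{j_1+j_r-s}$. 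The $s=0$ terms reproduce the second sum of \eqref{eqn_recursionoddcoeffs}, and for each $s\geq 1$ the inner sum $\sum_{r=2}^{2l}(2j_r+1)C_{(j_1-s)+j_r}A_{j_2,\ldots,\widehat{j_r},\ldots,j_{2l}}$ can be rewritten via \eqref{eqn_recursionoddcoeffs} (with $j_1$ replaced by $j_1-s$) in terms of $A_{j_1-s,j_2,\ldots,j_{2l}}$ minus another double sum. After reindexing via $u=s+t$, matching with the first sum of \eqref{eqn_recursionoddcoeffs} reduces the whole problem to the Catalan identity
\[ (u+1)C_u = 2C_{u-1} + 2\sum_{s=1}^{u-1}(s+1)C_s C_{u-s-1}, \qquad u\geq 1. \]

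The main obstacle is the bookkeeping in this second induction: tracking all sums, ensuring that every term in \eqref{eqn_recursionoddcoeffs} is recovered exactly once with the correct multiplicity after the reindexing. The Catalan identity itself is painless: by \eqref{eqn_genfuninverse}, $(1-2tc(t))^{-1} = \sum_{n\geq 0}(n+1)C_n t^n$, and the identity is simply the coefficient-form of the tautology $(1-2tc(t))^{-1}(1-2tc(t))=1$. A brief interpretive remark: \eqref{eqn_auxproposal} is the Wick/Isserlis-type decomposition one would expect on general grounds, since the coefficients $A$ capture the top-degree (``planar and pair-optimal'') contribution to the correlation function of odd traces; the $A_{j_1,j_r}$ serves as the ``propagator'' pairing odd vertices, while Theorem \ref{thm_catalanproduct} handles the Catalan combinatorics of the even vertices separately.
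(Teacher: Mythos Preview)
Your argument is correct, but it is organized differently from the paper's proof, and one remark about your framing is in order.

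First, your ``second induction on $j_1$'' never actually uses the inductive hypothesis on~$j_1$. Once you expand $A_{j_1,j_r}$ via Lemma~\ref{lem_oddpaircoeff} and rewrite each inner sum $\sum_{r=2}^{2l}(2j_r+1)C_{(j_1-s)+j_r}A_{j_2,\ldots,\widehat{j_r},\ldots,j_{2l}}$ using the recurrence~\eqref{eqn_recursionoddcoeffs} with first argument $j_1-s$, the comparison with $A_{j_1,\ldots,j_{2l}}$ (also expanded via~\eqref{eqn_recursionoddcoeffs}) is a direct coefficient match that reduces to your Catalan identity. So the auxiliary identity~\eqref{eqn_auxproposal} is proved outright, not by induction on~$j_1$; your base case $j_1=0$ is just a special instance of the same computation. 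This is only an organizational point and does not affect correctness.

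As for the comparison: the paper inducts on $\sum_r j_r$ rather than on~$l$, with base case all $j_r=0$ handled by Example~\ref{exm_allones} and the count $|\chord{2l}|=(2l-1)!!$. In the inductive step the paper expands $A_{j_1,\ldots,j_{2l}}$ via~\eqref{eqn_recursionoddcoeffs}, applies the inductive hypothesis to each term, and then uses the chord-partition identity~\eqref{eqn_chordidentity} together with the two-argument recurrence~\eqref{eqn_recursionoddcoeffspair} for $A_{j_1,j_k}$ to recognize the result as $\sum_c A_{j_1,\ldots,j_{2l}}(c)$. In particular the paper never invokes the closed form of Lemma~\ref{lem_oddpaircoeff} or any separate Catalan identity: the recurrence~\eqref{eqn_recursionoddcoeffspair} alone suffices. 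Your route trades that recurrence for the explicit formula~\eqref{eqn_oddpaircoeff}, at the cost of the extra generating-function identity $(u+1)C_u=2\sum_{s=0}^{u-1}(s+1)C_sC_{u-s-1}$; this is a fair trade, and your derivation of the Catalan identity from~\eqref{eqn_genfuninverse} is clean. Both approaches hinge on the same chord-partition decomposition, so the difference is one of bookkeeping rather than of underlying idea; the paper's version is slightly shorter because it stays at the level of recurrences throughout.
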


\begin{proof}
The proof proceeds by induction on $\sum_{r=1}^{2l} j_r$. The base case occurs when all the indices $j_r$ are equal to zero and hence follows from the calculation performed in Example \ref{exm_allones} and the fact that there are $(2l-1)!!$ chord diagrams in $\chord{2l}$.

Now, assuming that $j_1\geq 1$, we apply the recurrence relation \eqref{eqn_recursionoddcoeffs} and the inductive hypothesis to conclude that
\[ A_{j_1,\ldots,j_{2l}} = 2\sum_{i=1}^{j_1}\sum_{c\in\chord{2l}} C_{i-1} A_{j_1-i,j_2,\ldots,j_{2l}}(c) + \sum_{k=2}^{2l}\sum_{c\in\chord{l-1}}(2j_k+1) C_{j_1+j_k} A_{j_2,\ldots,\widehat{j_k},\ldots,j_{2l}}(c). \]
Using the identity
\begin{equation} \label{eqn_chordidentity}
\sum_{c\in\chord{2l}} A_{j_1,\ldots,j_{2l}}(c) = \sum_{k=2}^{2l}\sum_{c\in\chord{l-1}} A_{j_1,j_k}A_{j_2,\ldots,\widehat{j_k},\ldots,j_{2l}}(c)
\end{equation}
and the recurrence relation \eqref{eqn_recursionoddcoeffspair} we obtain
\begin{displaymath}
\begin{split}
\sum_{c\in\chord{2l}} A_{j_1,\ldots,j_{2l}}(c) =&  2\sum_{k=2}^{2l}\sum_{c\in\chord{l-1}}\sum_{i=1}^{j_1} C_{i-1} A_{j_1-i,j_k}A_{j_2,\ldots,\widehat{j_k},\ldots,j_{2l}}(c) \\
&+ \sum_{k=2}^{2l}\sum_{c\in\chord{l-1}}(2j_k+1) C_{j_1+j_k}A_{j_2,\ldots,\widehat{j_k},\ldots,j_{2l}}(c) \\
=&  2\sum_{i=1}^{j_1}\sum_{c\in\chord{2l}} C_{i-1} A_{j_1-i,j_2,\ldots,j_{2l}}(c) \\
&+ \sum_{k=2}^{2l}\sum_{c\in\chord{l-1}}(2j_k+1) C_{j_1+j_k}A_{j_2,\ldots,\widehat{j_k},\ldots,j_{2l}}(c) = A_{j_1,\ldots,j_{2l}}.
\end{split}
\end{displaymath}
\end{proof}

\begin{rem}
From Proposition \ref{prop_oddcoeff} and Equation \eqref{eqn_chordidentity} we obtain the useful recurrence relation
\[ A_{j_1,\ldots,j_{2l}} = \sum_{k=2}^{2l} A_{j_1,j_k}A_{j_2,\ldots,\widehat{j_k},\ldots,j_{2l}}. \]
\end{rem}

This concludes the first step of calculating of the leading coefficient of \eqref{eqn_generalpolynomial}. In the meantime, as a simple demonstration of our results, we provide the following example.

\begin{example}
\begin{multline*}
\lim_{N\to\infty}\left[\frac{\int_{\her{N}}\tr (X^{10})\tr (X^{42}) \tr (X^{15})\tr (X^{43})\tr (X^{47})\tr (X^{63})e^{-\frac{1}{2}\tr (X^2)}\mathrm{d} X}{N^{112}\int_{\her{N}}e^{-\frac{1}{2}\tr (X^2)}\mathrm{d} X}\right] = C_{5,21} A_{7,21,23,31} \\ =25081904924688737847061935982290890890757044619026344345600000.
\end{multline*}
The details are left, of course, to the enthusiastic reader.
\end{example}

\subsection{Subleading coefficients}
\label{sec_subleadcoeff}

In order to compute the large $N$ statistical correlations between certain random variables in the Gaussian Unitary Ensemble in Section \ref{sec_largeNcorcoeff}, it is necessary to supplement our knowledge of the large $N$ asymptotic behavior of the correlation functions \eqref{eqn_corfunmultitrace} that we obtained in the preceding section by computing the coefficient of the subleading term in $p^\nu_{2i_1,\ldots,2i_k}$. This will also lead us to a cleaner formula for the coefficient $A_{j_1,\ldots,j_{2l}}$.

Recall that the polynomial $p^\nu_{2i_1,\ldots,2i_k}$ is a polynomial of degree $n:=k+\sum_{s=1}^k i_s$. Denote the coefficient of $\nu^{n-2r}$ in $p^\nu_{2i_1,\ldots,2i_k}$ by $C_{i_1,\ldots,i_k}(r)$ so that in the notation of the preceding section we have
\[ C_{i_1,\ldots,i_k} = C_{i_1,\ldots,i_k}(0). \]
We will compute $C_{i_1,\ldots,i_k}(1)$. To do so we make the auxiliary definition
\begin{equation} \label{eqn_auxsublead}
\widetilde{C}_{i,j}(1) := C_{i,j}(1) - C_i(1)C_j - C_j(1)C_i
\end{equation}
which is symmetric with respect to $i$ and $j$.

\begin{prop} \label{prop_subleading}
For all $i_1,\ldots,i_k\geq 0$,
\begin{equation} \label{eqn_subleading}
C_{i_1,\ldots,i_k}(1) = \sum_{r=1}^k C_{i_r}(1)C_{i_1,\ldots,\widehat{i_r},\ldots,i_k} + \sum_{1\leq r<s \leq k} \widetilde{C}_{i_r,i_s}(1)C_{i_1,\ldots,\widehat{i_r},\ldots,\widehat{i_s},\ldots,i_k}.
\end{equation}
\end{prop}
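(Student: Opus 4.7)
The plan is to prove Proposition \ref{prop_subleading} by induction on $m := \sum_{s=1}^k i_s$, leaning on the recurrence \eqref{eqn_recursionevencoeffs} together with the already-established descriptions of the leading coefficients from Theorems \ref{thm_catalanproduct} and \ref{thm_coeffoddeven}. I would first dispose of the degenerate cases: for $k = 1$ the second sum in \eqref{eqn_subleading} is empty and the identity reduces to a tautology, while for $k = 2$ the statement is exactly the definition \eqref{eqn_auxsublead} of $\widetilde{C}_{i_1,i_2}(1)$. The base case $m=0$ is then immediate, since $p^\nu_{0,\ldots,0} = \nu^k$ has no subleading term and $C_0(1) = 0 = \widetilde{C}_{0,0}(1)$.

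For the inductive step with $k \geq 3$, after reordering I may assume $i_1 \geq 1$, apply \eqref{eqn_recursionevencoeffs} to $p^\nu_{2i_1,\ldots,2i_k}$, and extract the coefficient of $\nu^{n-2}$ where $n = k + \sum_s i_s$. By Theorem \ref{thm_degree}, three kinds of terms contribute: (a) from the first sum, the subleading coefficients $C_{r,i_1-1-r,i_2,\ldots,i_k}(1)$, which are available by the inductive hypothesis since $i_1$ has decreased; (b) from the second sum, the leading coefficients of polynomials with two odd indices, which by Theorem \ref{thm_coeffoddeven} factor as $A_{r-1,i_1-r-1}\, C_{i_2,\ldots,i_k}$; and (c) from the third sum, the leading coefficients $2 i_r C_{i_1+i_r-1}\prod_{s\neq 1,r} C_{i_s}$.

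Next I would regroup all of these contributions into two families. The first family collects terms of the form $C_{i_r}(1)\prod_{s\neq r} C_{i_s}$ for $r = 1,\ldots,k$. For $r \geq 2$ these come from the inductive expansion of (a), while for $r=1$ they arise from the "diagonal" portion of (a) combined with all of (b), matched using the one-variable recurrence
\[ C_{i_1}(1) = \sum_{r=0}^{i_1-1} C_{r,i_1-1-r}(1) + \sum_{r=1}^{i_1-1} A_{r-1,i_1-r-1}, \]
which itself is \eqref{eqn_recursionevencoeffs} applied to $p^\nu_{2i_1}$. The second family collects terms of the form $\widetilde{C}_{i_r,i_s}(1)\prod_{t\neq r,s} C_{i_t}$; pairs with $2 \leq r < s$ come directly from the inductive hypothesis applied in (a), while pairs $(1,r)$ with $r \geq 2$ must be assembled from the remaining parts of (a) plus the contributions in (c). The key ingredient here is the $k=2$ instance of \eqref{eqn_recursionevencoeffs}, which, after subtracting off $C_{i_1}(1)C_{i_r} + C_{i_r}(1)C_{i_1}$ as dictated by the definition \eqref{eqn_auxsublead}, provides exactly the recurrence for $\widetilde{C}_{i_1,i_r}(1)$ needed to match the bookkeeping.

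The hard part will be precisely this regrouping: writing out the coefficient of $\nu^{n-2}$ as a sum indexed by the summation ranges of the recursion and then repartitioning it according to which index pair $(r,s)$ each term is naturally attached to. The definition \eqref{eqn_auxsublead} is engineered to absorb exactly the "factorized" contributions $C_i(1)C_j + C_j(1)C_i$ that would otherwise be double-counted in the two families; the combinatorial content of the proof is essentially the verification that the accounting closes, pair by pair, against the $k=2$ recurrence. Once this matching is carried out, \eqref{eqn_subleading} follows.
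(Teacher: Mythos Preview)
Your proposal is correct and follows essentially the same route as the paper: induction on $\sum_s i_s$, extraction of the subleading coefficient from the recursion \eqref{eqn_recursionevencoeffs} (your contributions (a), (b), (c) are exactly the three terms of the paper's equation \eqref{eqn_subleadingrecursionproxy}), elimination of the $A$-terms via the $k=1$ identity \eqref{eqn_subleadingrecursion1pt}, and matching the $(1,r)$ pairs via the $k=2$ recurrence for $\widetilde{C}_{i_1,i_r}(1)$. The only organizational difference is that the paper packages the inductive hypothesis through the intermediate identity \eqref{eqn_subleadingsplit} rather than regrouping by hand, but the underlying bookkeeping is identical.
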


\begin{rem}
It follows from Equation \eqref{eqn_subleading} that
\begin{equation} \label{eqn_subleadingsplit}
\begin{split}
C_{i_1,\ldots,i_k,i'_1,\ldots,i'_{k'}}(1) =& C_{i_1,\ldots,i_k}(1)C_{i'_1,\ldots,i'_{k'}} + C_{i'_1,\ldots,i'_{k'}}(1)C_{i_1,\ldots,i_k} \\
&+ \sum_{r=1}^k\sum_{s=1}^{k'}\widetilde{C}_{i_r,i'_s}(1)C_{i_1,\ldots,\widehat{i_r},\ldots,i_k}C_{i'_1,\ldots,\widehat{i'_s},\ldots,i'_{k'}}
\end{split}
\end{equation}
for all $i_1,\ldots,i_k;i'_1,\ldots,i'_{k'}\geq 0$.
\end{rem}

\begin{proof}[Proof of Proposition \ref{prop_subleading}]
The proof is by induction on $\sum_{r=1}^k i_r$. The base case occurs when all the indices $i_r$ are equal to zero, in which case both sides of \eqref{eqn_subleading} clearly vanish.

We note that from Equation \eqref{eqn_recursionevencoeffs}, Theorem \ref{thm_degree} and Theorem \ref{thm_coeffoddeven} we obtain the recursion relation
\begin{equation} \label{eqn_subleadingrecursionproxy}
C_{i_1,\ldots,i_k}(1) = \sum_{r=0}^{i_1-1}C_{r,i_1-1-r,i_2,\ldots,i_k}(1) + \sum_{r=0}^{i_1-2}A_{r,i_1-2-r}C_{i_2,\ldots,i_k} + 2\sum_{r=2}^k i_rC_{i_1+i_r-1,i_2,\ldots,\widehat{i_r},\ldots,i_k}
\end{equation}
which is valid for all $i_1\geq 1$ and $i_2,\ldots,i_k\geq 0$. In particular, setting $k=1$ in \eqref{eqn_subleadingrecursionproxy} we obtain
\begin{equation} \label{eqn_subleadingrecursion1pt}
C_i(1) = \sum_{r=0}^{i-1}C_{r,i-1-r}(1) + \sum_{r=0}^{i-2}A_{r,i-2-r}
\end{equation}
which is valid for all $i\geq 1$. Combining \eqref{eqn_subleadingrecursion1pt} with \eqref{eqn_subleadingrecursionproxy} yields the recursion
\begin{equation} \label{eqn_subleadingrecursion}
\begin{split}
C_{i_1,\ldots,i_k}(1) =& \sum_{r=0}^{i_1-1}C_{r,i_1-1-r,i_2,\ldots,i_k}(1) + C_{i_2,\ldots,i_k}\left(C_{i_1}(1) - \sum_{r=0}^{i_1-1}C_{r,i_1-1-r}(1)\right) \\
&+ 2\sum_{r=2}^k i_rC_{i_1+i_r-1,i_2,\ldots,\widehat{i_r},\ldots,i_k}
\end{split}
\end{equation}
which is valid for all $i_1\geq 1$ and $i_2,\ldots,i_k\geq 0$.

Applying the inductive hypothesis and making use of Equation \eqref{eqn_subleadingsplit} we find that
\begin{displaymath}
\begin{split}
C_{r,i_1-1-r,i_2,\ldots,i_k}(1) =& C_{r,i_1-1-r}(1)C_{i_2,\ldots,i_k} + C_{i_2,\ldots,i_k}(1)C_{r,i_1-1-r} \\
&+ \sum_{s=2}^k \widetilde{C}_{r,i_s}(1)C_{i_1-1-r,i_2,\ldots,\widehat{i_s},\ldots,i_k} + \sum_{s=2}^k \widetilde{C}_{i_1-1-r,i_s}(1)C_{r,i_2,\ldots,\widehat{i_s},\ldots,i_k}.
\end{split}
\end{displaymath}
Substituting the above into \eqref{eqn_subleadingrecursion} whilst making use of \eqref{eqn_catalanrecursion} and \eqref{eqn_catalanproduct} we obtain
\begin{equation} \label{eqn_subleadinginduction}
\begin{split}
C_{i_1,\ldots,i_k}(1) =& C_{i_2,\ldots,i_k}(1)\sum_{r=0}^{i_1-1}C_r C_{i_1-1-r} + 2\sum_{s=2}^k\sum_{r=0}^{i_1-1} \widetilde{C}_{r,i_s}(1)C_{i_1-1-r,i_2,\ldots,\widehat{i_s},\ldots,i_k} \\
&+ C_{i_2,\ldots,i_k}C_{i_1}(1) + 2\sum_{s=2}^k i_sC_{i_1+i_s-1,i_2,\ldots,\widehat{i_s},\ldots,i_k} \\
=& C_{i_2,\ldots,i_k}(1)C_{i_1} + C_{i_1}(1)C_{i_2,\ldots,i_k} \\
&+ 2\sum_{s=2}^k C_{i_2,\ldots,\widehat{i_s},\ldots,i_k}\left(i_sC_{i_1+i_s-1} + \sum_{r=0}^{i_1-1} \widetilde{C}_{r,i_s}(1)C_{i_1-1-r}\right).
\end{split}
\end{equation}

Repeating the argument above once more we obtain the formula
\[ C_{i_1,i_s}(1) = C_{i_1}(1)C_{i_s} + C_{i_s}(1)C_{i_1} + 2\left(i_sC_{i_1+i_s-1} + \sum_{r=0}^{i_1-1}\widetilde{C}_{r,i_s}(1)C_{i_1-1-r}\right). \]
Substituting the above into \eqref{eqn_subleadinginduction} and using \eqref{eqn_auxsublead} we arrive at the equation
\[ C_{i_1,\ldots,i_k}(1) = C_{i_2,\ldots,i_k}(1)C_{i_1} + C_{i_1}(1)C_{i_2,\ldots,i_k} + \sum_{s=2}^k C_{i_2,\ldots,\widehat{i_s},\ldots,i_k}\widetilde{C}_{i_1,i_s}(1). \]
Applying the inductive hypothesis once more to the first term in this equation finally establishes Equation~\eqref{eqn_subleading}.
\end{proof}

\begin{rem}
A formula for the coefficient $C_i(1)$ may be found using the Harer-Zagier recursion relation \cite{HZ}. Since we will not actually require such a formula for our purposes, we will not provide the full details, but it may be computed to be
\begin{displaymath}
C_i(1) = \left\{\begin{array}{cl}\binom{2i-1}{3}C_{i-2}, & i\geq 2 \\ 0, & i=0,1 \end{array}\right\}.
\end{displaymath}
\end{rem}

It therefore remains to compute the terms $\widetilde{C}_{i,j}(1)$. This is done in a similar fashion to our computation of the coefficients $A_{i,j}$ in Lemma \ref{lem_oddpaircoeff}. One consequence of this analysis will be that we will also obtain a very clean formula for the coefficients $A_{i,j}$.

\begin{theorem} \label{thm_coeffsublead}
For all $i,j\geq 0$,
\begin{align}
\label{eqn_coeffsublead}
\widetilde{C}_{i,j}(1) &= \left\{\begin{array}{cl}\frac{ij}{(i+j)}\binom{2i}{i}\binom{2j}{j}, & i+j>0 \\ 0, & i+j=0\end{array}\right\}, \\
\label{eqn_coeffoddpairlead}
A_{i,j} &= \frac{(2i+1)!(2j+1)!}{(i+j+1)(i!j!)^2} = \frac{(2i+1)(2j+1)}{(i+j+1)}\binom{2i}{i}\binom{2j}{j}.
\end{align}
\end{theorem}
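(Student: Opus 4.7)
The plan is to mimic the generating-function technique that proved Lemma~\ref{lem_oddpaircoeff}, applying it both to $A_{i,j}$ directly and to $\widetilde{C}_{i,j}(1)$ after extracting a suitably clean recurrence from the machinery of Subsection~\ref{sec_subleadcoeff}. A key realization is that the two quantities should be linked by a simple factor: writing $a_j(t) = (2j+1)(1-4t)^{-1/2}\vec{c}_j(t)$ as in the proof of Lemma~\ref{lem_oddpaircoeff}, the generating function $f_j(t) := \sum_{i\geq 1}\widetilde{C}_{i,j}(1)t^i$ should come out to be $\tfrac{2jt}{2j+1}\,a_j(t)$, giving
\[
\widetilde{C}_{i,j}(1) = \frac{2j}{2j+1}\,A_{i-1,j}\qquad(i\geq 1).
\]
After establishing this identity, only \eqref{eqn_coeffoddpairlead} must be proved directly; then \eqref{eqn_coeffsublead} follows from the elementary binomial identity $2(2i-1)\binom{2i-2}{i-1}=i\binom{2i}{i}$.

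For \eqref{eqn_coeffoddpairlead}, I set $B_{i,j}:=A_{i,j}/(2j+1)=\sum_{r=0}^i(r+1)C_r\,C_{i+j-r}=[t^i](1-4t)^{-1/2}\vec{c}_j(t)$ and proceed by induction on $j$. Using $\vec{c}_j(t)=t^{-1}\bigl(\vec{c}_{j-1}(t)-C_{j-1}\bigr)$ gives the recurrence
\[
B_{i,j}=B_{i+1,j-1}-C_{j-1}\binom{2i+2}{i+1}.
\]
The base case $j=0$ is $B_{i,0}=[t^i](1-4t)^{-1/2}c(t)=\tfrac{1}{2}\binom{2i+2}{i+1}$, obtained from the closed form $(1-4t)^{-1/2}c(t)=\tfrac{1-(1-4t)^{1/2}}{2t(1-4t)^{1/2}}$ that was essentially computed inside the proof of Lemma~\ref{lem_oddpaircoeff}. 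The inductive step then reduces, after trivial manipulation, to $\tfrac{2(2j-1)}{j}\binom{2j-2}{j-1}=\binom{2j}{j}$.

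For \eqref{eqn_coeffsublead}, the heart of the matter is deriving a clean recurrence for $\widetilde{C}_{i,j}(1)$ from the apparatus of Subsection~\ref{sec_subleadcoeff}. Specializing \eqref{eqn_subleadingrecursionproxy} to $k=2$ and expanding each $C_{r,i-1-r,j}(1)$ via the three-index version of \eqref{eqn_subleadingsplit} produces a sum of six types of terms. The $C_r(1)$-laden contributions telescope using the one-index recursion \eqref{eqn_subleadingrecursion1pt}, the symmetry $\sum_r\widetilde{C}_{r,j}(1)C_{i-1-r}=\sum_r\widetilde{C}_{i-1-r,j}(1)C_r$ doubles the surviving $\widetilde{C}$-sum, and absorbing $C_i(1)C_j+C_j(1)C_i$ into the definition of $\widetilde{C}_{i,j}(1)$ leaves the clean two-term recurrence
\[
\widetilde{C}_{i,j}(1)=2\sum_{r=0}^{i-1}\widetilde{C}_{r,j}(1)\,C_{i-1-r}+2j\,C_{i+j-1}\qquad(i\geq 1),
\]
with $\widetilde{C}_{0,j}(1)=0$. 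Multiplying by $t^i$ and summing converts this to $f_j(t)(1-2tc(t))=2jt\,\vec{c}_j(t)$, i.e.\ $f_j(t)=\tfrac{2jt}{2j+1}a_j(t)$, exactly the link promised in the first paragraph.

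The main obstacle will be the bookkeeping in the derivation of that recurrence: applying \eqref{eqn_subleadingsplit} to $C_{r,i-1-r,j}(1)$ yields six contributions, and one must match them carefully against both the $C_j\sum_{r=0}^{i-2}A_{r,i-2-r}$ correction coming from \eqref{eqn_subleadingrecursionproxy} and the $C_i(1)C_j+C_j(1)C_i$ prescribed by the definition of $\widetilde{C}_{i,j}(1)$. Once these cancellations are executed correctly, the remainder of the argument is a straightforward translation between power-series identities.
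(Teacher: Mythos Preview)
Your argument is correct, and up through the link $\widetilde{C}_{i,j}(1)=\tfrac{2j}{2j+1}A_{i-1,j}$ it coincides with the paper's proof: the recurrence $\widetilde{C}_{i,j}(1)=2jC_{i+j-1}+2\sum_{r=0}^{i-1}\widetilde{C}_{r,j}(1)C_{i-1-r}$ is exactly the $k=2$ case of \eqref{eqn_subleadinginduction} (so you could have simply cited that line from the proof of Proposition~\ref{prop_subleading} rather than re-expanding \eqref{eqn_subleadingrecursionproxy}), and your generating-function solution $f_j(t)=2jt(1-4t)^{-1/2}\vec{c}_j(t)$ is the paper's \eqref{eqn_solvesublead}.

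The genuine divergence is in how the closed form \eqref{eqn_coeffoddpairlead} is obtained. The paper exploits the \emph{symmetry} of $\widetilde{C}_{i,j}(1)$ in $i$ and $j$: combining this with the asymmetric-looking link $\widetilde{C}_{i,j}(1)=\tfrac{2j}{2j+1}A_{i-1,j}$ immediately yields the diagonal recurrence $A_{i,j-1}=\tfrac{j(2i+1)}{i(2j+1)}A_{i-1,j}$, which together with $A_{0,j}=(2j+1)C_j$ unwinds directly to \eqref{eqn_coeffoddpairlead}. You instead bypass the symmetry entirely and attack $B_{i,j}=A_{i,j}/(2j+1)=[t^i](1-4t)^{-1/2}\vec{c}_j(t)$ head-on via the shift $\vec{c}_j(t)=t^{-1}(\vec{c}_{j-1}(t)-C_{j-1})$ and induction on $j$. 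Your route is a bit more computational but is self-contained and does not hinge on spotting the symmetry; the paper's route is shorter and, as a bonus, \emph{explains} the symmetry of the right-hand side of \eqref{eqn_oddpaircoeff} that the Remark after Lemma~\ref{lem_oddpaircoeff} flagged as mysterious.
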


\begin{proof}
From the recursion relation \eqref{eqn_subleadingrecursion} and formula \eqref{eqn_subleadingsplit} we obtain Equation \eqref{eqn_subleadinginduction}, which for two indices $i\geq 1$ and $j\geq 0$ provides us with the recursion relation
\begin{equation} \label{eqn_subleadingrecursion2pt}
\widetilde{C}_{i,j}(1) = 2\left(jC_{i+j-1}+\sum_{r=0}^{i-1}\widetilde{C}_{r,j}(1)C_{i-1-r}\right).
\end{equation}

Now consider the generating functions
\[ c(t):=\sum_{n=0}^\infty C_n t^n, \quad \tilde{c}_j(t):=\sum_{n=0}^\infty\widetilde{C}_{n,j}(1) t^n, \quad\text{and}\quad \vec{c}_j(t):=\sum_{n=0}^\infty C_{n+j}t^n. \]
The recursion relation \eqref{eqn_subleadingrecursion2pt} may be written in terms of these generating functions as
\[ \tilde{c}_j(t) = 2t\left(c(t)\tilde{c}_j(t)+j\vec{c}_j(t)\right). \]
Rearranging and using our expression \eqref{eqn_genfuninverse} for $(1-2tc(t))^{-1}$ we obtain
\begin{displaymath}
\begin{split}
\tilde{c}_j(t) &= 2j(1-2tc(t))^{-1}t\vec{c}_j(t) \\
&= 2j\sum_{n=1}^\infty\left(\sum_{r=0}^{n-1}(r+1)C_rC_{n+j-r-1}\right)t^n.
\end{split}
\end{displaymath}
Therefore,
\begin{equation} \label{eqn_solvesublead}
\widetilde{C}_{i,j}(1) = 2j\sum_{r=0}^{i-1}(r+1)C_rC_{i+j-r-1} = \frac{2j}{2j+1} A_{i-1,j}
\end{equation}
where at the last step we have used Equation \eqref{eqn_oddpaircoeff}.

Using the fact that the coefficients $A_{i,j}$ and $\widetilde{C}_{i,j}(1)$ are symmetric in $i$ and $j$, we obtain from \eqref{eqn_solvesublead} the recursion relation
\begin{equation} \label{eqn_oddcoeffpairrecursion}
\begin{array}{ll}
A_{i,j-1} = \frac{j(2i+1)}{i(2j+1)}A_{i-1,j}, & i,j\geq 1 \\
A_{0,j} = (2j+1)C_j, & j\geq 0
\end{array}
\end{equation}
where we have used Equation \eqref{eqn_oddpaircoeff} to compute $A_{0,j}$.
The recursive computation \eqref{eqn_oddcoeffpairrecursion} may be solved in a straightforward manner and it is easily verified that the solution is provided by Equation \eqref{eqn_coeffoddpairlead}.
Equation \eqref{eqn_coeffsublead} then follows from \eqref{eqn_coeffoddpairlead} and Equation~\eqref{eqn_solvesublead}.
\end{proof}

With this new formula for the coefficients $A_{i,j}$, we can use Proposition~\ref{prop_oddcoeff} to give a more convenient expression for the coefficients $A_{j_1,\ldots,j_{2l}}$.
Given a chord diagram
\[ c=\{r_1,s_1\},\{r_2,s_2\},\ldots,\{r_l,s_l\}\in\chord{2l} \]
define
\[ \mu_{j_1,\ldots,j_{2l}}(c):=\prod_{k=1}^l \left[\frac{1}{j_{r_k}+j_{s_k}+1}\right] \]
and
\begin{equation} \label{eqn_chordcoeffdef}
\mu_{j_1,\ldots,j_{2l}} := \sum_{c\in\chord{2l}}\mu_{j_1,\ldots,j_{2l}}(c)
\end{equation}
for all $j_1,\ldots,j_{2l}\geq 0$. Note again the convenient formula
\[ \mu_{j_1,\ldots,j_{2l}} = \sum_{k=2}^{2l}\mu_{j_1,j_k}\mu_{j_2,\ldots,\widehat{j_k},\ldots,j_{2l}}. \]

\begin{theorem}
For all $j_1,\ldots,j_{2l}\geq 0$,
\begin{equation} \label{eqn_oddcoeffsimplified}
A_{j_1,\ldots,j_{2l}} = \mu_{j_1,\ldots,j_{2l}}\prod_{k=1}^{2l}\left[\frac{(2j_k+1)!}{(j_k!)^2}\right].
\end{equation}
\end{theorem}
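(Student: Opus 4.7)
My plan is to simply substitute the closed-form expression \eqref{eqn_coeffoddpairlead} for the pair coefficients $A_{i,j}$ into the combinatorial formula from Proposition~\ref{prop_oddcoeff} and then factor out everything that does not depend on the chord diagram. Concretely, I will start from
\[ A_{j_1,\ldots,j_{2l}} = \sum_{c\in\chord{2l}} A_{j_1,\ldots,j_{2l}}(c), \qquad A_{j_1,\ldots,j_{2l}}(c) = \prod_{k=1}^{l} A_{j_{r_k},j_{s_k}}, \]
where $c = \{r_1,s_1\},\ldots,\{r_l,s_l\}$, and apply Theorem~\ref{thm_coeffsublead} inside each factor.

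The key observation is that once we write
\[ A_{j_{r_k},j_{s_k}} = \frac{(2j_{r_k}+1)!\,(2j_{s_k}+1)!}{(j_{r_k}+j_{s_k}+1)(j_{r_k}!\,j_{s_k}!)^2}, \]
the numerators and denominators split cleanly into two kinds of factors: those of the form $(2j_i+1)!/(j_i!)^2$, which depend only on the individual index $j_i$, and the factor $1/(j_{r_k}+j_{s_k}+1)$, which depends on the pairing. Since every chord diagram $c \in \chord{2l}$ is a partition of $\{1,\ldots,2l\}$ into pairs, each index $j_i$ appears in exactly one pair, so
\[ A_{j_1,\ldots,j_{2l}}(c) = \left(\prod_{k=1}^{l}\frac{1}{j_{r_k}+j_{s_k}+1}\right) \prod_{i=1}^{2l}\frac{(2j_i+1)!}{(j_i!)^2} = \mu_{j_1,\ldots,j_{2l}}(c)\prod_{i=1}^{2l}\frac{(2j_i+1)!}{(j_i!)^2}. \]
This factorization is the one substantive step; the product $\prod_{i=1}^{2l}\tfrac{(2j_i+1)!}{(j_i!)^2}$ is independent of $c$.

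Summing over $c \in \chord{2l}$ and pulling this index-only product out of the sum then gives
\[ A_{j_1,\ldots,j_{2l}} = \left(\sum_{c\in\chord{2l}}\mu_{j_1,\ldots,j_{2l}}(c)\right)\prod_{i=1}^{2l}\frac{(2j_i+1)!}{(j_i!)^2} = \mu_{j_1,\ldots,j_{2l}}\prod_{i=1}^{2l}\frac{(2j_i+1)!}{(j_i!)^2}, \]
by the definition \eqref{eqn_chordcoeffdef} of $\mu_{j_1,\ldots,j_{2l}}$. There is no real obstacle here: everything nontrivial was already packaged into Proposition~\ref{prop_oddcoeff} (the combinatorial reduction to chord diagrams, which required the Catalan recursion and the inductive argument using \eqref{eqn_recursionoddcoeffs}) and into Theorem~\ref{thm_coeffsublead} (the closed form for $A_{i,j}$, obtained via the generating-function computation in Lemma~\ref{lem_oddpaircoeff} together with the symmetry argument). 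Given those two ingredients, the statement is a one-line bookkeeping calculation.
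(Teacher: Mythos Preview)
Your proposal is correct and follows exactly the same approach as the paper: the paper's proof simply states that the result is an immediate consequence of Proposition~\ref{prop_oddcoeff} and Theorem~\ref{thm_coeffsublead}, and you have faithfully spelled out that immediate consequence by substituting \eqref{eqn_coeffoddpairlead} into the chord-diagram sum and factoring out the $c$-independent product.
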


\begin{proof}
This is an immediate consequence of Proposition \ref{prop_oddcoeff} and Theorem~\ref{thm_coeffsublead}.
\end{proof}

\subsection{Large $N$ statistical correlation coefficients in the Gaussian Unitary Ensemble} \label{sec_largeNcorcoeff}

Consider the algebra of conjugation-invariant \textit{polynomial} functions on matrices,
\[\sym{\mat{N}{\mathbb{C}}^*}^{\gl{n}{\mathbb{C}}}.\]
By the Fundamental Theorem of Invariant Theory for $\GL{N}{\mathbb{C}}$, we know that this algebra is generated by traces~$\tr(X^k)$ of powers of the matrix~$X$. We can see these functions, of course, as random variables in the Gaussian Unitary Ensemble. In terms of this paper, we note that
\[ \sym{\mat{N}{\mathbb{C}}^*}^{\gl{n}{\mathbb{C}}}\subset\BV{\mat{N}{\twodim}}, \]
so that our approach has full access to this class of random variables.

Let $f$ and $g$ be two conjugation-invariant, polynomial random variables .
Recall that the variance, covariance and statistical correlation between the random variables $f$ and $g$ are defined by
\begin{displaymath}
\begin{split}
\var{f} &:= \langle f^2 \rangle - \langle f \rangle^2 \\
\cov{f}{g} &:= \langle fg \rangle - \langle f \rangle\langle g \rangle \\
\corr{f}{g} &:= \frac{\cov{f}{g}}{\sqrt{\var{f}\var{g}}}
\end{split}
\end{displaymath}
where we remind the reader that the expected value $\langle f \rangle$ of the random variable $f$ on the sample space of Hermitian matrices is defined by the expectation value map~\eqref{eqn_expvalmap}.

By Proposition \ref{prop_poly-corfun} the expected value of the random variable
\begin{equation} \label{eqn_ranvargen}
f(X) := \tr(X^{k_1})\tr(X^{k_2})\cdots\tr(X^{k_n}), \quad X\in\her{N}
\end{equation}
is
\[\langle f \rangle = p^\nu_{k_1,\ldots,k_n}(N).\]
In this section we will compute the large $N$ statistical correlations between the random variables~\eqref{eqn_ranvargen}; i.e., between multi-trace functions. We will find that the resulting expressions are dominated by the presence of odd powers in \eqref{eqn_ranvargen}.

\begin{theorem} \label{thm_largeNcorrelation}
Fix nonnegative integers
\[ i_1,\ldots,i_k;j_1,\ldots,j_l;i'_1,\ldots,i'_{k'};j'_1,\ldots,j'_{l'}\geq 0 \quad\text{with}\quad k,l,k',l'\geq 0. \]
These determine sequences of random variables $(f_N)$ and $(g_N)$, where for each $N\in\mathbb{N}$, we have functions on the rank $N$ Gaussian Unitary Ensemble:
for $X \in \her{N}$,
\begin{displaymath}
\begin{split}
f_N(X) &:= \tr(X^{2i_1})\cdots\tr(X^{2i_k})\tr(X^{2j_1+1})\cdots\tr(X^{2j_l+1}), \\
g_N(X) &:= \tr(X^{2i'_1})\cdots\tr(X^{2i'_{k'}})\tr(X^{2j'_1+1})\cdots\tr(X^{2j'_{l'}+1}).
\end{split}
\end{displaymath}
Then the following are true:
\begin{enumerate}
\item
If $l$ and $l'$ have opposite parities, then
\[ \corr{f_N}{g_N} = 0 \]
for all $N\in\mathbb{N}$.
\item
If $l$ and $l'$ are both odd, then
\begin{equation} \label{eqn_largeNcorrelation_2}
\lim_{N\to\infty}\corr{f_N}{g_N} = \frac{\mu_{j_1,\ldots,j_l,j'_1,\ldots,j'_{l'}}}{\sqrt{\mu_{j_1,\ldots,j_l,j_1,\ldots,j_l}\mu_{j'_1,\ldots,j'_{l'},j'_1,\ldots,j'_{l'}}}}.
\end{equation}
\item
If $l$ and $l'$ are both even and $l,l'>0$, then
\begin{equation} \label{eqn_largeNcorrelation_3}
\lim_{N\to\infty}\corr{f_N}{g_N} = \frac{\mu_{j_1,\ldots,j_l,j'_1,\ldots,j'_{l'}} - \mu_{j_1,\ldots,j_l}\mu_{j'_1,\ldots,j'_{l'}}}{\sqrt{\mu_{j_1,\ldots,j_l,j_1,\ldots,j_l} - \mu_{j_1,\ldots,j_l}^2}\sqrt{\mu_{j'_1,\ldots,j'_{l'},j'_1,\ldots,j'_{l'}} - \mu_{j'_1,\ldots,j'_{l'}}^2}}.
\end{equation}
\item
If $l$ is even, $l>0$, and $l'=0$, then (assuming not all $i'_r=0$)
\begin{equation} \label{eqn_largeNcorrelation_4}
\lim_{N\to\infty}\corr{f_N}{g_N} = 0.
\end{equation}
\item
If $l=l'=0$, then (assuming that $i_1,\ldots,i_k,i'_1,\ldots,i'_{k'}>0$)
\begin{equation} \label{eqn_largeNcorrelation_5}
\lim_{N\to\infty}\corr{f_N}{g_N} = \frac{\sum_{r=1}^k\sum_{s=1}^{k'}\frac{i_ri'_s(i_r+1)(i'_s+1)}{(i_r+i'_s)}}{\sqrt{\sum_{r,s=1}^k\frac{i_ri_s(i_r+1)(i_s+1)}{(i_r+i_s)}}\sqrt{\sum_{r,s=1}^{k'}\frac{i'_ri'_s(i'_r+1)(i'_s+1)}{(i'_r+i'_s)}}}.
\end{equation}
\end{enumerate}
\end{theorem}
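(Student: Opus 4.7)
The plan is to write each expectation value appearing in $\corr{f_N}{g_N}$ as a polynomial in $N$ via Proposition~\ref{prop_poly-corfun}, and then extract its asymptotic behaviour from the coefficient formulas developed in Sections~\ref{sec_leadcoeff} and~\ref{sec_subleadcoeff}. Throughout I let $d_f$ and $d_g$ denote the degrees of $\langle f_N\rangle$ and $\langle g_N\rangle$; by Theorem~\ref{thm_degree}, the degrees of $\langle f_N g_N\rangle$, $\langle f_N^2\rangle$, and $\langle g_N^2\rangle$ are $d_f+d_g$, $2d_f$, and $2d_g$ respectively. A basic observation used repeatedly is that within any single $p^\nu_{k_1,\ldots,k_n}$, the nonzero coefficients all share a common parity of $N$-degree, so consecutive nonzero terms differ by~$2$.

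Case~(1) follows from the vanishing principle: $p^\nu_{k_1,\ldots,k_n}$ is zero whenever the number of odd $k_i$'s is odd, so if $l$ and $l'$ have opposite parities then $\langle f_N g_N\rangle=0$ and one of $\langle f_N\rangle, \langle g_N\rangle$ vanishes, yielding $\cov{f_N}{g_N}=0$ identically. Case~(2) is pure leading-order analysis: both $\langle f_N\rangle$ and $\langle g_N\rangle$ vanish by parity, and Theorem~\ref{thm_coeffoddeven} together with Theorem~\ref{thm_catalanproduct} and~\eqref{eqn_oddcoeffsimplified} gives leading asymptotics as products of Catalan numbers, $\mu$-coefficients, and the factorial factors $\tfrac{(2j_k+1)!}{(j_k!)^2}$; the Catalan factors cancel in the ratio, the factorial factors pair off under the square root, and the $\mu$-formula~\eqref{eqn_largeNcorrelation_2} emerges.

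For case~(3), all expectation values survive, but the leading-order coefficients in the covariance and the two variances arise as the differences $A_{j\cup j'}-A_jA_{j'}$, $A_{j\cup j}-A_j^2$, and $A_{j'\cup j'}-A_{j'}^2$. Using~\eqref{eqn_oddcoeffsimplified}, each of these factors cleanly into $(\mu_{j\cup j'}-\mu_j\mu_{j'})\prod(\cdots)$ and its analogues, so the ratio collapses to~\eqref{eqn_largeNcorrelation_3}. Strict positivity of $\mu_{j\cup j}-\mu_j^2$, needed to ensure that the variances genuinely dominate at the claimed order, follows from the combinatorial observation that chord diagrams preserving the partition $\{1,\ldots,l\}\sqcup\{l+1,\ldots,2l\}$ contribute exactly $\mu_j^2$ while the crossing diagrams contribute a strictly positive remainder. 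Case~(5) is structurally analogous but at subleading order: Theorem~\ref{thm_catalanproduct} forces $C_{i\cup i'}=C_iC_{i'}$, causing the leading coefficients in $\cov{f_N}{g_N}$, $\var{f_N}$, and $\var{g_N}$ all to cancel identically. I would then apply Proposition~\ref{prop_subleading} and the splitting identity~\eqref{eqn_subleadingsplit}, with the auxiliary definition~\eqref{eqn_auxsublead}, to obtain
\[ \cov{f_N}{g_N} \sim \biggl(\sum_{r,s}\widetilde{C}_{i_r,i'_s}(1)\,C_{i\setminus i_r}\,C_{i'\setminus i'_s}\biggr)N^{d_f+d_g-2}, \]
with analogous expressions for the variances. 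Substituting the closed form of $\widetilde{C}_{i,j}(1)$ from Theorem~\ref{thm_coeffsublead} and using $C_{i\setminus i_r}=C_i/C_{i_r}$ (from Theorem~\ref{thm_catalanproduct}), the common $C_i$, $C_{i'}$, and Catalan ratios cancel and~\eqref{eqn_largeNcorrelation_5} emerges.

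Case~(4) interpolates cases~(3) and~(5): $\var{f_N}$ retains its non-cancelling leading term of order $N^{2d_f}$ (as in case~(3), using $\mu_{j\cup j}-\mu_j^2>0$ when $l>0$ is even), while $\var{g_N}$ is of order $N^{2d_g-2}$ (as in case~(5), the hypothesis that not all $i'_r$ vanish guaranteeing positivity via the explicit formula for $\widetilde{C}_{i,j}(1)$). Meanwhile, Theorem~\ref{thm_coeffoddeven} gives the common leading coefficient $C_i C_{i'}A_j$ to both $\langle f_N g_N\rangle$ and $\langle f_N\rangle\langle g_N\rangle$; these cancel, and the parity remark then forces $\cov{f_N}{g_N}=O(N^{d_f+d_g-2})$. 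Hence $\corr{f_N}{g_N}=O(N^{-1})\to 0$. The main obstacle across the proof is the bookkeeping in case~(5): one must verify that the splitting~\eqref{eqn_subleadingsplit}, when applied to $C_{i\cup i'}(1)$ alongside the auto-correlation analogues, causes all the pure single-list contributions to cancel precisely against their counterparts in $\langle f_N\rangle\langle g_N\rangle$ and $\langle f_N\rangle^2$, leaving exactly the cross-index quantities $\widetilde{C}_{i_r,i'_s}(1)$ that~\eqref{eqn_auxsublead} was designed to isolate.
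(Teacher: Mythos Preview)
Your proposal is correct and follows essentially the same approach as the paper's own proof: the same case-by-case analysis, the same use of Theorem~\ref{thm_degree} and Theorem~\ref{thm_coeffoddeven} to extract leading terms, the same appeal to Proposition~\ref{prop_subleading} and the splitting identity~\eqref{eqn_subleadingsplit} for the subleading cancellations in cases~(4) and~(5), and the same simplification via~\eqref{eqn_oddcoeffsimplified} and~\eqref{eqn_coeffsublead}. Your explicit remark on the parity of powers of $N$ and your positivity argument for $\mu_{j\cup j}-\mu_j^2$ via crossing chord diagrams make explicit points the paper leaves more implicit, but the overall structure and tools coincide.
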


\begin{rem}
Note that for Equation \eqref{eqn_largeNcorrelation_5} the assumption that the indices $i_1,\ldots,i_k$ and $i'_1,\ldots,i'_{k'}$ are all positive is not a real constraint, since the inclusion of indices that are equal to zero will not affect the correlation coefficient. This is because the statistical correlation between random variables does not change after multiplying the random variables by a scalar.
\end{rem}

\begin{rem}
For convenience we point out that in the simplest instances the above formulas provide the following nice expressions:
\begin{displaymath}
\begin{split}
\lim_{N\to\infty}\corr{\tr(X^{2i+1})}{\tr(X^{2j+1})} &= \frac{\sqrt{(2i+1)(2j+1)}}{(i+j+1)}, \\
\lim_{N\to\infty}\corr{\tr(X^{2i})}{\tr(X^{2j})} &= \frac{2\sqrt{ij}}{(i+j)}.
\end{split}
\end{displaymath}
\end{rem}

\begin{proof}[Proof of Theorem \ref{thm_largeNcorrelation}]
\
\begin{enumerate}
\item
In this case $\langle f_N g_N \rangle = \langle f_N \rangle\langle g_N \rangle = 0$ for all positive integers $N$ and hence the covariance $\cov{f_N}{g_N}$ vanishes for all $N$.
\item
In this case we have $\langle f_N \rangle = \langle g_N \rangle = 0$ for all $N$. Therefore, using Theorem \ref{thm_coeffoddeven} and Theorem \ref{thm_degree} we find that
\begin{displaymath}
\begin{split}
\cov{f_N}{g_N} &= \langle f_N g_N \rangle = p^\nu_{2i_1,\ldots,2i_k,2i'_1,\ldots,2i'_{k'},2j_1+1,\ldots,2j_l+1,2j'_1+1,\ldots,2j'_{l'}+1}(N) \\
&= C_{i_1,\ldots,i_k,i'_1,\ldots,i'_{k'}}A_{j_1,\ldots,j_l,j'_1,\ldots,j'_{l'}}N^n + O(N^{n-2})
\end{split}
\end{displaymath}
where
\begin{equation} \label{eqn_corrdeg}
n := k + k' + \frac{(l+l')}{2} + \sum_{r=1}^k i_r + \sum_{r=1}^{k'} i'_r + \sum_{r=1}^l j_r + \sum_{r=1}^{l'} j'_r.
\end{equation}

We find similar expressions for the variances:
\begin{displaymath}
\begin{split}
\var{f_N} &= \cov{f_N}{f_N} = C_{i_1,\ldots,i_k,i_1,\ldots,i_k}A_{j_1,\ldots,j_l,j_1,\ldots,j_l}N^{2n_f} + O(N^{2(n_f-1)}), \\
\var{g_N} &= C_{i'_1,\ldots,i'_{k'},i'_1,\ldots,i'_{k'}}A_{j'_1,\ldots,j'_{l'},j'_1,\ldots,j'_{l'}}N^{2n_g} + O(N^{2(n_g-1)});
\end{split}
\end{displaymath}
where
\begin{equation} \label{eqn_corrdegs}
n_f := k + \frac{l}{2} + \sum_{r=1}^k i_r + \sum_{r=1}^l j_r \quad\text{and}\quad n_g := k' + \frac{l'}{2} + \sum_{r=1}^{k'} i'_r + \sum_{r=1}^{l'} j'_r
\end{equation}
so that $n=n_f+n_g$.

Putting these together we find that
\[ \corr{f_N}{g_N} = \frac{C_{i_1,\ldots,i_k}C_{i'_1,\ldots,i'_{k'}}A_{j_1,\ldots,j_l,j'_1,\ldots,j'_{l'}} +O(\frac{1}{N^2})}{\sqrt{C_{i_1,\ldots,i_k}^2 A_{j_1,\ldots,j_l,j_1,\ldots,j_l}+O(\frac{1}{N^2})}\sqrt{C_{i'_1,\ldots,i'_{k'}}^2 A_{j'_1,\ldots,j'_{l'},j'_1,\ldots,j'_{l'}}+O(\frac{1}{N^2})}} \]
and hence that
\[ \lim_{N\to\infty}\corr{f_N}{g_N} = \frac{A_{j_1,\ldots,j_l,j'_1,\ldots,j'_{l'}} }{\sqrt{A_{j_1,\ldots,j_l,j_1,\ldots,j_l}}\sqrt{A_{j'_1,\ldots,j'_{l'},j'_1,\ldots,j'_{l'}}}}. \]
Using the formula \eqref{eqn_oddcoeffsimplified} for the coefficients in the numerator and the denominator of the above expression we find that all the factorials cancel, leaving us with Equation~\eqref{eqn_largeNcorrelation_2}.

\item
Proceeding as above, we find that
\begin{displaymath}
\begin{split}
\cov{f_N}{g_N} =& p^\nu_{2i_1,\ldots,2i_k,2i'_1,\ldots,2i'_{k'},2j_1+1,\ldots,2j_l+1,2j'_1+1,\ldots,2j'_{l'}+1}(N) \\
&-p^\nu_{2i_1,\ldots,2i_k,2j_1+1,\ldots,2j_l+1}(N)p^\nu_{2i'_1,\ldots,2i'_{k'},2j'_1+1,\ldots,2j'_{l'}+1}(N) \\
=& C_{i_1,\ldots,i_k,i'_1,\ldots,i'_{k'}}A_{j_1,\ldots,j_l,j'_1,\ldots,j'_{l'}}N^n + O(N^{n-2}) \\
&-\big{(}C_{i_1,\ldots,i_k}A_{j_1,\ldots,j_l}N^{n_f} + O(N^{n_f-2})\big{)} \big{(}C_{i'_1,\ldots,i'_{k'}}A_{j'_1,\ldots,j'_{l'}}N^{n_g} + O(N^{n_g-2})\big{)} \\
=& C_{i_1,\ldots,i_k}C_{i'_1,\ldots,i'_{k'}}\left(A_{j_1,\ldots,j_l,j'_1,\ldots,j'_{l'}}-A_{j_1,\ldots,j_l}A_{j'_1,\ldots,j'_{l'}}\right)N^n + O(N^{n-2})
\end{split}
\end{displaymath}
where $n$, $n_f$ and $n_g$ are defined exactly as above in \eqref{eqn_corrdeg} and \eqref{eqn_corrdegs}.

As before, in the limit the $C_{i_1,\ldots,i_k}$ and $C_{i'_1,\ldots,i'_{k'}}$ terms cancel along with the factorials, leaving us with the expression \eqref{eqn_largeNcorrelation_3} for the large $N$ correlation. Note that both the numerator and the two factors in the denominator of \eqref{eqn_largeNcorrelation_3} are positive when $l,l'>0$. This follows from \eqref{eqn_chordcoeffdef}.

\item
Defining $n$, $n_f$ and $n_g$ according to \eqref{eqn_corrdeg} and \eqref{eqn_corrdegs} and applying Theorem \ref{thm_coeffoddeven} and Equation \eqref{eqn_subleadingsplit} we compute
\begin{displaymath}
\begin{split}
\cov{f_N}{g_N} =& C_{i_1,\ldots,i_k,i'_1,\ldots,i'_{k'}}A_{j_1,\ldots,j_l}N^n + O(N^{n-2}) \\
&-\big{(}C_{i_1,\ldots,i_k}A_{j_1,\ldots,j_l}N^{n_f} + O(N^{n_f-2})\big{)} \big{(}C_{i'_1,\ldots,i'_{k'}}N^{n_g} + O(N^{n_g-2})\big{)} \\
=& O(N^{n-2}) \\
\var{f_N} =& C_{i_1,\ldots,i_k}^2 \left(A_{j_1,\ldots,j_l,j_1,\ldots,j_l} - A_{j_1,\ldots,j_l}^2\right)N^{2n_f} + O(N^{2(n_f-1)}) \\
\var{g_N} =& C_{i'_1,\ldots,i'_{k'}}^2 N^{2n_g} + C_{i'_1,\ldots,i'_{k'},i'_1,\ldots,i'_{k'}}(1) N^{2(n_g-1)} + O(N^{2(n_g-2)}) \\
&- \left(C_{i'_1,\ldots,i'_{k'}}N^{n_g} + C_{i'_1,\ldots,i'_{k'}}(1)N^{n_g-2} + O(N^{n_g-4})\right)^2 \\
=& \left(C_{i'_1,\ldots,i'_{k'},i'_1,\ldots,i'_{k'}}(1) - 2C_{i'_1,\ldots,i'_{k'}}(1)C_{i'_1,\ldots,i'_{k'}}\right)N^{2(n_g-1)} + O(N^{2(n_g-2)}) \\
=& \left(\sum_{r,s=1}^{k'}\widetilde{C}_{i'_r,i'_s}(1)C_{i'_1,\ldots,\widehat{i'_r},\ldots,i'_{k'}}C_{i'_1,\ldots,\widehat{i'_s},\ldots,i'_{k'}}\right)N^{2(n_g-1)} + O(N^{2(n_g-2)}).
\end{split}
\end{displaymath}

Putting these together we get
\begin{multline*}
\corr{f_N}{g_N} = \\
\frac{O(\frac{1}{N})}{\sqrt{C_{i_1,\ldots,i_k}^2 \left(A_{j_1,\ldots,j_l,j_1,\ldots,j_l} - A_{j_1,\ldots,j_l}^2\right)\sum_{r,s=1}^{k'}\widetilde{C}_{i'_r,i'_s}(1)C_{i'_1,\ldots,\widehat{i'_r},\ldots,i'_{k'}}C_{i'_1,\ldots,\widehat{i'_s},\ldots,i'_{k'}} + O(\frac{1}{N^2})}}
\end{multline*}
from which \eqref{eqn_largeNcorrelation_4} follows.

\item
Proceeding in the same manner as above using Equation \eqref{eqn_subleadingsplit} we compute
\[ \cov{f_N}{g_N} = \left(\sum_{r=1}^k\sum_{s=1}^{k'}\widetilde{C}_{i_r,i'_s}(1)C_{i_1,\ldots,\widehat{i_r},\ldots,i_k}C_{i'_1,\ldots,\widehat{i'_s},\ldots,i'_{k'}}\right)N^{n-2}+O(N^{n-4}), \]
where $n$ is of course defined by \eqref{eqn_corrdeg}. From this we get
\begin{multline*}
\corr{f_N}{g_N} = \\
\frac{\sum_{r=1}^k\sum_{s=1}^{k'}\widetilde{C}_{i_r,i'_s}(1)C_{i_1,\ldots,\widehat{i_r},\ldots,i_k}C_{i'_1,\ldots,\widehat{i'_s},\ldots,i'_{k'}}+O(\frac{1}{N^2})}{\sqrt{\left(\sum_{r,s=1}^k\widetilde{C}_{i_r,i_s}(1)C_{i_1,\ldots,\widehat{i_r},\ldots,i_k}C_{i_1,\ldots,\widehat{i_s},\ldots,i_k}\right)\left(\sum_{r,s=1}^{k'}\widetilde{C}_{i'_r,i'_s}(1)C_{i'_1,\ldots,\widehat{i'_r},\ldots,i'_{k'}}C_{i'_1,\ldots,\widehat{i'_s},\ldots,i'_{k'}}\right)+O(\frac{1}{N^2})}}.
\end{multline*}
Taking the limit and using Equation \eqref{eqn_coeffsublead} we see that the coefficients $C_{i_1,\ldots,i_k}$ and $C_{i'_1,\ldots,i'_{k'}}$ in the numerator and the denominator cancel leaving us with the expression~\eqref{eqn_largeNcorrelation_5}.
\end{enumerate}
\end{proof}

\subsection{A generalization of Wigner's semicircle law and asymptotically free random variables}
\label{sec: voic}

As a final application of our results on the large $N$ asymptotic behavior of the correlation functions~\eqref{eqn_corfunmultitrace},
we will prove a generalization of Wigner's law to multi-trace functions.

\begin{theorem} \label{thm_wignergeneralize}
For any $m\geq 1$ and polynomials $q_1,\ldots,q_m\in\mathbb{C}[x]$,
\begin{equation} \label{eqn_wignergeneralize}
\lim_{N\to\infty}\left(\frac{\int_{\her{N}}\prod_{i=1}^m\left[\frac{1}{N}\tr\left(q_i\left(\frac{X}{\sqrt{N}}\right)\right)\right]e^{-\frac{1}{2}\tr(X^2)}\mathrm{d}X}{\int_{\her{N}}e^{-\frac{1}{2}\tr(X^2)}\mathrm{d}X}\right) = \prod_{i=1}^m\left[\frac{1}{2\pi}\int_{-2}^2 q_i(x)\sqrt{4-x^2}\mathrm{d}x\right].
\end{equation}
\end{theorem}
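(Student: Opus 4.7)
The plan is to reduce the statement, by multilinearity of both sides in each $q_i$, to the case where each $q_i$ is a monomial $q_i(x) = x^{k_i}$. Then a straightforward rescaling reveals that the left-hand side equals
\[ \lim_{N\to\infty}\frac{1}{N^{m+\frac{1}{2}\sum k_i}}\,I^N_{k_1,\ldots,k_m} = \lim_{N\to\infty}\frac{1}{N^{m+\frac{1}{2}\sum k_i}}\,p^\nu_{k_1,\ldots,k_m}(N), \]
the last equality by Proposition~\ref{prop_poly-corfun}. So the entire problem reduces to a large-$N$ asymptotic analysis of the polynomial $p^\nu_{k_1,\ldots,k_m}$.

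The next step is to split into cases based on the parities of the indices $k_i$. If $\sum k_i$ is odd then $p^\nu_{k_1,\ldots,k_m}$ vanishes identically, and the right-hand side of~\eqref{eqn_wignergeneralize} also vanishes since at least one $q_i$ is an odd monomial and the odd moments of the semicircle distribution are zero. If $\sum k_i$ is even but some $k_i$ is odd, then Theorem~\ref{thm_degree} asserts that $\deg p^\nu_{k_1,\ldots,k_m} = \frac{1}{2}\sum k_i + q$ where $q$ is the number of \emph{even} indices; since $q < m$, the limit above vanishes, and the right-hand side again vanishes because the odd factors are zero. This leaves the principal case, where every $k_i = 2i_i$ is even: Theorem~\ref{thm_degree} now yields $\deg p^\nu_{2i_1,\ldots,2i_m} = m + \sum i_i$ exactly, and Theorem~\ref{thm_catalanproduct} identifies the leading coefficient as the product of Catalan numbers $\prod_{i=1}^m C_{i_i}$. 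Therefore the limit equals $\prod_{i=1}^m C_{i_i}$.

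Finally, it remains only to match this against the right-hand side of~\eqref{eqn_wignergeneralize}. For this I would invoke the standard identity
\[ \frac{1}{2\pi}\int_{-2}^{2} x^{2n}\sqrt{4-x^2}\,\mathrm{d}x = C_n, \qquad \frac{1}{2\pi}\int_{-2}^{2} x^{2n+1}\sqrt{4-x^2}\,\mathrm{d}x = 0, \]
which is immediate from the trigonometric substitution $x = 2\sin\theta$ together with the Beta integral. Combining this with the case analysis above gives equality of the two sides of~\eqref{eqn_wignergeneralize} on monomials, and multilinearity then extends it to arbitrary polynomials, completing the proof.

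Honestly, given the machinery already established — chiefly the degree computation (Theorem~\ref{thm_degree}) and the identification of the leading coefficient with products of Catalan numbers (Theorem~\ref{thm_catalanproduct}) — there is no serious obstacle: the proof is essentially a bookkeeping exercise. The one place that requires a small amount of care is making sure that in every ``off-diagonal'' parity case both sides vanish for consistent reasons, but this is exactly what the degree formula $\tfrac{1}{2}\sum k_i + q$ is designed to detect, since the deficit $q - m$ in the exponent of $N$ precisely counts the number of odd moments of the semicircle one would have been multiplying by zero.
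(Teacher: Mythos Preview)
Your proof is correct and follows essentially the same approach as the paper's own proof: reduce by multilinearity to monomials, split into cases by the parities of the $k_i$, use Theorem~\ref{thm_degree} to see that any odd index forces the left-hand side to vanish in the limit (since then $q<m$), and in the all-even case apply Theorem~\ref{thm_catalanproduct} together with the semicircle moment identity for the Catalan numbers. Your organization of the parity cases (splitting on whether $\sum k_i$ is odd versus even-with-an-odd-index) is logically equivalent to the paper's (splitting on whether the number of odd indices is odd versus even and positive), so there is no substantive difference.
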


The case $m=1$ is Wigner's original semicircle law. In this sense Equation \eqref{eqn_wignergeneralize} may be restated as an equation amongst expectation values:
\[ \lim_{N\to\infty}\left\langle\prod_{i=1}^m\left[\frac{1}{N}\tr\left(q_i\left(\frac{X}{\sqrt{N}}\right)\right)\right]\right\rangle = \prod_{i=1}^m\left[\lim_{N\to\infty}\left\langle\frac{1}{N}\tr\left(q_i\left(\frac{X}{\sqrt{N}}\right)\right)\right\rangle\right]. \]
In other words, we can swap the limit with the product;
we may say that the expectation value of the product becomes the product of the expectation values.
Thus, in the large $N$ limit, we observe the vanishing of certain variances and covariances.
In particular, in the terminology of Voiculescu's free probability theory \cite{voiculimitlaw}, this result implies that the family of random variables
\[ \frac{1}{N}\tr\left(q\left(\frac{X}{\sqrt{N}}\right)\right), \quad X\in\her{N} \]
in the Gaussian Unitary Ensemble defined by polynomials $q\in\mathbb{C}[x]$ forms a family of \textit{asymptotically free} random variables.

\begin{proof}[Proof of Theorem \ref{thm_wignergeneralize}]
We begin by noting that both sides of \eqref{eqn_wignergeneralize} are multilinear functions of the polynomials $q_1,\ldots,q_m$. It is therefore sufficient to assume that $q_1,\ldots,q_m$ are monomials of the form~$q_r(x)=x^{k_r}$.

Obviously, if there is an odd number of indices $k_r$ having odd parity, then both sides of \eqref{eqn_wignergeneralize} will vanish before even taking the limit.
Therefore we may begin by assuming that the number of indices $k_r$ having odd parity is $2l>0$.
In this case the right-hand side of \eqref{eqn_wignergeneralize} clearly vanishes, whilst it follows from Theorem \ref{thm_degree} that the left-hand side is $O(\frac{1}{N^{2l}})$ and hence vanishes in the limit.

It remains to consider the case when $l=0$ so that all the indices $k_r=2i_r$ are even.
In this case it follows from Theorem~\ref{thm_degree} that the left-hand side is
\[ \lim_{N\to\infty}\left(C_{i_1,\ldots,i_m} + O\left(\frac{1}{N^2}\right)\right) = C_{i_1,\ldots,i_m} \]
and the result now follows from Theorem \ref{thm_catalanproduct} and the following integral expression for the Catalan numbers:
\[ C_i = \frac{1}{2\pi}\int_{-2}^2 x^{2i}\sqrt{4-x^2}\mathrm{d}x, \quad i\geq 0; \]
which is well-known.
\end{proof}

\begin{rem}
Using the results of Section \ref{sec_leadcoeff} and \ref{sec_subleadcoeff} we may compute the following large $N$ covariance, refining ever so slightly the above result.
Let $f,g\in\mathbb{C}[x]$ be two polynomials.
Then
\begin{multline} \label{eqn_largeNcov}
\lim_{N\to\infty}\left[\cov{\tr\left(f\left(\frac{X}{\sqrt{N}}\right)\right)}{\tr\left(g\left(\frac{X}{\sqrt{N}}\right)\right)}\right] = \\
\sum_{i,j=1}^\infty\frac{f^{(2i)}(0)g^{(2j)}(0)}{(i+j)i!(i-1)!j!(j-1)!} + \sum_{i,j=0}^{\infty}\frac{f^{(2i+1)}(0)g^{(2j+1)}(0)}{(i+j+1)(i!)^2(j!)^2}.
\end{multline}
We leave the details to the reader. If there is a more intrinsic expression for the quantity on the right-hand side of \eqref{eqn_largeNcov}, it is not readily apparent to the authors of this article.
\end{rem}

\end{document}